\theoremstyle{plain}
\newtheorem{theorem}{Theorem}[section]
\newtheorem{lemma}[theorem]{Lemma}
\theoremstyle{definition}
\newtheorem{definition}[theorem]{Definition}
\newtheorem{remark}[theorem]{Remark}
\newtheorem{example}[theorem]{Example}
\newcommand*{\cE}{\mathcal{E}}
\newcommand*{\cF}{\mathcal{F}}
\newcommand*{\cI}{\mathcal{I}}
\newcommand*{\cR}{\mathcal{R}}
\newcommand*{\NN}{\mathbb{N}}
\newcommand*{\RR}{\mathbb{R}}
\newcommand*{\CC}{\mathbb{C}}
\newcommand*{\HH}{\mathbb{H}}
\newcommand*{\id}{I}
\newcommand*{\supp}{\mathrm{supp}}
\newcommand*{\Pos}{\mathscr{P}}
\newcommand*{\D}{\mathscr{D}}
\newcommand*{\ox}{\otimes}
\newcommand{\pH}[1]{\overline{H}_{#1}}
\newcommand{\pHd}[1]{\pH{#1}^{\downarrow}}
\newcommand{\swD}[1]{D_{#1}}  	
\newcommand{\swH}[1]{H_{#1}} 
\newcommand{\swQ}[1]{Q_{#1}}
\newcommand{\swHu}[1]{\swH{#1}^\uparrow}
\newcommand{\swHd}[1]{\swH{#1}^\downarrow}
\newcommand{\tr}[1]{\mathrm{Tr}\left[#1\right]} 
\newcommand{\ptr}[2]{\mathrm{Tr}_{#1}\left[#2\right]}
\newcommand{\ketbra}[2]{\ensuremath{\left|#1\right\rangle\!\left\langle#2\right|}}
\newcommand*{\defed}{:=} 
\newcommand{\maxintrinsicpvm}[2]{\cR_{#1}^{\mathrm{PVM}}\left(#2\right)}
\newcommand{\maxintrinsicpovm}[2]{\cR_{#1}^{\mathrm{POVM}}\left(#2\right)}
\newcommand{\intrinsicpovm}[3]{\cI_{#1}^{\mathrm{POVM}}\left(#2,#3\right)}
\renewcommand{\epsilon}{\varepsilon}
\let\inner\relax
\NewDocumentCommand\inner{mg}{%
	\ensuremath{\left\langle #1| \IfNoValueTF{#2}{#1}{#2}\right\rangle}%
}
\let\outer\relax
\NewDocumentCommand\outer{mg}{%
	\ensuremath{\ket{#1}\! \IfNoValueTF{#2}{\bra{#1}}{\bra{#2}}}%
}
\begin{document}

	\title{{How much secure randomness is in a quantum state?}}
	\author[1]{Kriss Gutierrez Anco}
	\author[1]{Tristan Nemoz}
	\author[1]{Peter Brown}
	\email{peter.brown@telecom-paris.fr}
	\orcid{0000-0001-9593-0136}
	\affil[1]{\small{Télécom Paris, Inria, LTCI, Institut Polytechnique de Paris, 91120 Palaiseau, France}}
	\maketitle

\begin{abstract}
	How much cryptographically-secure randomness can be extracted from a quantum state? This fundamental question probes the absolute limits of quantum random number generation (QRNG) and yet, despite the technological maturity of QRNGs, it remains unsolved. In this work we consider a general adversarial model that allows for an adversary who has quantum side-information about both the source and the measurement device. Using links between randomness extraction rates and sandwiched Rényi entropies, we provide compact, easy to compute, achievable rates of secure randomness extraction from quantum states. In turn, this provides a simple to evaluate benchmarking tool for the randomness generation rates of QRNG protocols. 
\end{abstract}

\section{Introduction}
The intrinsic randomness of quantum systems lies at the heart of quantum cryptography, enabling fundamental cryptographic primitives such as quantum random number generation (QRNG)~\cite{qrng-review} and quantum key distribution (QKD)~\cite{BB84}. Significant progress has been made on from both the theoretical and experimental perspectives and both QRNG and QKD now exist as commercially viable technologies. Despite this progress, a fundamental question remains unsolved: how much secure randomness can you extract from a quantum state?

Security proofs formalize what we mean by secure randomness with precise mathematical definitions~\cite{PR22}. In turn, these definitions enable us to formally prove that, under the various assumptions of the protocol, the randomness generated by our QRNG is secure. Modern security proof frameworks such as the entropy accumulation theorem~\cite{DFR,MFSR22}, rely on quantifying the amount of randomness produced during the protocol with respect to an adversary. This typically amounts to an optimization of a conditional entropy over states that are consistent with the observations and assumptions of the protocol~\cite{DW05}. For example, in device-independent QRNG we minimize some conditional entropy over all states that are consistent with a given Bell-violation~\cite{ARV,PABGMS}. Whereas for device-dependent cryptography, we optimize over states that are consistent with the observations made during the protocol and the trusted components, e.g. the measurement operators are fully characterized~\cite{CML16}. There also exist intermediary scenarios, so-called semi-device-independence, where other assumptions are placed on the systems, e.g. energy bounds~\cite{HWCGP17}.

Whilst these security proof frameworks allow us to prove that we can extract secure randomness from a source of quantum states, a pertinent question for the design of efficient QRNGs remains: what is the most amount of secure randomness we can extract? Such a quantifier would provide us with the means to assess the resourcefulness of a given source for practical QRNG and would further provide a benchmarking tool for finite-size security proofs of QRNG protocols. 

More concretely, consider the setup depicted in Figure~\ref{fig:setup} (detailed in Section~\ref{sec:intrinsic-rand-povm}), which is an adversarial model introduced in~\cite{FRT13} (see also~\cite{SSA23,DCZM23} for more recent developments). In this setup, we consider an honest agent Alice who wishes to generate some secret randomness. To do so, she measures a trusted quantum state $\rho_A$ using a trusted POVM $\{M_x\}_x$ producing a classical random variable $X$. However, unbeknownst to Alice, an eavesdropper Eve has quantum side information about both Alice's source (which produces $\rho_A$) and Alice's measurement device (which measures $\{M_x\}_x$). We model this general quantum side information by a tripartite state $\rho_{ABE}$ where the system $A$ is held by Alice's source, the system $B$ is held by Alice's measurement device (which it can use to help implement the POVM $\{M_x\}_x$ on $A$) and the system $E$ is held by Eve. The marginal state $\rho_A$ and the POVM $\{M_x\}_x$ are both trusted and known but the systems held by the measurement device and the adversary are unknown and can in principal be selected by Eve to provide her with the most amount of information about $X$. 

We are interested in the maximal amount of cryptographically secure randomness that Alice can extract from her system $A$ in such a general adversarial model. To extract the randomness from her state Alice first measures her system using producing the classical system $X$. This results in a classical-quantum post-measurement state $\rho_{XE}$ which captures the potential side-information that Eve has about $X$. To remove this side-information and extract cryptographically secure randomness, Alice performs a randomness extraction (privacy amplification) protocol on the classical system $X$, producing $k$ bits of $\epsilon$-secure randomness. In this work we ask the following question: if we optimize the choice of POVM $\{M_x\}_x$, how many bits of $\epsilon$-secure randomness can we extract? Or, what's an achievable amount of $\epsilon$-secure randomness that we can extract from a quantum state $\rho_A$ against a general quantum adversary?

\subsection{Contributions}
In this work we derive simple, easy to compute expressions for achievable rates of extraction of cryptographically secure randomness from a quantum state. In particular, if we define the Rényi entropy of system $A$ to be $H_\alpha(A) = \frac{1}{1-\alpha} \log \tr{\rho_A^\alpha}$ for $\alpha \in (0,1) \cup (1, \infty)$ then we can show that Alice can extract up to 
\begin{equation}\label{eq:main-res-intro}
	2\log(d_A) - \min_{\alpha \in (1,2]} \left(H_{\frac{\alpha}{2\alpha-1}}(A) - \frac{\alpha}{\alpha-1} \log(\epsilon) \right)
\end{equation}
bits of $\epsilon$-secure randomness from her quantum state, where $d_A$ is the dimension of the system $A$. In addition, if she restricts herself to projective measurements only then she can extract up to
\begin{equation}
	\log(d_A) - \min_{\alpha \in (1,2]} \left(H_{\frac{\alpha}{2\alpha-1}}(A) - \frac{\alpha}{\alpha-1} \log(\epsilon) \right)
\end{equation}
bits of $\epsilon$-secure randomness. These results give simple to evaluate benchmarks for quantum random number generators (QRNG). In particular, given a QRNG that uses a source producing states $\rho_A$, our work shows that such a QRNG could in principle produce randomness at a rate given by~\eqref{eq:main-res-intro}, setting a benchmark for the performance of QRNGs and their security proofs.

On the technical level, the majority of this work is dedicated to proving that 
\begin{equation}\label{eq:sw-opt-intro}
	\sup_{\{M_x\}_x} \swHu{\alpha}(X|E) = 2 \log(d_A) - H_{\frac{\alpha}{2\alpha-1}}(A)
\end{equation}
where $\swHu{\alpha}$ is the conditional sandwiched Rényi entropy~\cite{MDSFT,WWY13} and $\rho_{XE}$ is some post-measurement state defined by the adversarial model (see Section~\ref{sec:intrinsic-rand-povm} and Figure~\ref{fig:setup} for details). Along the way to proving this, we define what we call the \emph{maximal intrinsic $\HH$-randomness problem} which is the optimization problem
\begin{equation}
	\sup_{\{M_x\}_x} \HH(X|E)
\end{equation}
defined for any conditional entropy $\HH$. This was inspired by a similar problem introduced recently in~\cite{MCSWFSA23}. However, in that work, $\HH$ is restricted to the von Neumann entropy and the min/max entropies, $\{M_x\}_x$ is restricted to projective measurements and no entanglement with the measurement device was considered. Generalizing this problem to other conditional entropies allows us to connect to the operationally relevant notion of extractable $\epsilon$-secure randomness~\cite{D21}. Furthermore, we are able to establish our results for general measurements in a general quantum adversarial model wherein the source, the measurement device and Eve can all be entangled. Note that when we restrict our optimization to projective measurements we are able to recover the von Neumann entropy and min-entropy results of~\cite{MCSWFSA23}, showing that they also hold in this stronger adversarial model.

The work is structured as follows. In Section~\ref{sec:prelims} we introduce the relevant notation and entropic quantities used throughout the work. In Section~\ref{sec:intrinsic-rand-proj} we solve a simpler version of the maximal intrinsic $\HH$-randomness problem where the measurements are restricted to be projective and no entanglement is shared with the measurement device. In Section~\ref{sec:intrinsic-rand-povm} we consider the fully general scenario giving closed form expressions for the maximal intrinsic $\HH$-randomness problem for the operationally relevant conditional sandwiched Rényi entropies. In Section~\ref{sec:application} we connect the solutions of Section~\ref{sec:intrinsic-rand-proj} and Section~\ref{sec:intrinsic-rand-povm} to the problem of maximal extractable secure randomness and give a simple to compute benchmark for the performance of QRNGs. 

\section{Preliminaries}\label{sec:prelims}
We begin by introducing some notation used throughout the article. Quantum systems and their associated Hilbert spaces are denoted by capital Roman letters, e.g. $A$ or $B$. All Hilbert spaces considered in this work are finite dimensional. The set of positive semidefinite square matrices acting on a system $A$ are denoted~$\Pos(A)$. Given two Hermitian matrices $X,Y$ acting on a Hilbert space $A$, we write $X \succeq Y$ if $X-Y \in \Pos(A)$.  A \emph{quantum state} of a system $A$ is a an operator $\rho \in \Pos(A)$ such that $\tr{\rho}=1$, the set of quantum states for a system $A$ is denoted $\D(A)$. Given $\rho, \sigma \in \Pos(A)$ we write $\rho \ll \sigma$ if $\ker(\sigma) \subseteq \ker(\rho)$ where $\ker(X) \defed \{\ket{\psi} \in A \,:\, X\ket{\psi} = 0 \}$. We say a bipartite state $\rho_{XE}\in \D(XE)$ is \emph{classical-quantum} (cq) if it can be written in the form $\sum_{x} \outer{x} \otimes \rho_{E,x}$ where $\{\ket{x}\}_x$ is some predefined orthonormal (classical) basis and $\rho_{E,x} \in \Pos(E)$ such that $\sum_x \rho_{E,x} \in \D(E)$. We take the logarithm to be base 2 (denoted $\log$). For $n \in \NN$, we use the notation $[n] = \{0,1,\dots, n-1\}$.

We say a collection of operators $\{M_x\}_x$ is a POVM on system $A$ if for all $x$ we have $M_x \in \Pos(A)$ and $\sum_x M_x = \id_A$. If in addition $\{M_x\}_x$ are all projectors then we refer to this as a projective measurement or a PVM. A POVM $\{M_x\}_x$ is called $\emph{extremal}$ if for any POVMs $\{F_x\}_x$, $\{G_x\}_x$ and $0 < \lambda < 1$ such that $M_x = \lambda F_x + (1-\lambda) G_x$ for all $x$, we must have $F_x = G_x = M_x$. Note that extremal POVMs can have at most $d^2$ outcomes where $d$ is the dimension of the quantum system~\cite{DPP05}. 

In order to define the Rényi entropy families relevant to this work we must first define the sandwiched Rényi divergences, first introduced in~\cite{MDSFT, WWY13}.
\begin{definition}[Sandwiched Rényi divergences]
	Let $\alpha \in [\tfrac12,1) \cup (1,\infty)$, let $\rho, \sigma \in \Pos(A)$ with $\rho \neq 0$ and $\rho \ll \sigma$. Then the \emph{sandwiched Rényi divergence of order $\alpha$} is defined as 
	\begin{equation}
		\swD{\alpha}(\rho \|\sigma) \defed \frac{1}{\alpha-1} \log \frac{\tr{\left(\sigma^{\frac{1-\alpha}{2\alpha}} \rho \sigma^{\frac{1-\alpha}{2\alpha}}\right)^\alpha}}{\tr{\rho}} \,.
	\end{equation}
\end{definition}

We can use the above divergence to define two families of conditional entropies. 
\begin{definition}[Conditional sandwiched Rényi entropies]
	Let $\alpha \in [\tfrac12,1)\cup (1 , \infty)$ and let $\rho_{AB} \in \D(AB)$. Then we define
	\begin{equation}
		\swHu{\alpha}(A|B) := \sup_{\sigma_B \in \D(B)} -\swD{\alpha}(\rho_{AB} \| \id_A \otimes \sigma_B)
	\end{equation}
	and 
	\begin{equation}
		\swHd{\alpha}(A|B) := -\swD{\alpha}(\rho_{AB} \| \id_A \otimes \rho_B)\,.
	\end{equation}
\end{definition}

We also need a notion of a non-conditional Rényi entropy which we define as follows.
\begin{definition}[Rényi entropy]
	Let $\alpha \in (0,1) \cup (1,\infty)$ and $\rho_A \in \D(A)$. Then the Rényi entropy of system $A$ of order $\alpha$ is defined as 
	\begin{equation}
		H_\alpha(A) \defed \frac{1}{1-\alpha} \log(\tr{\rho_A^\alpha})\,.
	\end{equation}
	The cases $\alpha \in \{0,1,\infty\}$ are defined by taking the associated limits which are given by
	\begin{equation}
		\begin{aligned}
			H_0(A) \defed& \log(\mathrm{rank}(\rho_A)) \\
			H_1(A) \defed& -\tr{\rho_A \log(\rho_A)} \\
			H_{\infty}(A) \defed& -\log(\|\rho_A\|_{\infty})\,.
		\end{aligned}
	\end{equation}
	We will use the notation $H(A) \defed H_1(A)$ and $H_{\min}(A) \defed H_{\infty}(A)$. 
\end{definition}

In addition, for a bipartite state $\rho_{AB} \in \D(AB)$ we also define the \emph{conditional von Neumann entropy} as 
\begin{equation}
	H(A|B) \defed H(AB) - H(B)
\end{equation}
and the \emph{conditional min-entropy} as 
\begin{equation}
	H_{\min}(A|B) \defed -\log \inf\{\lambda \in \RR \, \mid \, \exists\, \sigma_B \in \D(B) \text{ s.t. } \rho_{AB} \succeq \id_A \otimes \sigma_B \}\,.
\end{equation}
Note that these two conditional entropies are also limits of the conditional sandwiched Rényi entropies. That is, we have
\begin{equation}
	\lim_{\alpha \to 1} \swHu{\alpha}(A|B) = H(A|B) \qquad \text{and} \qquad \lim_{\alpha \to \infty} \swHu{\alpha}(A|B) = H_{\min}(A|B)\,.
\end{equation}

\section{The maximal intrinsic $\mathbb{H}$-randomness problem (PVMs)}\label{sec:intrinsic-rand-proj}

Let us begin with a simpler problem, which is a special case of the more general problem that we solve in Section~\ref{sec:intrinsic-rand-povm}. We consider a slightly simpler\footnote{In particular, we remove the entanglement with the measurement device and we assume that the measurement $\{M_x\}_x$ is projective.} setup than that which is depicted in Figure~\ref{fig:setup}. Alice is given a source producing a state $\rho_A$ which she then measures using a projective measurement $\{M_x\}_x$ producing a classical random variable $X$. However, suppose an eavesdropper Eve holds side-information about the source. This side information is modeled by a bipartite quantum state $\rho_{AE}$, which has the constraint that $\ptr{E}{\rho_{AE}} = \rho_A$. Note that the dimension of $E$ is a priori unbounded.

After the measurement is performed by Alice, the information is represented by the cq-state 
\begin{equation}
	\rho_{XE} = \sum_x \outer{x} \otimes \rho_{E,x} 
\end{equation}
where $\rho_{E,x} = \ptr{A}{(M_x \otimes \id) \rho_{AE}}$. Using this state we can quantify the information Eve has via a conditional entropy $\mathbb{H}(X|E)$. For the moment, we do not explicitly define $\mathbb{H}$ but we note that the particular cases where $\mathbb{H}$ is the von Neumann entropy and the min-entropy have already been studied in~\cite{MCSWFSA23}.

In order to conclude that the outcome of the measurement $\{M_x\}_x$ on the state $\rho_A$ is \emph{intrinsically random}, we must allow for any possible side information held by Eve. That is we look to solve the optimization problem 
\begin{equation}\label{eq:minE-proj}
	\begin{aligned}
		\inf & \quad \mathbb{H}(X|E) \\
		\mathrm{s.t.}& \quad \rho_{XE} = \sum_x \outer{x} \ox \ptr{A}{(M_x \otimes \id) \rho_{AE}}, \\
		& \quad \ptr{E}{\rho_{AE}} = \rho_A, \qquad \rho_{AE} \in \D(AE)\,,
	\end{aligned}
\end{equation}
where we minimize over all possible $E$ systems that could be held by the adversary. The following lemma (see also~\cite{HB24}) shows that it is sufficient to consider any purification $\rho_{AE}$ of $\rho_A$.
\begin{lemma}\label{lem:purification-proj}
	Let $\HH$ be a conditional entropy satisfying:
	\begin{enumerate}
		\item (Strong subadditivity) $\HH(A|BC) \leq \HH(A|B)$
		\item (Isometric invariance) For any isometry $V: B \rightarrow B'$ we have $\HH(A|B)_{\rho_{AB}} = \HH(A|B')_{V \rho_{AB} V^\dagger}$
	\end{enumerate}
	then the optimal solution to the optimization problem~\eqref{eq:minE-proj} is obtained by any purification $\rho_{AE}$ of $\rho_A$.  
	\begin{proof}
		Take any feasible point of~\eqref{eq:minE-proj} such that $\rho_{AE}$ is a mixed state. Then if $\rho_{AEE'}$ is a purification of $\rho_{AE}$ we have $\HH(X|EE') \leq \HH(X|E)$ by strong subadditivity. Thus we can always restrict to pure $\rho_{AE}$ when minimizing.
		
		Any two purifications $\rho_{AE}$, $\rho_{AE'}$ of $\rho_A$ can be related by an isometry on the purifying system. I.e., suppose without loss of generality that $\dim(E) \leq \dim(E')$. Then there exists an isometry $V:E\to E'$ such that $\rho_{AE'} = (\id_A \otimes V) \rho_{AE} (\id_A \otimes V^{\dagger})$. As the measurement occurs only on the $A$ system we then also have $(\id_X \otimes V) \rho_{XE} (\id_X \otimes V^{\dagger}) = \rho_{XE'}$. By the isometric invariance property of $\HH$ we therefore have $\HH(X|E) = \HH(X|E')$. Thus all purifications result in the same optimal value. 
	\end{proof}
\end{lemma}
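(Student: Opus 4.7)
My plan is a two-step reduction. First I would show that the infimum in~\eqref{eq:minE-proj} can be restricted to pure $\rho_{AE}$ by exploiting strong subadditivity. Second, I would show that every purification of $\rho_A$ yields the same value, combining the uniqueness of purifications up to isometry with the isometric invariance of $\HH$. Neither step requires any property of $\HH$ beyond the two hypotheses stated in the lemma.

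For the first step, take any feasible (possibly mixed) $\rho_{AE}$ and Stinespring-dilate it to a pure $\rho_{AEE'}$ on an enlarged purifying register $E'$. Because the POVM $\{M_x\}_x$ acts only on $A$, applying it produces a cq-state $\rho_{XEE'}$ whose partial trace over $E'$ is the original post-measurement $\rho_{XE}$. Strong subadditivity then gives $\HH(X|EE') \leq \HH(X|E)$, i.e.\ giving Eve the extra purifying register can only help her. So for every mixed feasible point there is a pure feasible point achieving an objective value that is no larger, and the infimum is attained on purifications.

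For the second step, given two pure purifications $\rho_{AE}$ and $\rho_{AE'}$ of $\rho_A$ with, without loss of generality, $\dim(E) \leq \dim(E')$, the standard uniqueness-of-purification result produces an isometry $V : E \to E'$ such that $\rho_{AE'} = (\id_A \otimes V)\rho_{AE}(\id_A \otimes V^\dagger)$. Since the measurement acts trivially on the $E$ factor, the same isometry relates the post-measurement cq-states: $\rho_{XE'} = (\id_X \otimes V)\rho_{XE}(\id_X \otimes V^\dagger)$. Isometric invariance of $\HH$ then yields $\HH(X|E) = \HH(X|E')$, so all purifications give the same objective value. Combined with the first step, the infimum is attained with this common value on any purification of $\rho_A$.

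The only subtlety, and the reason both hypotheses apply cleanly, is that the measurement $\{M_x\}_x$ acts on $A$ alone. This is what lets the dilation in step one and the isometry in step two — both of which live on Eve's register — commute through the measurement channel and lift verbatim from the $AE$ level, where the properties of $\HH$ are stated, to the $XE$ level, where the optimization is actually formulated. Once that commutativity is in hand, the proof is essentially mechanical.
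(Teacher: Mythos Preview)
Your proposal is correct and follows essentially the same two-step argument as the paper: first use strong subadditivity to reduce to pure $\rho_{AE}$, then use uniqueness of purifications up to isometry together with isometric invariance of $\HH$ to conclude that all purifications yield the same value. Your additional remark explaining why the measurement acting only on $A$ is what lets the isometry and dilation pass through to the post-measurement state is helpful exposition, but the underlying proof is identical to the paper's.
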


Lemma~\ref{lem:purification-proj} enables us to quantify the intrinsic randomness of a measurement $\{M_x\}_x$ on a state $\rho_A$ as quantified by $\HH$ by evaluating $\HH(X|E)$ where $E$ is any purification of $\rho_A$.\footnote{This also implies that without loss of generality we can assume that $\dim(A) = \dim(E)$.} In this work we are interested in what we call the \emph{maximal intrinsic $\HH$-randomness} of the state $\rho_A$, as such we allow ourselves to maximize $\HH(X|E)$ over all projective measurements. This leads to the following definition. 

\begin{definition}[Maximal intrinsic $\HH$-randomness (projective)]
	\label{def:max_intrinsinc_randomness-proj}
	Let $\HH$ be a conditional entropy, let $\rho_A$ be a quantum state and let $\rho_{AE}$ be any purification of $\rho_A$. Then we define the \emph{maximal intrinsic $\HH$-randomness for projective measurements} of $\rho_A$ as 
	\begin{equation}
		\begin{aligned}
			\maxintrinsicpvm{\HH}{\rho_A} \defed \sup_{\{M_x\}_x} &\quad \HH(X|E) \\
			\mathrm{s.t.}& \quad \rho_{XE} = \sum_x \outer{x} \ox \ptr{A}{(M_x \otimes \id) \rho_{AE}}, \\
			& \quad \{M_x\}_x \text{ is a projective measurement}.
		\end{aligned}
	\end{equation}
\end{definition} 

In~\cite{MCSWFSA23}, closed form expressions for the von Neumann entropy $H$ and the min-entropy $H_{\min}$ were derived. In particular, the authors found that 
\begin{equation}\label{eq:int-rand-pvm-H}
	\maxintrinsicpvm{H}{\rho_A} = \log(d_A) - H(A)
\end{equation}
and 
\begin{equation}\label{eq:int-rand-pvm-Hmin}
		\maxintrinsicpvm{H_{\text{min}}}{\rho_A} = \log(d_A) - 2\log \tr{\sqrt{\rho_A}}.
\end{equation}

In the following subsection we will generalize these results to the sandwiched Rényi entropies $\swHu{\alpha}$ and $\swHd{\alpha}$. Using this we can recover the results from~\cite{MCSWFSA23} by taking appropriate limits. 

\subsection{Maximal intrinsic Rényi-randomness (PVMs)}

To begin we have the following result that tells us we can assume, without loss of generality, that our optimal projective measurement is a rank-one projective measurement. 
\begin{lemma}[Restriction to rank-one measurements]\label{lem:rank1}
	Let $\alpha \in [1/2, 1)\cup(1,\infty)$, then for $\HH$ either $\swHd{\alpha}$ or $\swHu{\alpha}$ we have
	\begin{equation}
		\begin{aligned}
			\maxintrinsicpvm{\HH}{\rho_A} = \sup_{\{M_x\}_x} &\quad \HH(X|E) \\
			\mathrm{s.t.}& \quad \rho_{XE} = \sum_x \outer{x} \ox \ptr{A}{(M_x \otimes \id) \rho_{AE}}, \\
			& \quad \{M_x\}_x \text{ is a rank-one projective measurement}.
		\end{aligned}
	\end{equation}
	\begin{proof}
		Consider any PVM $\{N_x\}_x$ (not necessarily rank-one) and let $r = \max\limits_x \mathrm{rank}(N_x)$. Then there exists a collection of projectors $\{M_{x,y}\}_{x,y}$ such that $\sum_{y=1}^r M_{x,y} = N_x$ such that each $M_{x,y}$ is either a rank-one projector or 0. If we define the two classical-quantum states 
		\begin{equation}
			\begin{aligned}
				\rho_{XE} &= \sum_x \outer{x} \otimes \ptr{A}{(N_x \otimes \id_E) \rho_{AE}} \\
				\rho_{XYE} &= \sum_{xy} \outer{xy} \otimes \ptr{A}{(M_{x,y} \otimes \id_E) \rho_{AE}}
			\end{aligned}
		\end{equation}
		then $\ptr{Y}{\rho_{XYE}} = \rho_{XE}$. By Lemma~\ref{lem:classical-info-ent} we know that for all $\alpha \in [1/2,1)\cup(1,\infty)$ we have
		\begin{equation}
			\swHd{\alpha}(XY|E) \geq \swHd{\alpha}(X|E) \quad \text{and} \quad \swHu{\alpha}(XY|E) \geq \swHu{\alpha}(X|E)\,. 
		\end{equation}
		Overall this shows that for any non rank-one measurement we can find a rank-one measurement whose conditional entropy is at least as large and hence the result follows.
	\end{proof}
\end{lemma}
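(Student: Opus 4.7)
The plan is to show the inequality $\maxintrinsicpvm{\HH}{\rho_A} \leq$ (supremum over rank-one PVMs), since the reverse inequality is immediate from the fact that rank-one PVMs form a subclass of PVMs. To do this I would fix an arbitrary feasible PVM $\{N_x\}_x$ and exhibit a rank-one PVM whose post-measurement cq-state has conditional entropy at least as large against Eve.

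Concretely, for each $x$ I would spectrally decompose $N_x = \sum_{y=1}^{\mathrm{rank}(N_x)} \outer{v_{x,y}}$ into mutually orthogonal rank-one projectors, and pad with zero operators so that $y$ ranges over a common index set $\{1,\ldots,r\}$ with $r = \max_x \mathrm{rank}(N_x)$. This gives a collection $\{M_{x,y}\}_{x,y}$ whose nonzero elements form a rank-one PVM on $A$ (after relabelling pairs $(x,y)$ as single outcomes). The corresponding post-measurement cq-states $\rho_{XE}$ and $\rho_{XYE}$ then satisfy $\ptr{Y}{\rho_{XYE}} = \rho_{XE}$, since $\sum_y M_{x,y} = N_x$ and the partial trace on the $A$ side is linear. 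The proof now reduces to the claim that passing from $\rho_{XE}$ to the refinement $\rho_{XYE}$ can only increase the conditional entropy, i.e.\ $\HH(XY|E) \geq \HH(X|E)$ for $\HH \in \{\swHu{\alpha},\swHd{\alpha}\}$ on cq-states. This is precisely the content of the earlier Lemma~\ref{lem:classical-info-ent}, which I would invoke as a black box. Applying it yields $\HH(XY|E) \geq \HH(X|E)$ for the refined rank-one measurement, completing the proof.

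The main obstacle is the ``add-a-classical-register'' inequality itself, which is the substantive ingredient and not a direct consequence of data-processing. Standard DPI for sandwiched Rényi divergences gives monotonicity under CPTP maps on the conditioning system, but here we are enlarging the non-conditioning system $X \to XY$, so we cannot just apply DPI naively. The correct argument (deferred to the earlier lemma) exploits the fact that the state is already cq on $X$: one writes the sandwiched quantity explicitly in terms of the blocks $\rho_{E,x}$ and their further decomposition $\rho_{E,x} = \sum_y \rho_{E,x,y}$, and uses operator concavity/convexity properties of the function $t \mapsto t^\alpha$ (or equivalently the joint convexity/concavity of the sandwiched divergence in appropriate ranges of $\alpha$) together with the fact that $\sum_y \rho_{E,x,y} = \rho_{E,x}$. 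Once this classical-monotonicity lemma is in hand, the rank-one reduction here is essentially a one-line corollary.
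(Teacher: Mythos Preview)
Your proposal is correct and follows essentially the same approach as the paper: refine an arbitrary PVM into a rank-one PVM via spectral decomposition, observe the partial-trace relation $\ptr{Y}{\rho_{XYE}} = \rho_{XE}$, and invoke Lemma~\ref{lem:classical-info-ent} to conclude $\HH(XY|E) \geq \HH(X|E)$. Your added discussion of why the refinement inequality is not a naive data-processing consequence is accurate and matches the spirit of the paper's Lemma~\ref{lem:classical-info-ent}, whose proof indeed goes through super-/subadditivity of $\swQ{\alpha}$ in its first argument rather than direct DPI.
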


Using the above lemma we can restrict the intrinsic randomness to an optimization over rank-one projective measurements. This gives us sufficient structure to prove the following closed form expressions of the intrinsic sandwiched Rényi-randomness, the proof is given in Appendix~\ref{app:sw}.
\begin{theorem}\label{thm:int-rand-proj-sw}
	Let $\rho_A \in \D(A)$ and let $d_A$ be the dimension of system $A$. Then we have
	\begin{equation}
		\maxintrinsicpvm{\swHu{\alpha}}{\rho_A} = \log(d_A) - H_{\frac{\alpha}{2\alpha-1}}(A)
	\end{equation}
	for all $\alpha > 1$ and we have
	\begin{equation}
		\begin{aligned}
			\maxintrinsicpvm{\swHd{\alpha}}{\rho_A} = \log(d_A) - H_{\frac{1}{\alpha}}(A)		\,.
		\end{aligned}
	\end{equation}
	for all $\alpha \in [1/2,1) \cup (1,\infty)$.
\end{theorem}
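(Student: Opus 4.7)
The plan is to first invoke Lemma~\ref{lem:rank1} to restrict to rank-one PVMs $\{\outer{v_x}\}_x$ and to fix a Schmidt decomposition $\ket{\psi}_{AE}=\sum_i\sqrt{\lambda_i}\ket{i}_A\ket{i}_E$ of the purification. A direct computation then gives
$$
\rho_{E,x}=\rho_E^{1/2}\outer{\tilde v_x}\rho_E^{1/2},
$$
with $\ket{\tilde v_x}\defed\sum_i\overline{\braket{i|v_x}}\ket{i}_E$, and as $\{\ket{v_x}\}_x$ ranges over orthonormal bases of $A$, $\{\ket{\tilde v_x}\}_x$ ranges over orthonormal bases of $E$. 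Since each $\rho_{E,x}$ is rank-one, for any $\sigma_E\in\D(E)$ I obtain the key identity
$$
\tr{\bigl(\sigma_E^{\frac{1-\alpha}{2\alpha}}\rho_{E,x}\sigma_E^{\frac{1-\alpha}{2\alpha}}\bigr)^{\alpha}}=\braopket{\tilde v_x}{K_\sigma}{\tilde v_x}^\alpha,\qquad K_\sigma\defed\rho_E^{1/2}\sigma_E^{\frac{1-\alpha}{\alpha}}\rho_E^{1/2}.
$$
Summing over $x$ reduces $\swD{\alpha}(\rho_{XE}\|\id_X\otimes\sigma_E)$ to $\tfrac{1}{\alpha-1}\log\sum_x d_x^\alpha$, where $d_x=\braopket{\tilde v_x}{K_\sigma}{\tilde v_x}$ are the diagonal entries of $K_\sigma$ in the chosen basis.

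Next I optimise over the basis $\{\ket{\tilde v_x}\}_x$. By the Schur--Horn theorem, as this basis varies the diagonal vector $(d_x)_x$ sweeps out exactly the set of vectors majorised by $\mathrm{spec}(K_\sigma)$; in particular it can be made equal to the uniform vector $(\tr{K_\sigma}/d_A,\ldots,\tr{K_\sigma}/d_A)$, achieved by any basis mutually unbiased with the eigenbasis of $K_\sigma$. The map $y\mapsto\sum_x y_x^\alpha$ is Schur-convex for $\alpha>1$ and Schur-concave for $\alpha\in[\tfrac12,1)$; combined with the fact that $\tfrac{1}{\alpha-1}$ flips sign across $\alpha=1$, in either regime the entropy-maximising measurement equalises the diagonal, yielding
$$
\sup_{\{M_x\}_x}\bigl[-\swD{\alpha}(\rho_{XE}\|\id_X\otimes\sigma_E)\bigr]=\log d_A-\tfrac{\alpha}{\alpha-1}\log\tr{\rho_E\,\sigma_E^{\frac{1-\alpha}{\alpha}}}.
$$

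For $\swHd{\alpha}$ the state $\sigma_E=\rho_E$ is fixed, so $\tr{\rho_E\,\rho_E^{(1-\alpha)/\alpha}}=\tr{\rho_A^{1/\alpha}}$ and the identity $H_{1/\alpha}(A)=\tfrac{\alpha}{\alpha-1}\log\tr{\rho_A^{1/\alpha}}$ immediately gives $\maxintrinsicpvm{\swHd{\alpha}}{\rho_A}=\log d_A-H_{1/\alpha}(A)$. For $\swHu{\alpha}$ with $\alpha>1$ I still need to supremise over $\sigma_E$, which amounts to infimising $\sigma_E\mapsto\tr{\rho_E\,\sigma_E^{(1-\alpha)/\alpha}}$ over $\D(E)$. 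Since this functional is invariant under conjugating $\sigma_E$ by any unitary commuting with $\rho_E$, I may restrict to $\sigma_E$ diagonal in the eigenbasis of $\rho_E$; a Lagrange-multiplier calculation then returns the optimiser $\sigma_E\propto\rho_E^{\alpha/(2\alpha-1)}$ with optimal value $\tr{\rho_A^{\alpha/(2\alpha-1)}}^{(2\alpha-1)/\alpha}$, and identifying the resulting logarithm with $H_{\alpha/(2\alpha-1)}(A)$ produces the stated formula. The main technical hurdle is the Schur--Horn/Schur-convexity step, together with the bookkeeping around the sign of $\alpha-1$ when passing between $\sum_x d_x^\alpha$ and the Rényi divergence; the symmetry reduction of $\sigma_E$ in the $\swHu{\alpha}$ case is straightforward once the rest is in place.
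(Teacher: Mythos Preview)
Your argument is correct and takes a genuinely different route from the paper's. After the common reduction to rank-one PVMs via Lemma~\ref{lem:rank1}, the paper (Appendix~\ref{app:sw}) computes $\swD{\alpha}(\rho_{XE}\|\id_X\otimes\rho_E^\beta/\tr{\rho_E^\beta})$ for specific $\beta$ and rewrites each conditional entropy as $H_\alpha$ of a classical register minus a fixed Rényi entropy of $A$; for $\swHu{\alpha}$ the upper bound comes from the external inequality $\swHu{\alpha}\le\swHd{2-1/\alpha}$ of~\cite[Corollary~4]{TBH14}, matched by the lower bound from the guessed $\sigma_E\propto\rho_E^{\alpha/(2\alpha-1)}$. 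The maximisation over PVMs is then just the trivial bound $H_\alpha(X')\le\log d_A$, saturated by any mutually unbiased basis. Your approach instead keeps $\sigma_E$ free, packages the problem as optimising the diagonal of $K_\sigma$ via Schur--Horn and Schur-convexity, and then for $\swHu{\alpha}$ carries out the $\sigma_E$-optimisation directly by Lagrange multipliers. This is more self-contained (it removes the dependence on~\cite{TBH14}) and the majorisation viewpoint makes the sign bookkeeping across $\alpha=1$ transparent. What the paper's route buys is the intermediate identity $\swHu{\alpha}(X|E)=H_\alpha(X'')-H_{\alpha/(2\alpha-1)}(A)$ valid for \emph{every} rank-one measurement, not just the optimal one; this stronger per-measurement statement is reused in Appendix~\ref{app:max-sw-povm} for the POVM case, whereas your argument only yields the value after the supremum over measurements is taken. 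One small point to tighten: the symmetry argument reducing $\sigma_E$ to the commutant of $\rho_E$ needs convexity of $\sigma_E\mapsto\tr{\rho_E\,\sigma_E^{(1-\alpha)/\alpha}}$ (which holds since $t\mapsto t^{(1-\alpha)/\alpha}$ is operator convex for $\alpha>1$, the exponent lying in $(-1,0)$) so that averaging over the commutant unitaries does not increase the value; invariance alone is not enough.
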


We note that the proof of Theorem~\ref{thm:int-rand-proj-sw} can also be extended to the limits $\alpha \to 1$ and $\alpha \to \infty$. In particular, by taking these limits we recover the results
\begin{equation}
	\maxintrinsicpvm{H}{\rho_A} = \log(d_A) - H(A) \quad \text{and} \quad \maxintrinsicpvm{H_{\text{min}}}{\rho_A} = \log(d_A) - 2\log\tr{\sqrt{\rho_A}} 
\end{equation}
which were first derived in~\cite{MCSWFSA23}. Thus our results for $\maxintrinsicpvm{\swHu{\alpha}}{\rho_A}$ nicely interpolate between the results of~\cite{MCSWFSA23}, can be derived using a single proof technique and give a clear interpretation for the form of $2 \log \tr{\sqrt{\rho_A}} = H_{1/2}(A)$. 
\begin{example}
	Consider the qubit state $\rho_{A} = \tfrac12 \outer{0} + \tfrac12 \outer{+}$, whose spectrum is equal to $\{\cos^2(\pi/8), \sin^2(\pi/8)\}$. In Figure~\ref{fig:ir-proj-example} we plot $\maxintrinsicpvm{\swHd{\alpha}}{\rho_A}$ and $\maxintrinsicpvm{\swHu{\alpha}}{\rho_A}$ for a range of $\alpha$. In particular we can see that $\maxintrinsicpvm{\swHd{\alpha}}{\rho_A} \leq \maxintrinsicpvm{\swHu{\alpha}}{\rho_A}$, with an equality in the limit $\alpha \to 1$. However, for other values of $\alpha$ they diverge with
	\begin{equation}
		\lim_{\alpha \to \infty} \maxintrinsicpvm{\swHu{\alpha}}{\rho_A} = \maxintrinsicpvm{\swHd{2}}{\rho_A}
	\end{equation}
	and
	\begin{equation}
		\lim_{\alpha \to \infty} \maxintrinsicpvm{\swHd{\alpha}}{\rho_A} = \log(d_A) - \log \mathrm{rank}(\rho_A)\,,
	\end{equation}
	which evaluates to $0$ in this example as $\mathrm{rank}(\rho_A) = d_A$.
\end{example}

\begin{figure}
	\definecolor{mycolor2}{rgb}{0.00000,0.44700,0.74100}%
	\definecolor{mycolor4}{rgb}{0.85000,0.32500,0.09800}%
	\definecolor{mycolor3}{rgb}{0.92900,0.69400,0.12500}%
	\definecolor{mycolor1}{rgb}{0.49400,0.18400,0.55600}%
	\centering
	\begin{tikzpicture}[scale = 0.9]	
		\begin{axis}[%
			width=4in,
			height=2.8in,
			scale only axis,
			xmin=0.5,
			xmax=4,
			ymin=0,
			ymax=1,
			grid=major,
			ytick = {0.0, 0.1, 0.2, 0.3, 0.4, 0.5, 0.6, 0.7,0.8,0.9,1.0},
			xtick = {0.5, 1, 1.5,2,2.5,3,3.5,4},
			axis background/.style={fill=white},
			xlabel={$\alpha$},
			ylabel={Maximal intrinsic $\HH$-randomness (bits)},
			legend style={at={(0.84,0.9)},legend cell align=left, align=left, draw=white!15!black,row sep=3pt}
			]
			\addplot[color=mycolor1, line width=1.5pt] table[col sep=comma] {./data/swHu-ex.csv};
			\addlegendentry{$\maxintrinsicpvm{\swHu{\alpha}}{\rho_A}$};
			\addplot[color=mycolor2, line width=1.5pt] table[col sep=comma] {./data/swHd-ex.csv};
			\addlegendentry{$\maxintrinsicpvm{\swHd{\alpha}}{\rho_A}$};
			\addplot[color=black, dashed, line width=1.5pt] table[col sep=comma] {./data/swHuAsym-ex.csv};
			\addlegendentry{$\lim\limits_{\alpha \to \infty} \maxintrinsicpvm{\swHu{\alpha}}{\rho_A}$};
		\end{axis}
		\node[] at (7.3,3.8) {$\rho_A = \frac14 \begin{pmatrix}
				3 & 1 \\ 1 & 1
			\end{pmatrix}$};
	\end{tikzpicture}%
	\caption{A plot of $\cR_{\swHd{\alpha}}(\rho_A)$ and $\cR_{\swHd{\alpha}}(\rho_A)$ for the state $\rho_A = \frac14 \begin{pmatrix} 3 & 1 \\ 1 & 1 \end{pmatrix}$.  }
	\label{fig:ir-proj-example}
\end{figure}

\section{The maximal $\mathbb{H}$-randomness problem}\label{sec:intrinsic-rand-povm}

We now want to generalize the setup considered in the previous section to the case where we allow Alice to measure any POVM $\{M_x\}_x$. However, this generalization is not as immediate as just replacing the maximization over projective measurement to over POVMs. To see that there are issues consider the trivial measurement $\{\id_A / n, \id_A / n, \dots , \id_A/n \}$, for any state $\rho_{AE}$ we will have that $H(X|E) = \log(n)$, in particular we can make this as large as possible by making $n$ large despite the measurement outcome carrying no information about the underlying state.

We use a framework introduced in~\cite{FRT13} where the adversary may share entanglement with the state used in the Naimark dilation of the POVM, see also~\cite{SSA23,DCZM23}. In particular, we now not only allow Eve to be entangled with the source, but we now allow for arbitrary entanglement between the source, the measurement device and Eve (see also Figure~\ref{fig:setup} for a graphical depiction). That is, we now introduce a quantum system $B$ that is held by the measurement device and can be used to help implement the POVM $\{M_x\}_x$. Thus, the shared state before the measurement is some $\rho_{ABE} \in \D(ABE)$ with the marginal $\rho_A$ still fixed. When the measurement device receives the system $A$, the most general thing it can do is apply a quantum-to-classical channel $\cE: L(AB) \to L(X)$ to produce the measurement outcomes. Such a channel can be shown to take the form 
\begin{equation}
	\cE(\rho_{AB}) = \sum_x \tr{\rho_{AB} N_x} \outer{x} 
\end{equation}
where $N_x$ is some POVM on $AB$. Importantly, however, we trust that the measurement device is actually implementing a fixed POVM $\{M_x\}_x$. Thus, we must have that for all $\sigma_A \in \D(A)$, 
\begin{equation}
	\tr{(\sigma_A \otimes \rho_B) N_x} = \tr{\sigma_A M_x}
\end{equation}
or equivalently, $\ptr{B}{N_x(\id_A \otimes \rho_B)} = M_x$. By considering a Naimark dilation this is equivalent to there existing a projective measurement $\{P_x\}_x$ that acts on the joint system $AB$\footnote{The dimension of $B$ may change and hence $\rho_B$ also but as we don't specify its dimension there are no issues here.} which satisfies what we refer to as the \emph{consistency condition}~\cite{DCZM23}: 
\begin{equation}
	\ptr{B}{P_x(\id_A \otimes \rho_B)} = M_x\,.
\end{equation}

\begin{figure}[t]
	\begin{center}
		\begin{tikzpicture}
			\def\x{1.5}
			\draw[thick, rounded corners = 1, fill = gray] (0,0,\x) rectangle (\x,\x,\x);
			\draw[thick, rounded corners = 1, fill = gray] (0,\x,\x) -- (0,\x,0) -- (\x,\x,0) -- (\x,\x,\x) -- cycle;
			\draw[thick, rounded corners = 1, fill = gray] (\x,\x,0) -- (\x,0,0) -- (\x,0,\x) -- (\x,\x,\x) -- cycle;
			\node[] at (0.55,0.1,1) {\small $\{M_x\}_x$};
			
			\draw[thick] (-.4,0.15) -- (.7,0.15);
			\draw[thick] (.7, 0.15) arc (0:180:.56);
			\draw[thick,->] (.15,.15) -- (.55,.5);
			
			\draw[thick, ->] (1.25,.5) -- (2.25,.5);
			\node[] (X) at (2.5,.5) {\fontsize{18}{12}$X$};
			\node[] (Z) at (6.5,.5) {\fontsize{18}{12}$Z$};
			\node[] (E) at (2.5,-1.25) {\fontsize{18}{12}$E$};
			
			\draw[thick, ->] (X) -- (Z);
			\node[] at (4.45, .72) {Randomness};
			\node[] at (4.45, .32) {extraction};
			
			\def\x{1.5}
			\def\ex{0};
			\def\ey{-2.5};
			
			\node[left] (A) at (-3.0, -.8) {\fontsize{14}{12} $\rho_{ABE}$};
			\draw[thick, dotted, ->] (A) ..controls (-2.5,0.5) .. (-.6,0.5);
			\draw[thick, dotted, ->] (A) ..controls (-2,0.) .. (-.6,0.);
			\draw[thick, dotted, ->] (A) ..controls (-2.5,-1.3) .. (E);

			\node[right] (C) at (3.2,.-.5) {\fontsize{14}{12} $\rho_{ZE} \approx_\epsilon \tau_k \otimes \rho_{E}$};
		\end{tikzpicture}
	\end{center}
	\caption{\textbf{Schematic of maximal extractable randomness.} We consider a setup where a trusted state $\rho_A \in \D(A)$ is measured by some trusted measurement $\{M_x\}_x$ producing a random variable $X$. However, we allow for the state to also be entangled with a system $B$ which is held by the measurement device and a system $E$ that is held by the adversary. Thus the joint state $\rho_{ABE}$ is arbitrary apart from the condition that the marginal on $A$ is fixed to $\rho_A$. Then, after the cq-state $\rho_{XE}$ is produced, a randomness extraction protocol is applied to $X$ to produce a new state $\rho_{ZE}$ where $Z$ is an $\epsilon$-secure string of $k$ uniformly random bits. If we fix $\rho_A$, how large can we make $k$?}
	\label{fig:setup}
\end{figure}

Note that such Naimark dilations are not unique and hence we further allow Eve to choose $\{P_x\}_x$ and $\rho_{B}$ (equivalently $\rho_{ABE}$) in order to minimize her uncertainty. Therefore, this new setting begins with a state $\rho_{ABE}$ where the system $E$ is still held by Eve. Then the state after Alice's measurement device implements the POVM $\{M_x\}_x$ is given by 
\begin{equation}
	\rho_{XE} = \sum_x \outer{x} \otimes \ptr{AB}{(P_x \otimes \id_E) \rho_{ABE}}\,.
\end{equation} 
In analogy to~\eqref{eq:minE-proj}, for a fixed state $\rho_A$, a fixed POVM $\{M_x\}$ and a conditional entropy $\HH$, we can define the intrinsic $\HH$-randomness by the optimization
\begin{equation}\label{eq:minE-povm}
	\begin{aligned}
		\inf & \quad \mathbb{H}(X|E) \\
		\mathrm{s.t.}& \quad \rho_{XE} =\sum_x \outer{x} \otimes \ptr{AB}{(P_x \otimes \id_E) \rho_{ABE}}, \\
		& \quad \ptr{BE}{\rho_{ABE}} = \rho_A, \quad \rho_{ABE} \in \D(AE), \\
		& \quad \ptr{B}{P_x(\id_A \otimes \rho_B)} = M_x\,,  \\
		& \quad \{P_x\}_x \text{ is a projective measurement}.
	\end{aligned}
\end{equation}
This optimization is evidently more complex that~\eqref{eq:minE-proj} as the optimization is now not only over all extensions of $\rho_A$, but also over all Naimark dilations of $\{M_x\}_x$. However, importantly, using the same proof technique as Lemma~\ref{lem:purification-proj} we can show that without loss of generality we can take $\rho_{ABE}$ to be a purification of $\rho_{AB}$. Thus, the remaining optimization is just over all Naimark dilations $(\{P_x\}_x, \rho_B)$ and we can take $\dim(E) = \dim(AB)$. Note that any valid Naimark dilation $(\{P_x\}_x, \rho_B)$ of $\{M_x\}_x$ gives a feasible point as we can always take $\rho_{ABE}$ to be a purification of $\rho_A \otimes \rho_B$. However, the optimization is over all possible $\rho_{ABE}$ and hence there can in principle be entanglement between $A$ and $B$.

Using the observation regarding purifications, we arrive at our definition of \emph{intrinsic randomness} for a pair $(\rho_A, \{M_x\}_x)$.
\begin{definition}\label{def:int-rand-povm}
	Let $\rho_A \in \D(A)$ and let $\{M_x\}_x$ be a POVM on the system $A$. Then the \emph{intrinsic $\HH$-randomness of the pair $(\rho_A, \{M_x\}_x)$} is defined as 
	\begin{equation}
			\begin{aligned}
					\intrinsicpovm{\HH}{\rho_A}{\{M_x\}_x} :=	\inf & \quad \mathbb{H}(X|E) \\
	\mathrm{s.t.}& \quad \rho_{XE} =\sum_x \outer{x} \otimes \ptr{AB}{(P_x \otimes \id_E) \outer{\rho}_{ABE}}, \\
& \quad \ptr{B}{\rho_{AB}} = \rho_A, \quad \rho_{AB} \in \D(AB), \\
& \quad \ptr{B}{P_x(\id_A \otimes \rho_B)} = M_x\,,  \\
& \quad \{P_x\}_x \text{ is a projective measurement}.
			\end{aligned}
	\end{equation}
	where $\ket{\rho}_{ABE} = \sum_{k} \sqrt{\rho_{AB}} \ket{k}_{AB}\otimes\ket{k}_{E}$ is a purification of $\rho_{AB}$. 
\end{definition}

This simplifies the optimization somewhat by reducing it to an optimization just over Naimark dilations and bipartite states $\rho_{AB}$. However, the optimization over all Naimark dilations is a priori difficult due to the possibly unbounded dimension of $B$. Our goal will be to tame this optimization for a special class of POVMs, namely extremal rank-one POVMs. Before that, we show that the adversarial model takes care of the trivial measurement problem highlighted at the beginning of the section.

\begin{example}[Trivial POVMs produce no intrinsic randomness]\label{ex:trivial}
	Consider the trivial POVM $\{M_x\}_x$ where $M_x = p(x) \id_A$ with $p(x)$ being some probability distribution. Such a POVM can be implemented with the Naimark dilation
	\begin{equation}
		P_x = \id_A \otimes \outer{x}  \qquad \text{and} \qquad \rho_B = \sum_x p(x) \outer{x}\,.
	\end{equation}
	Define $\ket{\psi}_{BE} = \sum_x \sqrt{p(x)} \ket{xx}$ to be a purification of $\rho_B$ and let the tripartite state be $\rho_{A} \otimes \outer{\psi}_{BE}$ then the post-measurement state takes the form
	\begin{equation}
		\rho_{XE} = \sum_x p(x) \outer{x}_X \otimes \outer{x}_E
	\end{equation}
	which gives $\HH(X|E) = 0$. Note in this example we did not require Eve to share entanglement with $\rho_A$, although this is natural as the cq-state $\rho_{XE}$ is independent of $\rho_A$. 
\end{example}

Using Definition~\ref{def:int-rand-povm} we can define the maximal intrinsic randomness of the state $\rho_A$ for POVMs (analogous to Definition~\ref{def:max_intrinsinc_randomness-proj}) by maximizing over all POVMs.

\begin{definition}[Maximal intrinsic $\HH$-randomness]
	Let $\rho_A \in \D(A)$ and let $\HH$ be a conditional entropy. Then we define the \emph{maximal intrinsic $\HH$-randomness of $\rho_A$} as
	\begin{equation}
		\maxintrinsicpovm{\HH}{\rho_A} = \sup_{\{M_x\}_x} \intrinsicpovm{\HH}{\rho_A}{\{M_x\}_x}\,.
	\end{equation}
\end{definition}

In a similar fashion to Section~\ref{sec:intrinsic-rand-proj} we now look to find closed form solutions for $\maxintrinsicpovm{\HH}{\rho_A}$ for the special cases where $\HH$ corresponds to a sandwiched Rényi entropy. 

\subsection{Maximal Intrinsic Rényi randomness}

In order to find closed form expressions for $\maxintrinsicpovm{\HH}{\rho_A}$, we need a way of solving the minimization over the Naimark dilations implicit in the definition. Whilst it is not obvious how to solve such a problem in general, it would be sufficient if we could just solve the problem for the subset of measurements $\{M_x\}_x$ that maximize $\intrinsicpovm{\HH}{\rho_A}{\{M_x\}_x}$. The following lemma shows that we can consider rank-one extremal POVMs without loss of generality. 

\begin{lemma}[Reduction to extremal rank-one POVMs]
	\label{lem:extremal_povm_is_optimal}
	Let $\alpha \in [\tfrac12,1) \cup (1,\infty)$, let \(\HH\) be either \(\swHd{\alpha}\) or \(\swHu{\alpha}\) and let be $\rho_A \in \D(A)$. Then
	\begin{equation}
		\maxintrinsicpovm{\HH}{\rho_A} = \sup_{\substack{\mathrm{Extremal~rank-one}\\ \{M_x\}_x}} \intrinsicpovm{\HH}{\rho_A}{\{M_x\}_x}\,.
	\end{equation}
	That is, we can restrict our optimization to extremal rank-one POVMs without loss of generality.
	\begin{proof}
		We begin by proving that $\{M_x\}_x$ can be assumed to be extremal. Suppose that $\{M_x\}_x$ is not extremal, then there exists a convex decomposition
		\begin{equation}
			M_x = \sum_k \lambda_k N_{x,k}
		\end{equation}
		where for each $k$, $\{N_{x,k}\}_x$ is an extremal POVM and $\{\lambda_k\}_k$ form a probability distribution. By Lemma~\ref{lem:convexity} we know that the map  $ 2^{\intrinsicpovm{\HH}{\rho_A}{\{M_x\}_x}}$ is convex on the set of POVMs and hence we have
		\begin{equation}
			2^{\intrinsicpovm{\HH}{\rho_A}{\{M_x\}_x}} \leq \sum_k \lambda_k\,2^{\intrinsicpovm{\HH}{\rho_A}{\{N_{x,k}\}_x}}
			\leq \max_k\,\, 2^{\intrinsicpovm{\HH}{\rho_A}{\{N_{x,k}\}_x}}\,.
		\end{equation}
		Since $\log$ is a monotonically increasing function we therefore have that
		\begin{equation}
			\intrinsicpovm{\HH}{\rho_A}{\{M_x\}_x} \leq \max_k\,\, \intrinsicpovm{\HH}{\rho_A}{\{N_{x,k}\}_x}.
		\end{equation}
		Thus we can always find an extremal POVM that obtains an intrinsic randomness at least as large as any non-extremal POVM and hence we can restrict the optimization to extremal POVMs.
		
		Now suppose $\{M_x\}_x$ is extremal but not rank-one. Consider the spectral decomposition $M_x = \sum_y \lambda_{x,y} \outer{v_{x,y}}$ and define a new rank-one POVM $N_{x,y} = \lambda_{x, y} \outer{v_{x,y}}$ where the $(x,y)$ range only over indexes where $\lambda_{x,y} > 0$. By Lemma~\ref{lem:extremal-to-rankone} this is also an extremal POVM. Now consider any dilation $(\{P_{x,y}\}_{x,y}, \sigma_B)$ for $\{N_{x,y}\}_{x,y}$, if we define $Q_x = \sum_{y} P_{x,y}$ then 
		\begin{equation}
			\ptr{B}{Q_x(\id_A \otimes \sigma_B)} = \sum_y \ptr{B}{P_{x,y}(\id_A \otimes \sigma_B)} = \sum_y N_{x,y} = M_x
		\end{equation}
		thus $(\{Q_x\}_x, \sigma_B)$ is a dilation for $M_x$. But then we have
		\begin{equation}
			\inf_{(\{P_{x,y}\}_{x,y}, \sigma_B)} \HH(XY|E) \geq \inf_{(\{Q_x\}_x, \sigma_B)}\HH(X|E) 
		\end{equation}
		which follows from Lemma~\ref{lem:classical-info-ent}. Overall we have shown that given any POVM we can always find a rank-one extremal POVM that has an intrinsic randomness at least as large. 
 	\end{proof}
\end{lemma}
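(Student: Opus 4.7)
The plan is a two-step reduction: first, restrict the supremum in the definition of $\maxintrinsicpovm{\HH}{\rho_A}$ to extremal POVMs, and then further restrict to extremal rank-one POVMs. The two essential ingredients are (i) a convexity property of $2^{\intrinsicpovm{\HH}{\rho_A}{\{M_x\}_x}}$ viewed as a function of the POVM, and (ii) a ``coarse-graining'' inequality stating that for cq-states and sandwiched R\'enyi conditional entropies, passing from a fine-grained classical register to a coarser one can only decrease the entropy, i.e.\ $\HH(X|E) \le \HH(XY|E)$. I will assume (ii) is available as a separate lemma about the behaviour of $\swHu{\alpha}$ and $\swHd{\alpha}$ under partial trace of a classical system.

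For the extremality step, I would use that every POVM $\{M_x\}_x$ admits a convex decomposition $M_x = \sum_k \lambda_k N_{x,k}$ into extremal POVMs $\{N_{x,k}\}_x$ with $\{\lambda_k\}_k$ a probability distribution. If $\{M_x\}_x \mapsto 2^{\intrinsicpovm{\HH}{\rho_A}{\{M_x\}_x}}$ is convex in the POVM, then the inequality $2^{\intrinsicpovm{\HH}{\rho_A}{\{M_x\}_x}} \le \max_k 2^{\intrinsicpovm{\HH}{\rho_A}{\{N_{x,k}\}_x}}$ follows immediately, and monotonicity of $\log$ yields the desired bound. To obtain this convexity I would construct a joint Naimark dilation for $\{M_x\}_x$ out of the individual dilations $(\{P^{(k)}_x\}_x, \sigma^{(k)}_B)$ by introducing an auxiliary classical register $K$ prepared in $\sum_k \lambda_k \outer{k}$ on the measurement side, so that the resulting post-measurement cq-state is a convex mixture of the individual post-measurement cq-states. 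The required bound on $2^{\HH(X|E)}$ for such classical mixtures should then follow from known joint convexity/concavity properties of the sandwiched R\'enyi divergences in the relevant $\alpha$ range.

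For the rank-one reduction, I would spectrally decompose each extremal $M_x = \sum_y \lambda_{x,y} \outer{v_{x,y}}$ and consider the refinement $N_{x,y} = \lambda_{x,y} \outer{v_{x,y}}$, discarding zero eigenvalues. A side lemma (which I would prove separately) states that this spectral refinement of an extremal POVM is again extremal. Any Naimark dilation $(\{P_{x,y}\}_{x,y}, \sigma_B)$ of $\{N_{x,y}\}_{x,y}$ coarsens to a dilation $(\{\sum_y P_{x,y}\}_x, \sigma_B)$ of $\{M_x\}_x$, so every feasible dilation for $\{M_x\}_x$ can be compared with a corresponding fine-grained dilation for $\{N_{x,y}\}_{x,y}$. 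Combined with ingredient (ii), this gives $\intrinsicpovm{\HH}{\rho_A}{\{M_x\}_x} \le \intrinsicpovm{\HH}{\rho_A}{\{N_{x,y}\}_{x,y}}$ and completes the reduction.

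The hardest step, and the one I would tackle first, is establishing convexity of $2^{\intrinsicpovm{\HH}{\rho_A}{\cdot}}$ in the POVM. The subtlety is that $\intrinsicpovm{\HH}{\rho_A}{\cdot}$ is itself an infimum over Naimark dilations and purifications, so one has to produce a joint dilation for the convex mixture from the individual optimizers and carefully track how $\HH(X|E)$ behaves under the resulting classical mixing of post-measurement cq-states. Once this convexity is in hand, the remaining pieces (extremal decomposition of POVMs, spectral refinement preserving extremality, and the coarse-graining inequality for $\swHu{\alpha}$ and $\swHd{\alpha}$) slot in routinely.
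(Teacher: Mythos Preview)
Your proposal is correct and follows essentially the same two-step reduction as the paper: convexity of $2^{\intrinsicpovm{\HH}{\rho_A}{\cdot}}$ (proved via a joint dilation with a classical flag, as in the paper's Lemma~\ref{lem:convexity}) handles the reduction to extremal POVMs, and spectral refinement together with the coarse-graining inequality $\HH(X|E)\le\HH(XY|E)$ (the paper's Lemmas~\ref{lem:extremal-to-rankone} and~\ref{lem:classical-info-ent}) handles the reduction to rank-one. The only detail to watch in your convexity sketch is that the classical flag $K$ must also be copied to Eve's side (and then a purifying system discarded via strong subadditivity), since the relevant identity for $2^{\HH(X|EK)}$ under classical mixtures requires $K$ to sit in the conditioning system; with that adjustment your outline matches the paper's proof exactly.
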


Thus we have reduced the optimization to extremal rank-one POVMs. It turns out, as the following lemma implies, that for extremal POVMs the infimum over Naimark dilations is trivial, in the sense that we can without loss of generality assume that $B$ is entangled with neither $A$ nor $E$. The proof of this lemma is deferred to Appendix~\ref{app:max-sw-povm} (see Lemma~\ref{lem:separable-probe-state-app}).

\begin{lemma}
	\label{lem:separable-probe-state}
	Let $\left\{M_x\right\}_x$ be an extremal rank-one POVM on a system $A$, let $\left(\left\{P_x\right\}_x, \rho_B\right)$ be any dilation of $\{M_x\}_x$ and let $\rho_{ABE} \in \D(ABE)$ be any pure state with marginals $\rho_A$ and $\rho_B$. Then the post-measurement state
	\begin{equation}
		\rho_{XE} = \sum_x \outer{x} \otimes \ptr{AB}{(P_x \otimes \id_E) \rho_{ABE}}
	\end{equation}
	can be written as 
	\begin{equation}
		\rho_{XE} = \sum_x \outer{x} \otimes \ptr{A}{(M_x \otimes \id_E)\rho_{AE}}\,.
	\end{equation}
\end{lemma}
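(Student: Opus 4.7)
The strategy is to show that the post-measurement state on $E$ depends only on the POVM $\{M_x\}_x$ and on the reduced state $\rho_{AE} \defed \ptr{B}{\rho_{ABE}}$. Concretely, with this choice of $\rho_{AE}$ I will establish that on the subspace $A \otimes \supp(\rho_B)$ the projector $P_x$ and the operator $M_x \otimes \id_B$ agree as bilinear forms, which makes it legitimate to swap one for the other inside the partial trace. The argument separates what rank-oneness alone gives (a block structure for $P_x$) from what extremality adds (a scalar form on the $B$-factor), and uses the purity of $\rho_{ABE}$ only at the very last step.

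First I would exploit the consistency condition in conjunction with rank-oneness. Writing $M_x = \lambda_x \outer{\psi_x}$ and spectrally decomposing $\rho_B = \sum_i p_i \outer{i}$ on its support, the identity $\sum_i p_i \bra{i}_B P_x \ket{i}_B = \lambda_x \outer{\psi_x}$ expresses a rank-one operator as a positive combination of positive semidefinite operators, so a standard kernel argument forces each $\bra{i}_B P_x \ket{i}_B$ to be proportional to $\outer{\psi_x}$. Applying the same reasoning to the eigenvectors $\ket{\chi_{x,k}}$ appearing in the spectral decomposition of $P_x$ itself shows that $\bra{i}_B\ket{\chi_{x,k}} \in \mathrm{span}(\ket{\psi_x})$ for every $i \in \supp(\rho_B)$. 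Hence on $A \otimes \supp(\rho_B)$ the projector factorizes as $P_x = \outer{\psi_x} \otimes Q_{x,B}$ for some $Q_{x,B} \in \Pos(B)$ supported on $\supp(\rho_B)$.

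Next I would use completeness of $\{P_x\}_x$ together with extremality of $\{M_x\}_x$ to pin down $Q_{x,B}$. Restricting $\sum_x P_x = \id_{AB}$ to $A \otimes \supp(\rho_B)$ and sandwiching with an arbitrary unit vector $\ket{\beta} \in \supp(\rho_B)$ on $B$ yields the POVM identity $\sum_x \braopket{\beta}{Q_{x,B}}{\beta} \outer{\psi_x} = \id_A$. Extremality of $\{M_x\}_x$ is equivalent to linear independence of its effects $\{\outer{\psi_x}\}_x$, so the decomposition of $\id_A$ in terms of them is unique; comparison with $\sum_x \lambda_x \outer{\psi_x} = \id_A$ forces $\braopket{\beta}{Q_{x,B}}{\beta} = \lambda_x$ for every such $\ket{\beta}$. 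Polarization then gives $Q_{x,B} = \lambda_x \id$ on $\supp(\rho_B)$, so that $P_x$ and $M_x \otimes \id_B$ indeed agree as bilinear forms on $A \otimes \supp(\rho_B)$. I expect this extremality step to be the main obstacle of the proof: it is exactly what fails for non-extremal POVMs (compare Example~\ref{ex:trivial}), and it is what ultimately forbids the measurement device's share of the entanglement from leaking information about $X$ to $E$.

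The conclusion is then almost immediate. Since $\rho_{ABE}$ is pure with $B$-marginal $\rho_B$, its Schmidt decomposition lies in $A \otimes \supp(\rho_B) \otimes E$, so the bilinear equality above allows us to substitute inside the partial trace to obtain
\begin{equation}
	\ptr{AB}{(P_x \otimes \id_E)\rho_{ABE}} = \ptr{AB}{(M_x \otimes \id_B \otimes \id_E)\rho_{ABE}} = \ptr{A}{(M_x \otimes \id_E)\rho_{AE}},
\end{equation}
with $\rho_{AE} = \ptr{B}{\rho_{ABE}}$. Summing against $\outer{x}$ then yields the claimed form of $\rho_{XE}$.
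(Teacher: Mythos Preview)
Your proof is correct and reaches the same endpoint as the paper's argument, namely that $\Pi P_x \Pi = M_x \otimes \Pi_B$ with $\Pi_B$ the projector onto $\supp(\rho_B)$, after which the purity of $\rho_{ABE}$ finishes the job. The organisation, however, differs in a way worth noting. The paper first invokes an auxiliary lemma (using extremality) to show that each eigenvector $\ket{\tau_k}$ of $\rho_B$ already yields a valid dilation, then works coordinate-by-coordinate in a fixed basis of $AB$: it uses positivity of $P_x$ to kill the off-diagonal-in-$A$ coefficients, and finally uses $\sum_x P_x = \id$ together with linear independence of the $M_x$ to kill the off-diagonal-in-$B$ coefficients. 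Your route is more economical: you obtain the tensor form $\Pi P_x \Pi = \outer{\psi_x} \otimes Q_{x,B}$ directly from the consistency condition via a kernel argument, using only rank-oneness and positivity (no extremality and no auxiliary lemma), and invoke extremality exactly once, through linear independence, to identify $Q_{x,B}$ with $\lambda_x \Pi_B$. The gain is a cleaner separation of the roles of rank-oneness and extremality and the avoidance of any coordinate bookkeeping; the paper's version, on the other hand, makes the structure of $P_x$ outside the support more explicit via the remainder term $R$, though that extra information is not actually needed for the conclusion.
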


Note that due to the a priori potential for entanglement between $ABE$ the above lemma is not trivial. For instance, if we take Example~\ref{ex:trivial} then the considered POVM is not extremal and the above statement does not hold. Explicitly, we find that
\begin{equation}
	\ptr{AB}{(P_x\otimes \id_E) \rho_{ABE}} = \outer{x}_E \qquad \neq \qquad \ptr{A}{(M_x \otimes \id_E)} = \tfrac1n \rho_E\,.
\end{equation}
Overall, the above lemma implies that for extremal rank-one POVMs we can completely ignore the optimization over Naimark dilations. Combined with Lemma~\ref{lem:extremal_povm_is_optimal}, this reduces the problem of computing $\maxintrinsicpovm{\HH}{\rho_A}$ significantly. In effect, we have shown that 
\begin{equation}
\begin{aligned}
		\maxintrinsicpovm{\HH}{\rho_A} = \sup& \quad \HH(X|E) \\
				\mathrm{s.t.}& \quad \rho_{XE} = \sum_x \outer{x} \otimes \ptr{A}{(M_x \otimes \id) \outer{\rho}_{AE}} \\
				& \quad \{M_x\}_x \text{ is an extremal rank-one POVM}
\end{aligned}
\end{equation}
where $\ket{\rho}_{AE}$ is any purification of $\rho_A$.
This is effectively the same problem as was solved in Section~\ref{sec:intrinsic-rand-proj} but we optimize over all extremal rank-one POVMs rather than just rank-one PVMs. This gives us enough structure to finally solve the optimization analytically. This results in the following theorem, the proof of which is deferred to Appendix~\ref{app:max-sw-povm}.
\begin{theorem}\label{thm:int-rand-povm-sw}
	Let $\rho_A \in \D(A)$ and let $d_A$ be the dimension of system $A$. Then we have, for all $\alpha > 1$,
	\begin{equation}
		\maxintrinsicpovm{\swHu{\alpha}}{\rho_A} = 2\log(d_A) - H_{\frac{\alpha}{2\alpha-1}}(A)\,.
	\end{equation}
	Moreover, this quantity is achieved by an extremal rank-one POVM $\{M_x\}_x$ if and only if
	\begin{equation}
		\frac{\tr{\rho_A^{\frac{\alpha}{2\alpha-1}}M_x}}{\tr{\rho_A^{\frac{\alpha}{2\alpha-1}}}}= \frac{1}{d_A^2}\,
	\end{equation}
	for all $x \in [d_A^2]$. Additionally, for all $\alpha \in [1/2,1) \cup (1,\infty)$, we have
	\begin{equation}
		\begin{aligned}
			\maxintrinsicpovm{\swHd{\alpha}}{\rho_A} = 2\log(d_A) - H_{\frac{1}{\alpha}}(A)		\,.
		\end{aligned}
	\end{equation}
	Moreover, this quantity is achieved by an extremal rank-one POVM $\{M_x\}_x$ if and only if
	\begin{equation}
		\frac{\tr{\rho_A^{1/\alpha}M_x}}{\tr{\rho_A^{1/\alpha}}} = \frac{1}{d_A^2}\,
	\end{equation}
	for all $x \in [d_A^2]$.
\end{theorem}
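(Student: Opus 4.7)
The plan is to leverage Lemma~\ref{lem:extremal_povm_is_optimal} and Lemma~\ref{lem:separable-probe-state}, which together reduce $\maxintrinsicpovm{\HH}{\rho_A}$ to a supremum of $\HH(X|E)$ over extremal rank-one POVMs $\{M_x = c_x\proj{\phi_x}\}_x$ with $B$ decoupled from $AE$. Fixing a purification $\ket{\rho}_{AE}$ of $\rho_A$ and working in the Schmidt basis, the conditional states $\rho_{E,x} = c_x\proj{\sqrt{\rho_A}\phi_x}$ are rank-one (up to a basis-dependent transpose that leaves all sandwiched Rényi quantities invariant). Substituting into the definition of $\swD{\alpha}$ with $q := (1-\alpha)/\alpha$ and any $\sigma\in\D(E)$, and using the rank-one identity $\tr{(\sigma^{q/2}\proj{v}\sigma^{q/2})^\alpha} = \bra{v}\sigma^q\ket{v}^\alpha$, the relevant sum collapses to
\begin{equation*}
	\sum_x\tr{(\sigma^{q/2}\rho_{E,x}\sigma^{q/2})^\alpha} = \sum_x q_x^\alpha, \qquad q_x := c_x\bra{\phi_x}\sqrt{\rho_A}\sigma^q\sqrt{\rho_A}\ket{\phi_x}.
\end{equation*}
Setting $A := \sqrt{\rho_A}\sigma^q\sqrt{\rho_A}$, the POVM identity forces $\sum_x q_x = \tr A$, and extremality caps the number of outcomes at $d_A^2$. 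Jensen's inequality applied to $t\mapsto t^\alpha$ (convex for $\alpha>1$, concave for $\alpha\in[1/2,1)$) then bounds $\sum_x q_x^\alpha$ by $(\tr A)^\alpha/d_A^{2(\alpha-1)}$ in the direction that optimises the associated entropy, with equality iff $q_x = \tr A/d_A^2$ for all $x\in[d_A^2]$.

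For $\swHu{\alpha}$, it remains to minimise $\tr A = \tr{\rho_A\sigma^q}$ over $\sigma\in\D(E)$. A standard trace inequality shows the minimiser is diagonal in the eigenbasis of $\rho_A$, and a Lagrange-multiplier computation on the eigenvalues then gives the unique minimiser $\sigma^\star = \rho_A^\beta/\tr{\rho_A^\beta}$ with $\beta = \alpha/(2\alpha-1)$, yielding $\tr A = \left(\tr{\rho_A^\beta}\right)^{(2\alpha-1)/\alpha}$. Substituting back and taking $-(\alpha-1)^{-1}\log$ produces the advertised value $2\log d_A - H_\beta(A)$, and the equality condition $q_x = \tr A/d_A^2$ transcribes (since $A^\star\propto\rho_A^\beta$) into the stated $\tr{\rho_A^\beta M_x}/\tr{\rho_A^\beta} = 1/d_A^2$. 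For $\swHd{\alpha}$, $\sigma$ is replaced by $\rho_E$, which has the same spectrum as $\rho_A$, so $A = \rho_A^{1+q} = \rho_A^{1/\alpha}$ and the same Jensen step yields $2\log d_A - H_{1/\alpha}(A)$ together with the condition $\tr{\rho_A^{1/\alpha}M_x}/\tr{\rho_A^{1/\alpha}} = 1/d_A^2$.

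The hardest step will be showing that the equality conditions are actually attainable, i.e.\ exhibiting for any positive operator $C\in\{\rho_A^\beta,\rho_A^{1/\alpha}\}$ an extremal rank-one POVM $\{c_x\proj{\phi_x}\}_{x=1}^{d_A^2}$ with $c_x\bra{\phi_x}C\ket{\phi_x} = \tr C/d_A^2$. The natural candidate is $\ket{\phi_x}\propto C^{-1/2}\ket{w_x}$ and $c_x = (\tr C/d_A^2)\bra{w_x}C^{-1}\ket{w_x}$, where $\{\ket{w_x}\}_{x=1}^{d_A^2}$ is a linearly independent tight unit frame obeying $\sum_x\proj{w_x} = (d_A^2/\tr C)\,C$; direct substitution verifies both the POVM normalisation and the equality condition, while extremality follows from the linear independence of the $\proj{w_x}$'s, preserved under conjugation by the invertible $C^{-1/2}$. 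Establishing the existence of such a frame for every positive $C$, together with the treatment of rank-deficient $\rho_A$ (where one must restrict to the support of $\rho_A$), is the most delicate part of the construction and is where the bulk of the appendix argument will live.
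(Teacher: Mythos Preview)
Your overall plan is sound: the reductions via Lemma~\ref{lem:extremal_povm_is_optimal} and Lemma~\ref{lem:separable-probe-state}, the rank-one collapse $\tr{(\sigma^{q/2}\proj{v}\sigma^{q/2})^\alpha}=\bra{v}\sigma^q\ket{v}^\alpha$, the Jensen step with the $d_A^2$-outcome cap, and the identification of the equality condition are all correct. Two points of departure from the paper are worth noting.

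\textbf{Optimising over $\sigma$.} The paper does not solve $\inf_\sigma\tr{\rho_A\sigma^q}$ by Lagrange multipliers. Instead, Lemma~\ref{lem:sw-ent-form} sandwiches $\swHu{\alpha}(X|E)$ between the upper bound $\swHd{2-1/\alpha}(X|E)$ (from~\cite{TBH14}) and the lower bound obtained by simply plugging in $\sigma=\rho_E^{\beta}/\tr{\rho_E^\beta}$ with $\beta=\alpha/(2\alpha-1)$; these coincide, which both pins down the optimal $\sigma$ and gives the closed form in one stroke. Your direct optimisation is equally valid and more self-contained, but you should check that the Lagrange critical point is a genuine minimum (this follows from convexity of $s\mapsto s^q$ for $q<0$).

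\textbf{Attaining the supremum.} Here there is a real gap in your sketch. The paper does \emph{not} exhibit an achieving extremal POVM. It takes the (non-extremal) POVM $M_{d_Ax+y}=d_A^{-2}\outer{y}$ with $\{\ket{y}\}$ mutually unbiased to the eigenbasis of $\rho_A$, observes that it satisfies the uniformity condition, and then invokes a density result (Lemma~\ref{lem:density-rank-one-povm}) to approximate it by a \emph{sequence} of extremal rank-one POVMs. This establishes only the value of the supremum; the paper is explicit afterwards that achievability is left open in general. Your frame construction, if it worked, would prove strictly more---but ``restricting to the support of $\rho_A$'' in the rank-deficient case is wrong: it yields at most $2\log\rk(\rho_A)-H_\beta(A)$, whereas the theorem claims $2\log d_A - H_\beta(A)$. (Take $\rho_A$ pure: the correct value is $2\log d_A$, which cannot be reached by any POVM supported on a one-dimensional subspace.) Even in the full-rank case you still need the existence of a unit frame with the prescribed frame operator \emph{and} linearly independent rank-one projectors, which is not immediate. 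If your goal is to match the paper's theorem, the limiting argument is both simpler and sufficient; the ``if and only if'' characterisation then follows directly from the entropy formula $H_\alpha(X'')-H_\beta(A)$, since $H_\alpha(X'')=2\log d_A$ precisely when $X''$ is uniform on $d_A^2$ points.
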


It is worth making a few remarks about this result. Firstly, we have that 
\begin{equation}
	\maxintrinsicpovm{\HH}{\rho_A} = \log(d_A) + \maxintrinsicpvm{\HH}{\rho_A}\,.
\end{equation}
This increase in the maximal amount of intrinsic randomness is due to the fact that extremal POVMs can have up to $d_A^2$ outcomes whereas projective measurements have at most $d_A$ outcomes. Furthermore, whilst we did not consider Eve to have side information on the measurement device when deriving the result for $\maxintrinsicpvm{\HH}{\rho_A}$, our construction shows that even if we were to include this side information in the analysis of Section~\ref{sec:intrinsic-rand-proj} then this would not change the value of $\maxintrinsicpvm{\HH}{\rho_A}$. This is because rank-one projective measurements are extremal and hence the above reductions apply. This shows that the results of~\cite{MCSWFSA23} also hold in the presence of side-information with the measurement device.

Another interesting point is that whilst $\maxintrinsicpvm{\HH}{\rho_A}$ can be equal to 0 (e.g. for the maximally mixed state) we instead have that $\maxintrinsicpovm{\HH}{\rho_A} \geq \log(d_A)$. Thus for all quantum states, it is possible to extract a significant amount of randomness from them. However, it should be noted that achieving this theoretical maximum may be extremely challenging in practice this is because, whilst the maximal intrinsic randomness  $\maxintrinsicpovm{\HH}{\rho_A}$ is continuous in $\rho_A$, the intrinsic randomness $\intrinsicpovm{\HH}{\rho_A}{\{M_x\}_x}$ is not continuous on the set of POVMs for a fixed state $\rho_A$. 

The possibility for discontinuities is evident from the proof of Theorem~\ref{thm:int-rand-povm-sw} (see Theorem~\ref{thm:maxint-app} for the proof). In the proof we establish the existence of a sequence of extremal rank-one POVMs that achieve an intrinsic randomness rate approaching $\maxintrinsicpovm{\HH}{\rho_A}$. However, the limiting POVM for this sequence is not extremal and actually has an intrinsic randomness rate far lower than $\maxintrinsicpovm{\HH}{\rho_A}$. Thus our proof says nothing about the achievability of $\maxintrinsicpovm{\HH}{\rho_A}$. Nevertheless we can create examples in which it is achievable, for instance consider a qubit case. 

\begin{example}[Optimal POVMs for qubits]\label{ex:opt-qubit}
Let us define a parameterized POVM $\{\outer{\psi_x(t)}\}_x$ with $t \in [1/2, 1]$ and
\begin{equation}
	\begin{aligned}
		\ket{\psi_0(t)} &= \frac{1}{2\sqrt{t}}\,\ket{0}\\
		\ket{\psi_1(t)} &= \sqrt{\frac{4 t - 1}{12 t}}\,\ket{0}+\sqrt{\frac13}\,\ket{1}\\
		\ket{\psi_2(t)} &= \sqrt{\frac{4 t - 1}{12 t}}\,\ket{0}+\sqrt{\frac13}\mathrm{e}^{\frac{2\mathrm{i}\pi}{3}}\,\ket{1}\\
		\ket{\psi_3(t)} &= \sqrt{\frac{4 t - 1}{12 t}}\,\ket{0}+\sqrt{\frac13}\mathrm{e}^{\frac{4\mathrm{i}\pi}{3}}\,\ket{1}\,.
	\end{aligned}
\end{equation}
One can show that this is an extremal POVM for all $t \in [1/2, 1]$. Now consider the qubit state $\rho_A = \lambda \outer{0} + (1-\lambda)\outer{1}$ where without loss of generality we assume that $\rho_A$ is diagonal in the computational basis and furthermore that $\lambda \geq 1/2$. Then a short calculation shows that for $\alpha > 1$ we can choose $t = \frac{\lambda^{\tfrac{\alpha}{2\alpha-1}}}{\lambda^{\tfrac{\alpha}{2\alpha-1}} + (1-\lambda)^{\tfrac{\alpha}{2\alpha-1}}} \in [1/2, 1]$ and this measurement achieves 
\begin{equation}
	\intrinsicpovm{\swHu{\alpha}}{\rho_A}{\{M_x(t)\}} = \maxintrinsicpovm{\swHu{\alpha}}{\rho_A}\,.
\end{equation}
I.e., there exists a choice of POVM that achieves the maximal intrinsic randomness as measured by $\swHu{\alpha}$. A similar calculation shows that for all $\alpha \in [1/2,1) \cup (1, \infty)$, if we choose $t = \lambda^{1/\alpha}/(\lambda^{1/\alpha} + (1-\lambda)^{1/\alpha}) \in [1/2, 1]$ then 
\begin{equation}
	\intrinsicpovm{\swHd{\alpha}}{\rho_A}{\{M_x(t)\}} = \maxintrinsicpovm{\swHd{\alpha}}{\rho_A}\,.
\end{equation}
and hence the maximal intrinsic randomness as measured with respect to $\swHd{\alpha}$ can be achieved by an explicit qubit POVM. Overall this shows that for a qubit system, the maximal intrinsic randomness quantities can always be achieved.
\end{example}

Discontinuities in $\intrinsicpovm{\HH}{\rho_A}{\{M_x\}_x}$ implies an interesting physical problem. In particular, the POVMs used in the proof of Theorem~\ref{thm:int-rand-povm-sw} converge to a discontinuity and hence to get high intrinsic randomness rates with these POVMs may require extreme precision in order to not encounter the discontinuity. It would be interesting to understand two-fold:
\begin{enumerate}
	\item Are there POVMs for all states $\rho_A$ that achieve $\maxintrinsicpovm{\HH}{\rho_A}$? 
	\item Are there POVMs that achieve a high intrinsic randomness rate $\intrinsicpovm{\HH}{\rho_A}{\{M_x\}_x}$ and are far from any discontinuities?
\end{enumerate}
In particular a positive and constructive answer to the second question would provide a much more experimentally-friendly method of producing high randomness rates as it would be able to tolerate natural imprecisions in the measurement device. 

\section{Application: How much cryptographically-secure randomness can you extract from $\rho_A$?}\label{sec:application}
Conditional entropies allow us to quantify the amount of information a potential eavesdropper may have about our measurement outcome. However, if we consider randomness generation as a cryptographic protocol then we look to satisfy some formal security definition (see~\cite{PR22} for details). If $\rho_{ZE}$ represents the final state of our randomness generation protocol, where $Z$ is our random string and $E$ is a potential adversary's system, then we say the protocol is $\epsilon$-secure (for some $\epsilon > 0$) if 
\begin{equation}\label{eq:sec}
	\frac12 \|\rho_{ZE} - \frac{\id_Z}{d_Z} \otimes \rho_E\|_1 \leq \epsilon
\end{equation}
where $d_Z$ is the dimension of the $Z$ register.\footnote{Here we are assuming the protocol never aborts.} I.e., the final state of the protocol $\rho_{ZE}$ should be basically indistinguishable from secret, uniform randomness.

For a general quantum state $\rho_A$, the $\rho_{XE}$ corresponding to the cq-state after measuring a POVM $\{M_x\}_x$ on $A$ will not satisfy the above definition. This is because the adversary will still, in general, have some partial knowledge about $X$. However, a procedure known as randomness extraction (or privacy amplification) allows us to extract the randomness from the large partially random $X$ system into a new $Z$ system that is $\epsilon$-close to uniformly random for any $\epsilon>0$ of our choosing (i.e. satisfies the security definition in~\eqref{eq:sec}). The amount of uniform randomness we can extract, i.e. the number of bits in the system $Z$, is quantified by certain conditional entropies. In particular we have the following result of~\cite{D21} which shows that the sandwiched Rényi entropies capture an achievable rate of randomness extraction. 
\begin{lemma}[Dupuis~\cite{D21}]\label{lem:dupuis}
	Let $\mathcal{F}$ be a family of $2$-universal hash functions from $X$ to $Z$. Then for any $\alpha \in (1,2]$ and cq-state $\rho_{XE} \in \D(XE)$
	\begin{equation}
		\frac12 \| \rho_{\cF(XS)E} - \frac{\id_Z}{d_Z} \otimes \rho_{SE}\|_1 \leq 2^{2/\alpha -2} 2^{\frac{\alpha-1}{\alpha} (\log(d_Z) - \swHu{\alpha}(X|E))}
	\end{equation}
	where $S$ is the seed used to sample the family. 
\end{lemma}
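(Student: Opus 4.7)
The plan is to derive this Leftover Hash Lemma as a Rényi-type interpolation of the classical collision-entropy argument (which is the special case $\alpha = 2$) up to general $\alpha \in (1,2]$, by combining a sandwiched non-commutative Hölder inequality with the 2-universality of $\cF$. The proof has three stages; the technical heart is the second.

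\emph{Stage 1 --- Hölder reduction.} Set $\omega = \rho_{\cF(XS)E}$ (with the classical seed register $S$ kept) and $\tau = \id_Z/d_Z \otimes \rho_{SE}$. For any reference state $\sigma_E \in \D(E)$ define the sandwich operator $T = \id_{ZS} \otimes \sigma_E^{(1-\alpha)/(2\alpha)}$. A sandwiched non-commutative Hölder inequality, applied with conjugate exponents $\alpha$ and $\alpha/(\alpha-1)$ to the reference operator $\eta = \id_Z/d_Z \otimes \rho_S \otimes \sigma_E$, bounds $\|\omega - \tau\|_1$ by a prefactor of order $d_Z^{(\alpha-1)/\alpha}$ times $\mathrm{Tr}[(T(\omega-\tau)T)^\alpha]^{1/\alpha}$. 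This converts the $L_1$ distance into the non-commutative $L_\alpha$-norm naturally controlled by the sandwiched Rényi divergence $\swD{\alpha}(\omega\|\eta)$.

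\emph{Stages 2--3 --- 2-universality and optimization.} Expanding $T(\omega-\tau)T$ in the computational basis of $Z$ makes the $\alpha$-th power trace into a sum indexed by pairs $(x,x')$ of preimages of each $z$-value. The 2-universality property $\Pr_S[\cF(x,S) = \cF(x',S)] \leq 1/d_Z$ for $x \neq x'$ suppresses the off-diagonal $(x \neq x')$ contributions by a factor of $1/d_Z$ after averaging over the seed $S$. Combined with an Araki--Lieb--Thirring / operator-convexity step used to pull the seed-expectation inside the $\alpha$-th power, this yields $\mathbb{E}_S \mathrm{Tr}[(T(\omega-\tau)T)^\alpha] \leq d_Z^{-(\alpha-1)}\, \mathrm{Tr}[(\sigma_E^{(1-\alpha)/(2\alpha)} \rho_{XE} \sigma_E^{(1-\alpha)/(2\alpha)})^\alpha]$. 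Infimizing over $\sigma_E \in \D(E)$ turns the right-hand side into $2^{-(\alpha-1) \swHu{\alpha}(X|E)}$ by definition of the sandwiched conditional Rényi entropy, and substitution back into the bound from Stage~1 (together with taking the $1/\alpha$-th power and tracking the constants) produces the claimed inequality.

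\emph{Main obstacle.} The hardest step is Stage~2. For $\alpha \in (1,2]$ the scalar map $t \mapsto t^\alpha$ is convex but not operator monotone, and the operators arising from different $(x,x')$ pairs do not commute, so pulling $\mathbb{E}_S$ inside the $\alpha$-th power is non-trivial. For $\alpha = 2$ this reduces to an elementary $\mathrm{Tr}[(\cdot)^2]$ expansion (the standard Leftover Hash Lemma with collision entropy), but the intermediate regime requires either a complex-interpolation argument in the style of Beigi/Hirschman or a careful Araki--Lieb--Thirring manipulation; this is where the bulk of Dupuis's argument lies. A secondary difficulty is keeping the prefactor exactly equal to $2^{2/\alpha - 2}$ throughout the manipulations, which requires being attentive to the choice of the Hölder decomposition and to any loss incurred when moving the seed expectation inside the trace.
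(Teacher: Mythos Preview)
The paper does not give its own proof of this statement: Lemma~\ref{lem:dupuis} is simply quoted from~\cite{D21} and used as a black box in the proof of Theorem~\ref{thm:application}. There is therefore no ``paper's own proof'' to compare your proposal against.

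For what it is worth, your three-stage outline (H\"older reduction to a sandwiched $L_\alpha$-norm, exploitation of 2-universality to suppress the cross terms, and optimization over $\sigma_E$ to produce $\swHu{\alpha}(X|E)$) is indeed the shape of Dupuis's original argument, and you correctly identify the non-trivial point: for $\alpha\in(1,2)$ the map $t\mapsto t^\alpha$ is not operator monotone, so getting the seed average past the $\alpha$-th power requires an interpolation argument rather than a direct ALT-type bound. But since the present paper treats the lemma as an imported result, your write-up here should simply cite~\cite{D21} rather than attempt to reproduce that proof.
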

This lemma indicates that we can extract approximately $\swHu{\alpha}(X|E)$ uniformly random bits from $X$. In particular if we fix $\alpha$ and choose the output length of the extractor to be $\swHu{\alpha}(X|E) + \frac{\alpha}{\alpha-1} \log(\epsilon)$ then the output of the extractor is $\epsilon$-secure according to the above definition. This leads to the main result of this work.
\begin{theorem}[Main result]\label{thm:application}
	Let $\epsilon > 0$, $\rho_A \in \D(A)$ then it is possible to extract $l$ bits of $\epsilon$-secure randomness from $\rho_A$ for any $l\in \NN$ satisfying 
	\begin{equation}\label{eq:eps-rand}
		l < 2\log(d_A) - \min_{\alpha \in (1,2]} \left(H_{\frac{\alpha}{2\alpha-1}}(A) - \frac{\alpha}{\alpha-1} \log(\epsilon) \right)\,.
	\end{equation}
	Moreover, given $n \in \NN$ copies of $\rho_A$ it is possible to extract $l_n$ bits of $\epsilon$-secure randomness from $\rho_A^{\ox n}$ for any $l_n \in \NN$ satisfying
	\begin{equation}\label{eq:eps-rand-rates}
		\frac{l_n}{n} <  2\log(d_A) -  \min_{\alpha \in (1,2]} \left(H_{\frac{\alpha}{2\alpha-1}}(A) - \frac{\alpha}{n(\alpha-1)} \log(\epsilon) \right)\,.
	\end{equation}
\begin{proof}
	By Theorem~\ref{thm:int-rand-povm-sw} for any $\delta > 0$ and $\alpha \in (1,2]$ there exists a POVM $\{M_x\}_x$ such that the post-measurement state $\rho_{XE}$ satisfies 
	\begin{equation}\label{eq:sw-bound}
		\swHu{\alpha}(X|E) \geq 2 \log(d_A) - H_{\frac{\alpha}{2\alpha-1}}(A) - \delta\,.
	\end{equation}
	Thus, if we choose the output length of the extractor to be any $l \leq  2\log(d_A) - H_{\frac{\alpha}{2\alpha-1}}(A) - \delta + \frac{\alpha}{\alpha-1} \log(\epsilon)$, then by Lemma~\ref{lem:dupuis} we have
	\begin{equation}
\begin{aligned}
			\frac12 \| \rho_{\cF(XS)E} - \frac{\id_Z}{d_Z} \otimes \rho_{SE}\|_1 &\leq 2^{2/\alpha -2} 2^{\frac{\alpha-1}{\alpha} (\log(d_Z) - \swHu{\alpha}(X|E))} \\
			&\leq 2^{2/\alpha -2} 2^{\frac{\alpha-1}{\alpha} (2\log(d_A) - H_{\frac{\alpha}{2\alpha-1}}(A) -\delta + \frac{\alpha}{\alpha-1} \log(\epsilon) - \swHu{\alpha}(X|E))} \\
			&\leq 2^{2/\alpha -2} 2^{\frac{\alpha-1}{\alpha} ( \frac{\alpha}{\alpha-1} \log(\epsilon) )} \\
			&\leq \epsilon\,,
\end{aligned}
	\end{equation}
where on the second line we use the assumption $\log(d_Z) \leq 2\log(d_A) - H_{\frac{\alpha}{2\alpha-1}}(A)- \delta + \frac{\alpha}{\alpha-1} \log(\epsilon)$, on the third line we use~\eqref{eq:sw-bound} and on the final line we note that $2^{2/\alpha -2} \leq 1$ for all $\alpha \in (1,2]$. As this holds for all $\delta > 0$ the result follows. The second statement of the corollary follows from the same argument noting that $\maxintrinsicpovm{\HH}{\rho_A^{\otimes n}} = n \maxintrinsicpovm{\HH}{\rho_A} $.
\end{proof} 
\end{theorem}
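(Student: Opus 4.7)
The plan is to combine the closed-form characterization of maximal intrinsic sandwiched Rényi randomness from Theorem~\ref{thm:int-rand-povm-sw} with Dupuis's leftover-hash bound from Lemma~\ref{lem:dupuis}. The structure is the standard two-step randomness-extraction argument: first produce a cq-state with as much $\swHu{\alpha}(X|E)$ as possible by choosing the right POVM, then feed that cq-state into a $2$-universal family of hash functions and read off how many bits come out $\epsilon$-close to uniform.

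Concretely, I would fix an arbitrary $\alpha \in (1,2]$ and a tolerance $\delta > 0$. Theorem~\ref{thm:int-rand-povm-sw} produces an (extremal rank-one) POVM $\{M_x\}_x$ and a purifying $\rho_{AE}$ such that the resulting post-measurement state $\rho_{XE}$ satisfies $\swHu{\alpha}(X|E) \geq 2\log(d_A) - H_{\frac{\alpha}{2\alpha-1}}(A) - \delta$. Feeding this $\rho_{XE}$ into Lemma~\ref{lem:dupuis} with an output register $Z$ of size $\log(d_Z)=l$, I want to pick $l$ as large as possible while keeping the resulting trace distance at most $\epsilon$. Solving $2^{2/\alpha-2}\cdot 2^{\frac{\alpha-1}{\alpha}(l - \swHu{\alpha}(X|E))} \leq \epsilon$ for $l$ and using the simple prefactor bound $2^{2/\alpha-2}\leq 1$ (valid exactly because $\alpha\geq 1$) shows that any $l \leq \swHu{\alpha}(X|E) + \tfrac{\alpha}{\alpha-1}\log(\epsilon)$ is feasible. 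Substituting the entropy lower bound and sending $\delta\to 0$ gives that every integer $l < 2\log(d_A) - H_{\frac{\alpha}{2\alpha-1}}(A) + \tfrac{\alpha}{\alpha-1}\log(\epsilon)$ is achievable for this particular $\alpha$. Because the protocol designer is free to choose $\alpha$, I take the supremum over $\alpha\in(1,2]$, which is exactly the displayed expression~\eqref{eq:eps-rand}.

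The $n$-copy claim~\eqref{eq:eps-rand-rates} then follows by applying the exact same argument to the source state $\rho_A^{\otimes n}$ and dividing through by $n$. The only facts needed for this are the additivity $H_\beta(\rho_A^{\otimes n}) = n\,H_\beta(\rho_A)$ of the (non-conditional) Rényi entropy on tensor powers and the trivial dimensional identity $\log d_{A^n} = n\log d_A$; together they turn $\log(\epsilon)$ into $\log(\epsilon)/n$ inside the minimum. I do not foresee any real obstacle here: the genuinely hard content — namely characterizing the optimal post-measurement conditional Rényi entropy in the fully general adversarial model with entanglement on source, device and adversary — has already been done in Section~\ref{sec:intrinsic-rand-povm}, so Theorem~\ref{thm:application} is essentially a clean corollary. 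The one step that requires a moment of care is verifying that the Dupuis prefactor really is at most $1$ across the whole interval $\alpha\in(1,2]$, but this follows immediately from $2/\alpha - 2 \leq 0$ and is the only reason the range of $\alpha$ in the minimization is cut off at $2$.
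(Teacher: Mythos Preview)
Your proposal is correct and follows essentially the same route as the paper's proof: invoke Theorem~\ref{thm:int-rand-povm-sw} to get a POVM with $\swHu{\alpha}(X|E)$ within $\delta$ of $2\log(d_A)-H_{\frac{\alpha}{2\alpha-1}}(A)$, plug into Lemma~\ref{lem:dupuis}, bound the prefactor by $1$, solve for the output length, let $\delta\to 0$, optimize over $\alpha$, and for the $n$-copy statement use additivity on tensor powers (the paper phrases this as $\maxintrinsicpovm{\HH}{\rho_A^{\otimes n}} = n\,\maxintrinsicpovm{\HH}{\rho_A}$, which is the same thing). One small correction to your closing remark: the restriction $\alpha\le 2$ comes from the hypothesis of Lemma~\ref{lem:dupuis} itself, not from the prefactor bound $2^{2/\alpha-2}\le 1$, which in fact holds for all $\alpha\ge 1$.
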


\begin{remark}[Extraction with projective measurements]
	The above theorem could also be stated with the restriction of extracting randomness using projective measurements. In that case one just replaces $2 \log(d_A)$ with $\log(d_A)$. In particular we find that using projective measurements we can extract up to $l$-bits of $\epsilon$-secure randomness for any $l\in\NN$ satisfying  
	\begin{equation}
		l \leq \log(d_A) - \min_{\alpha \in (1,2]} \left(H_{\frac{\alpha}{2\alpha-1}}(A) - \frac{\alpha}{\alpha-1} \log(\epsilon) \right)\,.
	\end{equation}
\end{remark}

The statement in Theorem~\ref{thm:application} regarding extracting randomness from $n$ copies shows that we can approach the asymptotic rates of randomness extraction as $n \to \infty$. In particular, taking $\alpha = 1 + \frac{1}{\sqrt{n}}$ we see that for any $\epsilon > 0$ we have
\begin{equation}
	\frac{l_n}{n} < 2\log(d_A) - \swHu{\frac{1+\sqrt{n}}{2+\sqrt{n}}}(A) + \frac{1+\sqrt{n}}{n} \log(\epsilon) 
\end{equation}
and the right-hand-side tends to $\maxintrinsicpovm{H}{\rho_A} = 2\log(d_A) - H(A)$ as $n \to \infty$. Note this also gives an operational meaning to the bounds obtained in~\cite{MCSWFSA23}, they can be seen as quantifying the maximal amount of randomness that can be extracted from a quantum state using projective measurements in the asymptotic limit. 

\begin{remark}[Maximal randomness is not necessarily uniform]
	Suppose you want to extract as much $\epsilon$-secure randomness from $\rho_A$ as possible. Thus you find the $\alpha$ that maximizes~\eqref{eq:eps-rand-rates} and then you look for an extremal rank-one POVM $\{M_x\}_x$ that maximizes $\intrinsicpovm{\swHu{\alpha}}{\rho_A}{\{M_x\}_x}$. It is worth noting that the best choice of POVM may not be a one that produces a uniform distribution when measured on $\rho_A$ (or even close to uniform). Instead, as stated in Theorem~\ref{thm:int-rand-povm-sw}, the optimal measurement choice should produce a uniform distribution when measured on the state $\rho_A^{\alpha/(2\alpha-1)}/\tr{\rho_A^{\alpha/(2\alpha-1)}}$ and this does not imply that the measurement statistics you obtain for $\rho_A$ are uniform. Example~\ref{ex:opt-qubit} provides explicit instances of this, i.e., there exist extremal rank-one POVMs that produce uniform randomness on $\rho_A$ but which do not achieve $\maxintrinsicpovm{\HH}{\rho_A}$ whereas the POVM that achieves $\maxintrinsicpovm{\HH}{\rho_A}$ does not produce uniform randomness. 
\end{remark}

\begin{figure}
	\definecolor{mycolor2}{rgb}{0.00000,0.44700,0.74100}%
	\definecolor{mycolor4}{rgb}{0.85000,0.32500,0.09800}%
	\definecolor{mycolor3}{rgb}{0.92900,0.69400,0.12500}%
	\definecolor{mycolor1}{rgb}{0.49400,0.18400,0.55600}%
	\centering
	\begin{tikzpicture}[scale = 0.9]	
		\begin{axis}[%
			width=4in,
			height=2.8in,
			scale only axis,
			xmin=1e2,
			xmax=1e6,
			ymin=1.5,
			ymax=1.87,
			grid=major,
			xmode=log,
			ytick = {1.5, 1.55, 1.6, 1.65, 1.7, 1.75,1.8, 1.85},
			xtick = {1e2, 1e3, 1e4, 1e5, 1e6},
			yticklabels={1.5, 1.55, 1.6, 1.65, 1.7, 1.75,1.8, 1.85},
			axis background/.style={fill=white},
			xlabel={Number of samples ($n$)},
			ylabel={$\epsilon$-secure randomness rate (bits/$n$)},
			legend style={at={(0.44,0.99)},legend cell align=left, align=left, draw=white!15!black}
			]
			\addplot[color=black, line width=1.5pt, dashed] table[col sep=comma] {./data/ratesAsym.csv};
			\addlegendentry{Asymptotic rate};
			\addplot[color=mycolor1, line width=1.5pt] table[col sep=comma] {./data/ratesUopt4.csv};
			\addlegendentry{\eqref{eq:eps-rand-rates} with $\epsilon = 10^{-4}$};
			\addplot[color=mycolor1, line width=1.5pt, dotted, forget plot] table[col sep=comma] {./data/ratesDopt4.csv};
			\addplot[color=mycolor2, line width=1.5pt] table[col sep=comma] {./data/ratesUopt12.csv};
			\addlegendentry{\eqref{eq:eps-rand-rates} with $\epsilon = 10^{-12}$};
			\addplot[color=mycolor2, line width=1.5pt, dotted, forget plot] table[col sep=comma] {./data/ratesDopt12.csv};
			\addplot[color=mycolor4, line width=1.5pt] table[col sep=comma] {./data/ratesUopt20.csv};
			\addlegendentry{\eqref{eq:eps-rand-rates} with $\epsilon = 10^{-20}$};
			\addplot[color=mycolor4, line width=1.5pt, dotted, forget plot] table[col sep=comma] {./data/ratesDopt20.csv};
		\end{axis}
		\node[] at (7.5,1.75) {$\rho_A = \frac17 \begin{pmatrix}
				3 & \tfrac{1}{\sqrt{2}} & \tfrac{1}{\sqrt{2}} \\ \tfrac{1}{\sqrt{2}} & 2 & 1 \\ \tfrac{1}{\sqrt{2}} & 1 & 2
			\end{pmatrix}$};
	\end{tikzpicture}%
	\caption{A plot of the rate of $\epsilon$-secure randomness extractable from a memoryless source producing a state $\rho_A$ for various values of $\epsilon$ using~\eqref{eq:eps-rand-rates}. The dotted lines represent a lower bound on the rate of $\epsilon$-secure randomness extractable.}
	\label{fig:res}
\end{figure}

In Figure~\ref{fig:res} we exhibit the speed of convergence to the asymptotic rates of randomness extraction. We consider the state 
\begin{equation}
\rho_A = \frac17 \begin{pmatrix}
	3 & \tfrac{1}{\sqrt{2}} & \tfrac{1}{\sqrt{2}} \\ \tfrac{1}{\sqrt{2}} & 2 & 1 \\ \tfrac{1}{\sqrt{2}} & 1 & 2
\end{pmatrix}
\end{equation}
which has a spectrum $\{4/7,2/7,1/7\}$. We plot the rate of $\epsilon$-secure extractable randomness for $n$-uses of the source, for various values of $\epsilon$. This corresponds to computing~\eqref{eq:eps-rand-rates} by optimizing over $\alpha \in (1,2]$. For comparison, we also plot the following lower bound on the $\epsilon$-secure extractable randomness rates 
\begin{equation}\label{eq:eps-rand-rates-lb}
	2\log(d_A) -  \min_{\alpha \in (1,2]} \left(H_{\frac{1}{\alpha}}(A) - \frac{\alpha}{n(\alpha-1)} \log(\epsilon) \right)
\end{equation}
which corresponds to applying~\cite{D21} with $\swHd{\alpha}(X|E)$. This captures the loss in rates that occurs when we use $\swHd{\alpha}$ instead of $\swHu{\alpha}$, which in practical security proofs is a common technique due to the difficulty 
of computing~$\swHu{\alpha}$~\cite{HB24}.

In the plot, for each value of $\epsilon$ considered, we observe the convergence to the asymptotic rate as $n \to \infty$ (including for the lower bounds). In fact for $n>10^4$ we see that there is little difference between~\eqref{eq:eps-rand-rates} and its lower bound~\eqref{eq:eps-rand-rates-lb}, as for this value of $n$ we have $\alpha \approx 1$ which implies that $\swHu{\alpha} \approx \swHd{\alpha}$. However, for smaller $n$, which is also of practical relevance, we see a significant difference between using $\swHu{\alpha}$ or $\swHd{\alpha}$. This highlights the importance of developing general methods to compute and optimize $\swHu{\alpha}(X|E)$ for the purpose of practical and tight finite-size security proofs. We also note that for small $n$, at some point the optimal $\alpha$ hits the barrier of $\alpha = 2$ in the minimization, in the plot this is evident with the kinks in the various curves. This hints at the potential for a possible improvement in the result of~\cite{D21} which would allow for the optimization to be over all $\alpha > 1$. 

\section{Conclusion}

In this work we derived simple to compute expressions for the quantity of $\epsilon$-secure randomness that is extractable from a given quantum state using both projective measurements and POVMs. Our adversarial model was general allowing for the adversary to have quantum side-information on both the source and the measurement device. Overall this provides a simple metric for the usefulness of a given source for QRNG protocols and acts as an easy to compute benchmarking tool for their performance and security proofs.

To derive these results we defined the maximal intrinsic $\HH$-randomness of a quantum state, generalizing the intrinsic randomness problem found in~\cite{MCSWFSA23}. We were able to derive closed form solutions to this problem for all quantum states and the entire families of operationally relevant sandwiched Rényi entropies. However our proofs of these results raise some interesting questions. In particular, our proof of the maximal randomness extractable using POVMs relies on a limiting argument. That is, we prove the existence of a sequence of POVMs that approach the maximal intrinsic $\HH$-randomness of the given state. However, the limiting POVM in our construction does not achieve the maximal intrinsic $\HH$-randomness rate, demonstrating a discontinuity in the intrinsic randomness measure. It would be interesting to understand if the maximal intrinsic randomness is always achievable with some POVM. In particular, we showed that they are achievable for all qubit states, however for states of higher dimensions we do not have explicit constructions.

Additionally, the discontinuity encountered in the limiting argument prompts a pertinent practical question. Do there exist POVMs that can produce high rates of extractable randomness in a robust manner? In particular, physical implementations will naturally implement a noisy version of the chosen POVM. Due to the existence of discontinuities, it is unclear a priori if a noisy version of a given POVM will still have a high rate of extractable randomness. Searching for robust POVM constructions that enable the extraction of large amounts of randomness would be of significant interest for optimizing the design of practical QRNGs.

Finally, we note that our scheme to extract the secure randomness from a quantum state involves Alice first measuring her quantum system and then applying a randomness extraction protocol. We note that a more direct method has been considered in~\cite{BFW13} wherein so-called quantum-to-classical randomness extractors are introduced. This can be seen as merging of the two steps in our scheme into a single step. In particular, in the scheme of~\cite{BFW13} a family of quantum to classical channels $\cE_i : L(A) \to L(X)$ is defined such that $\cE_i$ is selected randomly from the family and applied to the source $A$. This can be physically more difficult to implement and hence it is not the approach taken by most QRNGs. However, it is possible that higher rates of extraction could be achieved under this more general scheme. We leave the investigation of quantum to classical randomness extractors to future work.  

\begin{acknowledgements}
	The authors thank Fionnuala Curran for pointing out an error in a previous version of the manuscript. The authors acknowledge funding from FranceQCI, funded by the Digital Europe program,  DIGITAL-2021-QCI-01, project no. 101091675. P.B. acknowledges funding from the European Union’s Horizon Europe research and innovation programme under the project “Quantum Secure Networks Partnership” (QSNP, grant agreement No. 101114043).
\end{acknowledgements}

\bibliographystyle{quantum}
\bibliography{tradeoff.bib}

\begin{thebibliography}{10}

\bibitem{qrng-review}
Miguel Herrero-Collantes and Juan~Carlos Garcia-Escartin.
\newblock ``Quantum random number generators''.
\newblock \href{https://dx.doi.org/10.1103/RevModPhys.89.015004}{Rev. Mod.
  Phys. {\bf 89}, 015004}~(2017).
\newblock  \href{http://arxiv.org/abs/1604.03304}{arXiv:1604.03304}.

\bibitem{BB84}
Charles~H. Bennett and Gilles Brassard.
\newblock ``Quantum cryptography: Public key distribution and coin tossing''.
\newblock In Proceedings of IEEE International Conference on Computers,
  Systems, and Signal Processing.
\newblock \href{https://dx.doi.org/10.1016/j.tcs.2014.05.025}{Pages 175--179}.
\newblock IEEENew York~(1984).
\newblock  \href{http://arxiv.org/abs/2003.06557}{arXiv:2003.06557}.

\bibitem{PR22}
Christopher Portmann and Renato Renner.
\newblock ``Security in quantum cryptography''.
\newblock \href{https://dx.doi.org/10.1103/revmodphys.94.025008}{Reviews of
  Modern Physics{\bf 94}}~(2022).
\newblock  \href{http://arxiv.org/abs/2102.00021}{arXiv:2102.00021}.

\bibitem{DFR}
Frédéric Dupuis, Omar Fawzi, and Renato Renner.
\newblock ``Entropy accumulation''.
\newblock \href{https://dx.doi.org/10.1007/s00220-020-03839-5}{Communications
  in Mathematical Physics {\bf 379}, 867–913}~(2020).
\newblock  \href{http://arxiv.org/abs/1607.01796}{arXiv:1607.01796}.

\bibitem{MFSR22}
Tony Metger, Omar Fawzi, David Sutter, and Renato Renner.
\newblock ``Generalised entropy accumulation''.
\newblock In 2022 IEEE 63rd Annual Symposium on Foundations of Computer Science
  (FOCS).
\newblock \href{https://dx.doi.org/10.1109/FOCS54457.2022.00085}{Pages
  844--850}.
\newblock IEEE~(2022).
\newblock  \href{http://arxiv.org/abs/2203.04989}{arXiv:2203.04989}.

\bibitem{DW05}
Igor Devetak and Andreas Winter.
\newblock ``Distillation of secret key and entanglement from quantum states''.
\newblock \href{https://dx.doi.org/10.1098/rspa.2004.1372}{Proceedings of the
  Royal Society A: Mathematical, Physical and Engineering Sciences {\bf 461},
  207–235}~(2005).
\newblock
  \href{http://arxiv.org/abs/quant-ph/0306078}{arXiv:quant-ph/0306078}.

\bibitem{ARV}
Rotem Arnon-Friedman, Renato Renner, and Thomas Vidick.
\newblock ``Simple and tight device-independent security proofs''.
\newblock \href{https://dx.doi.org/10.1137/18m1174726}{SIAM Journal on
  Computing {\bf 48}, 181–225}~(2019).
\newblock  \href{http://arxiv.org/abs/1607.01797}{arXiv:1607.01797}.

\bibitem{PABGMS}
Stefano Pironio, Antonio Acín, Nicolas Brunner, Nicolas Gisin, Serge Massar,
  and Valerio Scarani.
\newblock ``Device-independent quantum key distribution secure against
  collective attacks''.
\newblock \href{https://dx.doi.org/10.1088/1367-2630/11/4/045021}{New Journal
  of Physics {\bf 11}, 045021}~(2009).
\newblock  \href{http://arxiv.org/abs/0903.4460}{arXiv:0903.4460}.

\bibitem{CML16}
Patrick~J. Coles, Eric~M. Metodiev, and Norbert L{\"u}tkenhaus.
\newblock ``Numerical approach for unstructured quantum key distribution''.
\newblock \href{https://dx.doi.org/10.1038/ncomms11712}{Nature Communications
  {\bf 7}, 11712}~(2016).
\newblock  \href{http://arxiv.org/abs/1510.01294}{arXiv:1510.01294}.

\bibitem{HWCGP17}
Thomas Van~Himbeeck, Erik Woodhead, Nicolas~J. Cerf, Raúl García-Patrón, and
  Stefano Pironio.
\newblock ``Semi-device-independent framework based on natural physical
  assumptions''.
\newblock \href{https://dx.doi.org/10.22331/q-2017-11-18-33}{Quantum {\bf 1},
  33}~(2017).
\newblock  \href{http://arxiv.org/abs/1612.06828}{arXiv:1612.06828}.

\bibitem{FRT13}
Daniela Frauchiger, Renato Renner, and Matthias Troyer.
\newblock ``True randomness from realistic quantum devices''~(2013).
\newblock  \href{http://arxiv.org/abs/1311.4547}{arXiv:1311.4547}.

\bibitem{SSA23}
Gabriel Senno, Thomas Strohm, and Antonio Acín.
\newblock ``Quantifying the intrinsic randomness of quantum measurements''.
\newblock \href{https://dx.doi.org/10.1103/physrevlett.131.130202}{Physical
  Review Letters{\bf 131}}~(2023).
\newblock  \href{http://arxiv.org/abs/2211.03581}{arXiv:2211.03581}.

\bibitem{DCZM23}
Hao Dai, Boyang Chen, Xingjian Zhang, and Xiongfeng Ma.
\newblock ``Intrinsic randomness under general quantum measurements''.
\newblock \href{https://dx.doi.org/10.1103/PhysRevResearch.5.033081}{Phys. Rev.
  Res. {\bf 5}, 033081}~(2023).
\newblock  \href{http://arxiv.org/abs/2203.08624}{arXiv:2203.08624}.

\bibitem{MDSFT}
Martin Müller-Lennert, Frédéric Dupuis, Oleg Szehr, Serge Fehr, and Marco
  Tomamichel.
\newblock ``On quantum {R}\'enyi entropies: A new generalization and some
  properties''.
\newblock \href{https://dx.doi.org/10.1063/1.4838856}{Journal of Mathematical
  Physics{\bf 54}}~(2013).
\newblock  \href{http://arxiv.org/abs/1306.3142}{arXiv:1306.3142}.

\bibitem{WWY13}
Mark~M. Wilde, Andreas Winter, and Dong Yang.
\newblock ``Strong converse for the classical capacity of entanglement-breaking
  and {H}adamard channels via a sandwiched r{\'e}nyi relative entropy''.
\newblock \href{https://dx.doi.org/10.1007/s00220-014-2122-x}{Communications in
  Mathematical Physics {\bf 331}, 593–622}~(2014).
\newblock  \href{http://arxiv.org/abs/1306.1586}{arXiv:1306.1586}.

\bibitem{MCSWFSA23}
Shuyang Meng, Fionnuala Curran, Gabriel Senno, Victoria~J. Wright, M\'at\'e
  Farkas, Valerio Scarani, and Antonio Ac\'{\i}n.
\newblock ``Maximal intrinsic randomness of a quantum state''.
\newblock \href{https://dx.doi.org/10.1103/PhysRevA.110.L010403}{Phys. Rev. A
  {\bf 110}, L010403}~(2024).
\newblock  \href{http://arxiv.org/abs/2307.15708}{arXiv:2307.15708}.

\bibitem{D21}
Frédéric Dupuis.
\newblock ``Privacy amplification and decoupling without smoothing''.
\newblock \href{https://dx.doi.org/10.1109/TIT.2023.3301812}{IEEE Transactions
  on Information Theory {\bf 69}, 7784--7792}~(2023).
\newblock  \href{http://arxiv.org/abs/2105.05342}{arXiv:2105.05342}.

\bibitem{DPP05}
Giacomo~Mauro D’Ariano, Paoloplacido~Lo Presti, and Paolo Perinotti.
\newblock ``Classical randomness in quantum measurements''.
\newblock \href{https://dx.doi.org/10.1088/0305-4470/38/26/010}{Journal of
  Physics A: Mathematical and General {\bf 38}, 5979–5991}~(2005).
\newblock
  \href{http://arxiv.org/abs/quant-ph/0408115}{arXiv:quant-ph/0408115}.

\bibitem{HB24}
Thomas Van~Himbeeck and Peter Brown.
\newblock ``A tight and general finite-size security proof for quantum key
  distribution''.
\newblock In preparation.

\bibitem{BFW13}
Mario Berta, Omar Fawzi, and Stephanie Wehner.
\newblock ``Quantum to classical randomness extractors''.
\newblock \href{https://dx.doi.org/10.1109/TIT.2013.2291780}{IEEE Transactions
  on Information Theory {\bf 60}, 1168--1192}~(2013).
\newblock  \href{http://arxiv.org/abs/1111.2026}{arXiv:1111.2026}.

\bibitem{TBH14}
Marco Tomamichel, Mario Berta, and Masahito Hayashi.
\newblock ``Relating different quantum generalizations of the conditional
  {R}{\'e}nyi entropy''.
\newblock \href{https://dx.doi.org/10.1063/1.4892761}{Journal of Mathematical
  Physics{\bf 55}}~(2014).
\newblock  \href{http://arxiv.org/abs/1311.3887}{arXiv:1311.3887}.

\bibitem{Petz}
Dénes Petz.
\newblock ``Quasi-entropies for finite quantum systems''.
\newblock
  \href{https://dx.doi.org/https://doi.org/10.1016/0034-4877(86)90067-4}{Reports
  on Mathematical Physics {\bf 23}, 57--65}~(1986).

\bibitem{M15}
Keiji Matsumoto.
\newblock ``A new quantum version of f-divergence''.
\newblock In Masanao Ozawa, Jeremy Butterfield, Hans Halvorson, Mikl{\'o}s
  R{\'e}dei, Yuichiro Kitajima, and Francesco Buscemi, editors, Reality and
  Measurement in Algebraic Quantum Theory.
\newblock \href{https://dx.doi.org/10.1007/978-981-13-2487-1_10}{Pages
  229--273}.
\newblock Singapore~(2018). Springer Singapore.
\newblock  \href{http://arxiv.org/abs/1311.4722}{arXiv:1311.4722}.

\bibitem{BCGM23}
Andreas Bluhm, Angela Capel, Paul Gondolf, and Tim Möbus.
\newblock ``Unified framework for continuity of sandwiched {R}ényi
  divergences''~(2023).
\newblock  \href{http://arxiv.org/abs/2308.12425}{arXiv:2308.12425}.

\bibitem{Tomamichel_book}
Marco Tomamichel.
\newblock ``Quantum information processing with finite resources''.
\newblock \href{https://dx.doi.org/10.1007/978-3-319-21891-5}{Springer
  International Publishing}. ~(2016).
\newblock  \href{http://arxiv.org/abs/1504.00233}{arXiv:1504.00233}.

\bibitem{watrous}
John Watrous.
\newblock ``The theory of quantum information''.
\newblock \href{https://dx.doi.org/10.1017/9781316848142}{Cambridge University
  Press}. ~(2018).

\end{thebibliography}

\appendix

\section{Proof of Theorem~\ref{thm:int-rand-proj-sw}}\label{app:sw}
In this section we prove the closed form expressions for $\maxintrinsicpvm{\swHu{\alpha}}{\rho_A}$ and $\maxintrinsicpvm{\swHd{\alpha}}{\rho_A}$. We begin with the following lemma. 

\begin{lemma}\label{lem:div-expr}
    Let $\alpha \in [1/2,1) \cup (1, \infty)$ and let $\beta > 0$. Let $\rho_{AE}$ be a pure state, $\{M_x\}_x$ be a rank-one measurement and define the cq-state $\rho_{XE} = \sum_x \outer{x} \otimes \rho_{E,x}$ where $\rho_{E,x} = \ptr{A}{(M_x \otimes \id_E) \rho_{AE}}$. Then
    \begin{equation}
    	\swD{\alpha}\left(\rho_{XE}\middle\|\id_A\otimes\frac{\rho_E^\beta}{\tr{\rho_E^\beta}}\right) =
    	\frac{1}{\alpha-1} \log \left( \frac{\sum_x  \tr{\rho_A^{\frac{\beta(1-\alpha)}{\alpha} + 1} M_x }^\alpha}{\tr{\rho_A^\beta}^{1-\alpha}} \right)\,.
    \end{equation}
\end{lemma}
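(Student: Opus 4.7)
The plan is to expand the sandwiched Rényi divergence directly from its definition, exploit the classical–quantum structure to turn the trace of the $\alpha$-power into a sum over outcomes, and then use the fact that $\rho_{AE}$ is pure together with the rank-one assumption on $\{M_x\}_x$ to collapse each summand to a single eigenvalue raised to the $\alpha$-th power.

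First I would apply the definition of $\swD{\alpha}$ with second argument $\id_X\otimes\sigma_E$, where $\sigma_E:=\rho_E^{\beta}/\tr{\rho_E^\beta}$. Since $\tr{\rho_{XE}}=1$ the divergence reduces to $\frac{1}{\alpha-1}\log\tr{\bigl((\id_X\otimes\sigma_E^{\gamma})\rho_{XE}(\id_X\otimes\sigma_E^{\gamma})\bigr)^\alpha}$ with $\gamma:=\frac{1-\alpha}{2\alpha}$. Because $\rho_{XE}=\sum_x\outer{x}\otimes\rho_{E,x}$ is block-diagonal in $X$, the sandwiched operator keeps that block structure, so the trace of the $\alpha$-power splits as $\sum_x\tr{(\sigma_E^{\gamma}\rho_{E,x}\sigma_E^{\gamma})^\alpha}$.

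Second, I would invoke the standard purification identity: writing $\ket{\rho}_{AE}=(\sqrt{\rho_A}\otimes\id_E)\sum_i\ket{i}_A\ket{i}_E$ in a Schmidt basis and using $\ptr{A}{(X\otimes\id)\outer{\Omega}}=X^T$, one obtains $\rho_{E,x}=(\sqrt{\rho_A}M_x\sqrt{\rho_A})^T$ and $\rho_E=\rho_A^T$ (with the transpose taken in the Schmidt basis, and everything restricted to $\supp(\rho_A)$ so that fractional/negative powers of $\rho_E$ are well-defined). Using $(A^T)(B^T)(C^T)=(CBA)^T$ and $(X^T)^\alpha=(X^\alpha)^T$, each summand becomes
\begin{equation}
\tr{(\sigma_E^{\gamma}\rho_{E,x}\sigma_E^{\gamma})^\alpha}=\tr{\rho_A^\beta}^{-2\gamma\alpha}\tr{\bigl(\rho_A^{\beta\gamma+1/2}M_x\rho_A^{\beta\gamma+1/2}\bigr)^\alpha}.
\end{equation}

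Third, since $M_x$ is rank one, so is $\rho_A^{\beta\gamma+1/2}M_x\rho_A^{\beta\gamma+1/2}$, and the trace of its $\alpha$-power is just the $\alpha$-th power of its unique nonzero eigenvalue, namely $\tr{\rho_A^{2\beta\gamma+1}M_x}^\alpha$. Substituting $2\beta\gamma+1=\frac{\beta(1-\alpha)}{\alpha}+1$ and $-2\gamma\alpha=\alpha-1=-(1-\alpha)$ yields
\begin{equation}
\sum_x\tr{(\sigma_E^{\gamma}\rho_{E,x}\sigma_E^{\gamma})^\alpha}=\frac{\sum_x\tr{\rho_A^{\frac{\beta(1-\alpha)}{\alpha}+1}M_x}^\alpha}{\tr{\rho_A^\beta}^{1-\alpha}},
\end{equation}
and taking $\frac{1}{\alpha-1}\log$ of both sides delivers the claimed formula.

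The entire argument is essentially a direct calculation, so the ``hard part'' is really just exponent bookkeeping: keeping $\alpha$, $\beta$, and $\gamma$ straight, checking that the $\tr{\rho_A^\beta}$ factors collect into exactly the exponent $1-\alpha$, and handling the edge case where $\rho_A$ is not full rank (done by restricting the purification to $\supp(\rho_A)$, on which $\rho_E$ is invertible and all fractional powers appearing above are unambiguous). No nontrivial inequality or optimization is required.
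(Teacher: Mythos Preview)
Your proof is correct and follows essentially the same route as the paper's: use the block-diagonal (cq) structure to reduce to a sum over $x$, exploit purity together with the rank-one assumption to collapse each term to a scalar raised to the $\alpha$-th power, then do the exponent bookkeeping. The only cosmetic difference is that you package the purification calculation via the transpose identity $\rho_{E,x}=(\sqrt{\rho_A}\,M_x\,\sqrt{\rho_A})^{T}$ and invoke the rank-one property of $M_x$ at the end, whereas the paper writes out the Schmidt decomposition explicitly, observes $\rho_{E,x}=\outer{\psi_x}$ is rank one up front, and then expands $\bra{\psi_x}\sigma_E^{(1-\alpha)/\alpha}\ket{\psi_x}$; the two are equivalent rewritings of the same computation.
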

\begin{proof}
	As $\rho_{XE}$ is a classical-quantum state, for any density matrix $\sigma_E$ satisfying $\supp(\rho_{E,x}) \subseteq \supp(\sigma_E)$ for all $x$ we have 
	\begin{equation}\label{eq:Da-full}
		\swD{\alpha}(\rho_{XE} \| \id_X \otimes \sigma_E) = \frac{1}{\alpha-1} \log \left( \sum_x \tr{ \left(\sigma_E^{\frac{1-\alpha}{2\alpha}} \rho_{E,x} \sigma_E^{\frac{1-\alpha}{2\alpha}} \right)^\alpha} \right) \,.
	\end{equation}
	Now by the Schmidt decomposition we can write $\rho_{AE} = \outer{\psi}$ with $\ket{\psi} = \sum_{i} \sqrt{\lambda_i} \ket{v_i}_A \otimes \ket{w_i}_E$. As $M_x = \outer{m_x}$ is a rank-one measurement operator we see that 
	\begin{equation}
            \label{eq:cq-state-pvm}
		\begin{aligned}
			\rho_{E,x} &= \ptr{A}{(M_x \otimes \id) \rho_{AE}} \\
			&= \sum_{ij} \sqrt{\lambda_i \lambda_j} \inner{m_x}{v_i}\inner{v_j}{m_x} \outer{w_i}{w_j} \\
			&= \outer{\psi_x}_E
		\end{aligned}
	\end{equation}
	where we have defined the (not necessarily normalized) vector $\ket{\psi_x}_E = \sum_i \sqrt{\lambda_i} \inner{m_x}{v_i} \ket{w_i}$. Thus, as $\rho_{E,x}$ is always rank-one, \eqref{eq:Da-full} simplifies to 
	\begin{equation}
	\swD{\alpha}(\rho_{XE} \| \id_X \otimes \sigma_E) = \frac{1}{\alpha-1} \log \left( \sum_x \left(\bra{\psi_x} \sigma_E^{\frac{1-\alpha}{\alpha}} \ket{\psi_x}\right)^\alpha \right) \,.
	\end{equation}
	Now substituting in $\sigma_E = \frac{\rho_E^\beta}{\tr{\rho_E^\beta}}$ the right-hand-side becomes
	\begin{equation}
		\begin{aligned}
			\frac{1}{\alpha-1} \log \left( \frac{\sum_x \left(\bra{\psi_x} \rho_E^{\frac{\beta(1-\alpha)}{\alpha}} \ket{\psi_x}\right)^\alpha}{\tr{\rho_E^\beta}^{1-\alpha}} \right) &= \frac{1}{\alpha-1} \log \left( \frac{\sum_x \left(\bra{\psi_x} \left(\sum_i \lambda_i^{\frac{\beta(1-\alpha)}{\alpha}} \outer{w_i} \right) \ket{\psi_x}\right)^\alpha}{\tr{\rho_E^\beta}^{1-\alpha}} \right) \\
			&= \frac{1}{\alpha-1} \log \left( \frac{\sum_x  \left(\sum_i \lambda_i^{\frac{\beta(1-\alpha)}{\alpha} + 1} |\inner{v_i}{m_x}|^2 \right)^\alpha}{\tr{\rho_E^\beta}^{1-\alpha}} \right) \\
			&= \frac{1}{\alpha-1} \log \left( \frac{\sum_x  \tr{\rho_A^{\frac{\beta(1-\alpha)}{\alpha} + 1} M_x }^\alpha}{\tr{\rho_A^\beta}^{1-\alpha}} \right) \\
		\end{aligned}
	\end{equation}
	where on the final line we have noted that $\tr{\rho_E^\beta} = \tr{\rho_A^\beta}$ as $\rho_{AE}$ is pure. 
\end{proof}

The above lemma allows us to simplify the expressions of the sandwiched-Rényi entropies. 

\begin{lemma}\label{lem:sw-ent-form}
	Let $\rho_{AE}$ be a pure state, $\{M_x\}_x$ be a rank-one measurement and define the cq-state $\rho_{XE} = \sum_x \outer{x} \otimes \rho_{E,x}$ where $\rho_{E,x} = \ptr{A}{(M_x \otimes \id) \rho_{AE}}$. Then for $\alpha \in [1/2,1) \cup (1, \infty)$ we have 
	\begin{equation}
		\swHd{\alpha}(X|E) = H_\alpha(X') - H_{1/\alpha}(A)
	\end{equation} 
	where $X'$ is a random variable with a distribution $p(x) = \tr{\rho_A^{1/\alpha} M_x}/\tr{\rho_A^{1/\alpha}}$. Additionally, for $\alpha > 1$ we have
	\begin{equation}
		\swHu{\alpha}(X|E) = H_\alpha(X'') - H_{\frac{\alpha}{2\alpha-1}}(A)
	\end{equation}
	where $X''$ is a random variable with a distribution $p(x) = \tr{\rho_A^{\tfrac{\alpha}{2\alpha-1}} M_x}/ \tr{\rho_A^{\tfrac{\alpha}{2\alpha-1}}}$.
	\begin{proof}
		Let us begin with $\swHd{\alpha}(X|E)$. Taking $\beta=1$ in~Lemma~\ref{lem:div-expr}, we have
		\begin{equation}
			\begin{aligned}
				\swHd{\alpha}(X|E)_{\rho_{XE}} &= \frac{1}{1-\alpha}\log\left(\sum_x\left(\tr{\rho_A^{\frac1\alpha}M_x}\right)^{\alpha}\right)\\
				&= \frac{1}{1-\alpha}\log\left(\tr{\rho_A^{\frac1\alpha}}^{\alpha}\sum_x\left(\frac{\tr{\rho_A^{\frac1\alpha}M_x}}{\tr{\rho_A^{\frac1\alpha}}}\right)^{\alpha}\right)\\
				&= \frac{\alpha}{1-\alpha}\log\left(\tr{\rho_A^{\frac1\alpha}}\right)+\frac{1}{1-\alpha}\log\left(\sum_x\left(\frac{\tr{\rho_A^{\frac1\alpha}M_x}}{\tr{\rho_A^{\frac1\alpha}}}\right)^{\alpha}\right) \\
				&= - H_{1/\alpha}(A)+ H_{\alpha}(X') 
			\end{aligned}
		\end{equation}
		where on the fourth line we introduced the classical random variable $X'$ with distribution $p(x) = \tr{\rho_A^{1/\alpha}M_x}/\tr{\rho_A^{1/\alpha}}$ and noted that the term was a classical Rényi entropy.
		
		Now we prove the case of $\swHu{\alpha}(X|E)$.  Using \cite[Corollary~4]{TBH14} and the above result for $\swHd{\alpha}(X|E)$, we have for any $\alpha >1$
		\begin{equation}
			\begin{aligned}
				\swHu{\alpha}(X|E)_{\rho_{XE}} &\leq \swHd{2-\frac1\alpha}(X|E)_{\rho_{XE}}\\
				&= H_\alpha(X'')- H_{\frac{\alpha}{2\alpha-1}}(A)\,,
			\end{aligned}
		\end{equation}
		where $X''$ is a classical random variable with a distribution
		\begin{equation}
			p(x) = \tr{\rho_A^{\tfrac{\alpha}{2\alpha-1}} M_x}/ \tr{\rho_A^{\tfrac{\alpha}{2\alpha-1}}}\,.
		\end{equation}
		However, if we apply Lemma~\ref{lem:div-expr} with $\beta = \alpha/(2\alpha-1)$ then we find also the lower bound 
		\begin{equation}
			\begin{aligned}
				\swHu{\alpha}(X|E) &\geq \frac{1}{1-\alpha}
				\log\left(\sum_x\frac{\tr{\rho_A^{\frac{\alpha}{2\alpha-1}}M_x}^{\alpha}}{\tr{\rho_A^{\frac{\alpha}{2\alpha-1}}}^{1-\alpha}}\right) \\
				&= \frac{1}{1-\alpha}
				\log\left(\sum_x\frac{\tr{\rho_A^{\frac{\alpha}{2\alpha-1}}M_x}^{\alpha}}{\tr{\rho_A^{\frac{\alpha}{2\alpha-1}}}^{\alpha}}\right) + \frac{2\alpha -1}{1-\alpha} \log\left( \tr{\rho_A^{\frac{\alpha}{2\alpha-1}}}\right) \\
				&= H_\alpha(X'') - H_{\frac{\alpha}{2\alpha-1}}(A)\,,
			\end{aligned}
		\end{equation}
		which completes the proof.
	\end{proof}
\end{lemma}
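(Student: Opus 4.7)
The plan is to reduce both identities to Lemma~\ref{lem:div-expr}, which already gives a clean closed-form expression for $\swD{\alpha}(\rho_{XE}\|\id_X\otimes\rho_E^\beta/\tr{\rho_E^\beta})$ under exactly the structural hypotheses (pure $\rho_{AE}$ and rank-one $\{M_x\}_x$) that we are given here.

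For the $\swHd{\alpha}$ identity, I would simply note that by definition $\swHd{\alpha}(X|E)=-\swD{\alpha}(\rho_{XE}\|\id_X\otimes\rho_E)$, so the case $\beta=1$ of Lemma~\ref{lem:div-expr} is directly applicable. After substitution, the argument of the logarithm is $\sum_x \tr{\rho_A^{1/\alpha}M_x}^\alpha$. The routine step is to factor out $\tr{\rho_A^{1/\alpha}}^\alpha$ from inside the sum, which leaves behind the classical probability distribution $p(x)=\tr{\rho_A^{1/\alpha}M_x}/\tr{\rho_A^{1/\alpha}}$; the resulting sum $\sum_x p(x)^\alpha$ is precisely $2^{(1-\alpha)H_\alpha(X')}$, and the extracted prefactor rearranges to $-H_{1/\alpha}(A)$. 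This yields the claimed decomposition.

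For the $\swHu{\alpha}$ identity the plan is to sandwich between a lower and an upper bound that both equal $H_\alpha(X'')-H_{\alpha/(2\alpha-1)}(A)$. For the lower bound I would use the variational definition $\swHu{\alpha}(X|E)=\sup_{\sigma_E}-\swD{\alpha}(\rho_{XE}\|\id_X\otimes\sigma_E)$ and simply plug in the guess $\sigma_E=\rho_E^\beta/\tr{\rho_E^\beta}$ with $\beta=\alpha/(2\alpha-1)$; the exponent $\beta(1-\alpha)/\alpha+1$ appearing in Lemma~\ref{lem:div-expr} collapses to $\alpha/(2\alpha-1)$, and the same factoring trick as above converts the expression into $H_\alpha(X'')-H_{\alpha/(2\alpha-1)}(A)$. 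For the matching upper bound I would appeal to the known inequality $\swHu{\alpha}(A|B)\leq \swHd{2-1/\alpha}(A|B)$ from~\cite{TBH14} and then apply the already-proved $\swHd{\alpha}$ case to $\swHd{2-1/\alpha}$; the arithmetic identity $1/(2-1/\alpha)=\alpha/(2\alpha-1)$ makes the two Rényi orders agree, so the sandwich closes.

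The main subtlety is the $\swHu{\alpha}$ case, since in principle the supremum over $\sigma_E$ in the definition of $\swHu{\alpha}$ need not be attained by a power of $\rho_E$, and a direct Lagrange-multiplier verification would be delicate. The plan sidesteps this by using the $\swHu{\alpha}\leq \swHd{2-1/\alpha}$ bound as a black box, so that one only needs Lemma~\ref{lem:div-expr} and the already-established $\swHd{\alpha}$ formula to conclude. The rest is essentially bookkeeping: recognizing classical Rényi entropies in the resulting expressions and checking that the exponents line up.
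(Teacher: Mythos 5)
Your route is the same as the paper's: Lemma~\ref{lem:div-expr} with $\beta=1$ for the $\swHd{\alpha}$ identity, and for $\swHu{\alpha}$ a sandwich between the ansatz $\sigma_E\propto\rho_E^{\alpha/(2\alpha-1)}$ (lower bound) and the inequality $\swHu{\alpha}\leq\swHd{2-1/\alpha}$ of~\cite{TBH14} (upper bound). The $\swHd{\alpha}$ part and the lower bound are fine. The genuine gap is in your closing step, ``the arithmetic identity $1/(2-1/\alpha)=\alpha/(2\alpha-1)$ makes the two R\'enyi orders agree, so the sandwich closes.'' That identity aligns the tilting exponent (so the induced distribution is indeed that of $X''$) and the order of the $A$-entropy, but the \emph{classical} R\'enyi entropy produced by the first part is evaluated at the order of the conditional entropy you apply it to, namely $2-1/\alpha$, not $\alpha$:
\begin{equation}
	\swHd{2-\frac1\alpha}(X|E) \;=\; H_{2-\frac1\alpha}(X'') - H_{\frac{\alpha}{2\alpha-1}}(A)\,.
\end{equation}
Since $2-1/\alpha<\alpha$ for all $\alpha>1$ and R\'enyi entropies are non-increasing in their order, $H_{2-1/\alpha}(X'')\geq H_\alpha(X'')$, with equality only when $X''$ is flat on its support. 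So your sandwich only yields
\begin{equation}
	H_\alpha(X'') - H_{\frac{\alpha}{2\alpha-1}}(A) \;\leq\; \swHu{\alpha}(X|E) \;\leq\; H_{2-\frac1\alpha}(X'') - H_{\frac{\alpha}{2\alpha-1}}(A)
\end{equation}
and does not close for a general rank-one measurement. (The paper's own proof makes exactly the same misstep, asserting $\swHd{2-1/\alpha}(X|E)=H_\alpha(X'')-H_{\alpha/(2\alpha-1)}(A)$, so you have faithfully reproduced its argument, flaw included.)

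The gap is not repairable, because the claimed equality for $\swHu{\alpha}$ is false in general. Take $\rho_A=\lambda\outer{0}+(1-\lambda)\outer{1}$ with $\lambda=0.9$, the computational-basis measurement, and $\alpha=2$. Then $\rho_{XE}$ is a perfectly correlated classical state, and a direct optimization over $\sigma_E$ (the optimum is $\sigma_E=\rho_E$) gives $\swHu{2}(X|E)=0$, whereas
\begin{equation}
	H_2(X'') - H_{2/3}(A) \;=\; -\log\left[\left(\lambda^{4/3}+(1-\lambda)^{4/3}\right)\left(\lambda^{2/3}+(1-\lambda)^{2/3}\right)\right] \;\approx\; -0.07\,,
\end{equation}
which is even negative --- impossible for a conditional entropy of a classical register. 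So the ansatz $\sigma_E\propto\rho_E^{\alpha/(2\alpha-1)}$ is not the optimizer in general, and only the lower bound (plus the order-$(2-1/\alpha)$ upper bound) survives. The reason the paper's downstream results are unharmed is that Lemma~\ref{lem:swHu} and Theorem~\ref{thm:maxint-app} only use this lemma through a supremum over measurements: both sides of the sandwich are maximized exactly when $X''$ is uniform, where they coincide at $\log(d_A)-H_{\frac{\alpha}{2\alpha-1}}(A)$. A correct write-up of your (and the paper's) argument should therefore state the $\swHu{\alpha}$ part as the two-sided bound above rather than as an equality.
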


Using the above lemma we can now find our closed form expressions.
\begin{lemma}\label{lem:swHd}
	Let $\alpha \in [1/2,1) \cup (1, \infty)$ and let $\rho_A \in \D(A)$. Then
   	\begin{equation}
   		\maxintrinsicpvm{\swHd{\alpha}}{\rho_A}=\log(d_A)-H_{1/\alpha}(A)
   	\end{equation}
   	where $d_A$ is the dimension of system $A$. Furthermore, this quantity is achieved if and only if the rank-one PVM \(\left\{M_x\right\}_x\) satisfies
    \begin{equation}
        \tr{\rho_A^{\frac1\alpha}M_x} = \frac{\tr{\rho_A^{\frac1\alpha}}}{d_A}
    \end{equation}
    for all \(x\in\left[d_A\right]\).
\end{lemma}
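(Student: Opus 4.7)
The plan is to combine Lemma~\ref{lem:rank1} (which restricts the supremum to rank-one PVMs) with the closed-form expression for $\swHd{\alpha}(X|E)$ given in Lemma~\ref{lem:sw-ent-form}. By Lemma~\ref{lem:rank1}, we may assume $\{M_x\}_x$ is a rank-one PVM on $A$, so $X$ takes values in $[d_A]$. By Lemma~\ref{lem:sw-ent-form},
\begin{equation}
    \swHd{\alpha}(X|E) = H_\alpha(X') - H_{1/\alpha}(A),
\end{equation}
where $X'$ is a classical random variable on $[d_A]$ with $p(x) = \tr{\rho_A^{1/\alpha} M_x}/\tr{\rho_A^{1/\alpha}}$. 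The term $H_{1/\alpha}(A)$ is independent of $\{M_x\}_x$, so the problem reduces to maximizing the classical Rényi entropy $H_\alpha(X')$ over the distributions attainable by rank-one PVMs.

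Next I would invoke the elementary fact that for any distribution $p$ on $[d_A]$ and any $\alpha \in [\tfrac12,1)\cup(1,\infty)$, $H_\alpha(p) \le \log(d_A)$, with equality iff $p$ is the uniform distribution, which follows from the strict concavity/convexity of $t \mapsto t^\alpha$. This gives the upper bound $\maxintrinsicpvm{\swHd{\alpha}}{\rho_A} \le \log(d_A) - H_{1/\alpha}(A)$, and identifies the unique distribution achieving equality.

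For the achievability direction, the task is to exhibit a rank-one PVM $\{\outer{m_x}\}_{x \in [d_A]}$ whose outcome distribution on the (normalized) state $\rho_A^{1/\alpha}/\tr{\rho_A^{1/\alpha}}$ is uniform, i.e.\ $\braopket{m_x}{\rho_A^{1/\alpha}}{m_x} = \tr{\rho_A^{1/\alpha}}/d_A$. I would construct this explicitly: diagonalize $\rho_A = \sum_i \lambda_i \outer{v_i}$ and take the Fourier-type basis
\begin{equation}
    \ket{m_x} = \frac{1}{\sqrt{d_A}} \sum_{i=0}^{d_A-1} \mathrm{e}^{2\pi \mathrm{i} i x / d_A} \ket{v_i},
\end{equation}
which satisfies $|\inner{v_i}{m_x}|^2 = 1/d_A$ for every $i,x$, so that $\braopket{m_x}{\rho_A^{1/\alpha}}{m_x} = \frac{1}{d_A} \tr{\rho_A^{1/\alpha}}$ as required. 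This proves the matching lower bound and verifies the announced equality condition.

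The proof is mostly mechanical once Lemma~\ref{lem:sw-ent-form} is in hand; the only nontrivial step is producing a PVM that is mutually unbiased with respect to the eigenbasis of $\rho_A$, and the Fourier construction above makes this explicit. The only mild subtlety is the "only if" direction of the characterization: the uniqueness of the uniform distribution as the maximizer of $H_\alpha$ must be invoked, which holds for all $\alpha > 0$ with $\alpha \neq 1$ by strict monotonicity of $t\mapsto t^\alpha$ on $(0,\infty)$.
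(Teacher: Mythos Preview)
Your proposal is correct and follows essentially the same route as the paper: reduce to rank-one PVMs via Lemma~\ref{lem:rank1}, apply Lemma~\ref{lem:sw-ent-form} to isolate $H_\alpha(X')$, and then maximize this classical Rényi entropy by choosing a basis mutually unbiased with the eigenbasis of $\rho_A$. The only difference is cosmetic: you give the explicit Fourier-basis construction and spell out the strict-convexity argument for the ``only if'' direction, whereas the paper simply invokes the existence of a mutually unbiased basis and the fact that $H_\alpha$ is maximal iff the distribution is uniform.
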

\begin{proof}
	By Lemma~\ref{lem:sw-ent-form} and Lemma~\ref{lem:rank1} we know that 
	\begin{equation}
		\maxintrinsicpvm{\swHd{\alpha}}{\rho_A} = \sup_{\mathrm{rank-one~PVMs}} \swHd{\alpha}(X|E) = \sup_{\mathrm{rank-one~PVMs}} H_\alpha(X') - H_{1/\alpha}(A)
	\end{equation}
	 where $X'$ is a random variable with distribution $p(x) = \tr{\rho_A^{1/\alpha} M_x}/\tr{\rho_A^{1/\alpha}}$. Since the only $H_\alpha(X')$ depends on the PVM $\{M_x\}_x$, we look to maximize this term. We know that $H_{\alpha}(X') = \log(d_{A})$ is maximal iff there exists a rank-one projective measurement \(\left\{M_x\right\}_x\) such that
    \begin{equation}
        \frac{\tr{\rho_A^{\frac1\alpha}M_x}}{\tr{\rho_A^{\frac1\alpha}}} = \frac{1}{d_A}
    \end{equation}
    for all \(x\in\left[d_A\right]\). To complete the proof, we note that this condition is achieved by taking $\{M_x\}_x$ to be the projectors onto a basis which is mutually unbiased with the eigenbasis of $\rho_A$.
\end{proof}

We can now use the result for $\maxintrinsicpvm{\swHd{\alpha}}{\rho_A}$ to derive a result for $\maxintrinsicpvm{\swHu{\alpha}}{\rho_A}$.

\begin{lemma}\label{lem:swHu}
	Let $\alpha > 1$ and let $\rho_A \in D(A)$. Then
    \begin{equation}
    	\maxintrinsicpvm{\swHu{\alpha}}{\rho_A}=\log(d_A)-H_{\frac{\alpha}{2\alpha-1}}\left(A\right)
    \end{equation}
    where $d_A$ is the dimension of system $A$. Furthermore, this quantity is achieved if and only if the rank-one measurement \(\left\{M_x\right\}_x\) satisfies
    \begin{equation}
        \tr{\rho_A^{\frac{\alpha}{2\alpha-1}}M_x} = \frac{\tr{\rho_A^{\frac{\alpha}{2\alpha-1}}}}{d_A}
    \end{equation}
    for all \(x\in\left[d_A\right]\).
\end{lemma}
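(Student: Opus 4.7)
The plan mirrors the proof of Lemma~\ref{lem:swHd} almost verbatim, only with the substitution of the exponent $\alpha/(2\alpha-1)$ for $1/\alpha$. First, I would invoke Lemma~\ref{lem:rank1} to restrict the supremum defining $\maxintrinsicpvm{\swHu{\alpha}}{\rho_A}$ to rank-one projective measurements (this requires $\alpha>1$, which is our hypothesis). Then I would apply Lemma~\ref{lem:sw-ent-form} to rewrite, for each such rank-one PVM,
\begin{equation}
    \swHu{\alpha}(X|E) \;=\; H_\alpha(X'') \;-\; H_{\frac{\alpha}{2\alpha-1}}(A),
\end{equation}
where $X''$ is the classical variable with distribution $p(x) = \tr{\rho_A^{\alpha/(2\alpha-1)} M_x}/\tr{\rho_A^{\alpha/(2\alpha-1)}}$.

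Since the term $H_{\frac{\alpha}{2\alpha-1}}(A)$ is a property of the state alone and does not depend on $\{M_x\}_x$, the optimization reduces to maximizing the classical Rényi entropy $H_\alpha(X'')$ over rank-one projective measurements on a $d_A$-dimensional system. The random variable $X''$ ranges over $d_A$ values (one per projector), so $H_\alpha(X'') \leq \log(d_A)$, with equality if and only if $X''$ is uniform, i.e.
\begin{equation}
    \tr{\rho_A^{\frac{\alpha}{2\alpha-1}} M_x} \;=\; \frac{\tr{\rho_A^{\frac{\alpha}{2\alpha-1}}}}{d_A}
\end{equation}
for all $x \in [d_A]$. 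This gives both the upper bound $\maxintrinsicpvm{\swHu{\alpha}}{\rho_A} \leq \log(d_A) - H_{\frac{\alpha}{2\alpha-1}}(A)$ and the stated characterization of optimizers.

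For achievability, I would exhibit a PVM satisfying the uniformity condition. Since $\rho_A^{\alpha/(2\alpha-1)}$ is diagonal in the same basis as $\rho_A$, taking $\{M_x\}_x$ to be the rank-one projectors onto any basis mutually unbiased with the eigenbasis of $\rho_A$ yields $\tr{\rho_A^{\alpha/(2\alpha-1)} M_x} = \tr{\rho_A^{\alpha/(2\alpha-1)}}/d_A$ for every $x$, exactly as in the proof of Lemma~\ref{lem:swHd}. Such mutually unbiased bases exist in every finite dimension (e.g.\ the Fourier basis), so the bound is attained. There is no real obstacle here: Lemma~\ref{lem:sw-ent-form} already performed the non-trivial algebraic reduction by collapsing $\swHu{\alpha}(X|E)$ to a single classical Rényi term plus a state-dependent constant, so the remainder is just the standard fact that the Rényi entropy of a $d$-valued variable is maximized uniquely by the uniform distribution.
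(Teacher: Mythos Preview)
Your proposal is correct and follows essentially the same approach as the paper's own proof: invoke Lemma~\ref{lem:rank1} to reduce to rank-one PVMs, apply Lemma~\ref{lem:sw-ent-form} to write $\swHu{\alpha}(X|E)=H_\alpha(X'')-H_{\frac{\alpha}{2\alpha-1}}(A)$, and then maximize the classical Rényi term by choosing a basis mutually unbiased with the eigenbasis of $\rho_A$. One small remark: the restriction $\alpha>1$ is actually needed for the $\swHu{\alpha}$ part of Lemma~\ref{lem:sw-ent-form}, not for Lemma~\ref{lem:rank1} (which holds for all $\alpha\in[1/2,1)\cup(1,\infty)$).
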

\begin{proof}
    	By the same logic as the previous lemma, we can use Lemma~\ref{lem:sw-ent-form} and Lemma~\ref{lem:rank1} to show that 
    \begin{equation}
    	\maxintrinsicpvm{\swHu{\alpha}}{\rho_A} = \sup_{\mathrm{rank-one~PVMs}} \swHu{\alpha}(X|E) = \sup_{\mathrm{rank-one~PVMs}} H_\alpha(X') - H_{\frac{\alpha}{2\alpha-1}}(A)
    \end{equation}
    where $X'$ is a random variable with distribution $p(x) = \tr{\rho_A^{\frac{\alpha}{2\alpha-1}} M_x}/\tr{\rho_A^{\frac{\alpha}{2\alpha-1}}}$. Since the only $H_\alpha(X')$ depends on the PVM $\{M_x\}_x$, we look to maximize this term. We know that $H_{\alpha}(X') = \log(d_{A})$ is maximal iff there exists a rank-one projective measurement \(\left\{M_x\right\}_x\) such that
    \begin{equation}
    	\frac{\tr{\rho_A^{\frac{\alpha}{2\alpha-1}}M_x}}{\tr{\rho_A^{\frac{\alpha}{2\alpha-1}}}} = \frac{1}{d_A}
    \end{equation}
    for all \(x\in\left[d_A\right]\). To complete the proof, we note that this condition is achieved by taking $\{M_x\}_x$ to be the projectors onto a basis which is mutually unbiased with the eigenbasis of $\rho_A$.
\end{proof}

\begin{remark}[Intrinsic randomness for $H$ and $H_{\min}$]
	The proof of Lemma~\ref{lem:swHu} also extends to the limit $\alpha \to 1$, recovering
	\begin{equation}
		\maxintrinsicpvm{H}{\rho_A} = \log(d_A) - H(A)\,,
	\end{equation}
	and to the limit $\alpha \to \infty$, recovering
	\begin{equation}
		\maxintrinsicpvm{H_{\min}}{\rho_A} = \log(d_A) - H_{1/2}(A)\,,
	\end{equation}
	which were proven in~\cite{MCSWFSA23} using different techniques.
\end{remark}

\section{Proof of Theorem~\ref{thm:int-rand-povm-sw}}\label{app:max-sw-povm}

In this section we will develop all of the necessary results to prove Theorem~\ref{thm:int-rand-povm-sw}. We begin with some preliminary results.

\begin{lemma}[Convexity of exponential intrinsic randomness]\label{lem:convexity}
	Let $\alpha \in [\tfrac12,1) \cup (1,\infty)$, let \(\HH\) be either \(\swHd{\alpha}\) or \(\swHu{\alpha}\) and let $\rho_A \in \D(A)$. Then for any two POVMs $\{M_x\}_x$, $\{N_x\}_x$ and $\lambda \in [0,1]$ we have
	\begin{equation}
		2^{\intrinsicpovm{\HH}{\rho_A}{\{\lambda\,M_x + (1-\lambda)\, N_x\}_x}} \leq \lambda \,2^{\intrinsicpovm{\HH}{\rho_A}{\{M_x\}_x}} + (1-\lambda)\,2^{\intrinsicpovm{\HH}{\rho_A}{\{N_x\}_x}}\,.
	\end{equation}
	\begin{proof}
		Let \(\varepsilon>0\). Let $(\{P_x\}_x, \sigma_B)$ be a dilation for $\intrinsicpovm{\HH}{\rho_A}{\{M_x\}_x}$ and let $(\{Q_x\}_x, \tau_{B'})$ be a dilation for $\intrinsicpovm{\HH}{\rho_A}{\{N_x\}_x}$ satisfying
		\begin{equation}
			\HH(X|E)_{\sigma_{XE}}\leq\intrinsicpovm{\HH}{\rho_A}{\{M_x\}_x}+\varepsilon
		\end{equation}
		and 
		\begin{equation}
			\HH(X|E)_{\tau_{XE}}\leq\intrinsicpovm{\HH}{\rho_A}{\{N_x\}_x}+\varepsilon.
		\end{equation}        
		Note that we may assume \(\dim(B)=\dim\left(B'\right)\) without loss of generality. To see this note that if \(\dim(B)<\dim\left(B'\right)\) then we can embed $B$ into $B'$ using an isometry $V: B \to B'$ which transforms the dilation to $(\{V P_x V^{\dagger}\}_x, V \sigma V^{\dagger})$ where to complete the projective measurement we add a new outcome $P_{\perp} = \id_{AB'} - \id_A \otimes VV^{\dagger}$. However by construction this new measurement operator satisfies $\tr{P_{\perp}(\id_A \otimes V \sigma V^{\dagger})} = 0$. This construction yields a new dilation for $\{M_x\}_x$ but it does not change $\HH(X|E)$ as the post-measurement state remains unchanged. Hence we may assume the dilating systems are the same for both POVMs. 
		
		Hence we have two states $\sigma_{AB}$ and $\tau_{AB}$ such that $\ptr{B}{\sigma_{AB}} = \rho_A$ and $\ptr{B}{\tau_{AB}} = \rho_A$ and  $(\{P_x\}_x, \sigma_B)$ is a dilation of $\{M_x\}_x$ and $(\{Q_x\}_x, \tau_B)$ is a dilation of $\{N_x\}_x$. Now for any $\lambda \in[0,1]$, consider a new state 
		\begin{equation}
			\omega_{ABF} = \lambda\, \sigma_{AB} \otimes \outer{0}_F + (1-\lambda)\,\tau_{AB} \outer{1}_F
		\end{equation}
		and projective measurement on $ABF$ defined as 
		\begin{equation}
			R_x =  P_x \otimes \outer{0}_F + Q_x \otimes \outer{1}_F\,.
		\end{equation}
		Then a direct calculation shows that this pair forms a dilation of the mixed measurement $\{\lambda M_x + (1-\lambda) N_x\}_x$. So consider the state
		\begin{equation}
			\rho_{ABEFF'} = \lambda \sigma_{ABE} \otimes \outer{00}_{FF'} + (1-\lambda) \tau_{ABE} \otimes \outer{11}_{FF'}
		\end{equation}
		where $\sigma_{ABE}$ and $\tau_{ABE}$ are pure and let $\rho_{ABEFF'G}$ be a purification of $\rho_{ABEFF'}$ we assume that $EF'G$ are held by Eve and $BF$ are held by the measurement device. 
		
		Now by definition and strong subadditivity we have
		\begin{equation}
			2^{\intrinsicpovm{\HH}{\rho_A}{\{\lambda\, M_{x} + (1-\lambda)\, N_x \}_x}} \leq 2^{\HH(X|EF'G)_{\rho_{XEF'G}}} \leq 2^{\HH(X|EF')_{\rho_{XEF'}}}.
		\end{equation}
		Thus, as $F'$ is a classical flag system we have from Lemma~\ref{lem:convexitypowerof2}
		\begin{equation}
			\begin{aligned}
				2^{\intrinsicpovm{\HH}{\rho_A}{\{\lambda\, M_{x} + (1-\lambda)\, N_x \}_x}}&\leq \lambda\, 2^{\HH(X|E)_{\sigma_{XE}}} + (1-\lambda)\, 2^{\HH(X|E)_{\tau_{XE}}} \\
				&\leq \lambda\,2^{\intrinsicpovm{\HH}{\rho_A}{\{M_x\}_x} + \varepsilon} + (1-\lambda)\,2^{\intrinsicpovm{\HH}{\rho_A}{\{N_x\}_x}+ \varepsilon}.
			\end{aligned}
		\end{equation}
		The result then follows from the fact that this holds for all $\epsilon > 0$.
	\end{proof}
\end{lemma}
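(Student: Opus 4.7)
The plan is to upper-bound $2^{\intrinsicpovm{\HH}{\rho_A}{\{\lambda M_x + (1-\lambda) N_x\}_x}}$ by exhibiting a single feasible Naimark dilation of the mixed POVM, constructed by gluing together near-optimal dilations of $\{M_x\}_x$ and $\{N_x\}_x$ via a classical flag register on the measurement-device side, and then appealing to a standard classical-flag convexity property for $2^{\HH(X|\cdot)}$. Since $\intrinsicpovm{\HH}{\rho_A}{\{M_x\}_x}$ is defined as an infimum over dilations and purifications, any explicit feasible choice yields an upper bound.

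Concretely, fix $\varepsilon > 0$ and pick near-optimal Naimark dilations $(\{P_x\}_x,\sigma_B,\sigma_{ABE})$ and $(\{Q_x\}_x,\tau_B,\tau_{ABE})$ achieving $\HH(X|E)$ values within $\varepsilon$ of $\intrinsicpovm{\HH}{\rho_A}{\{M_x\}_x}$ and $\intrinsicpovm{\HH}{\rho_A}{\{N_x\}_x}$ respectively. A short padding argument---isometrically embed the smaller probe into the larger one and add the orthogonal complement as an extra outcome that has zero probability under the probe state---lets me assume both dilations use the same probe Hilbert space $B$ without changing the entropies.

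Next, introduce a classical flag $F$ on the measurement-device side and define the combined probe $\omega_{BF}=\lambda\,\sigma_B\otimes\outer{0}_F+(1-\lambda)\,\tau_B\otimes\outer{1}_F$ together with the block-diagonal PVM $R_x=P_x\otimes\outer{0}_F+Q_x\otimes\outer{1}_F$ on $ABF$. A direct check of the consistency condition gives $\ptr{BF}{R_x(\id_A\otimes\omega_{BF})}=\lambda M_x+(1-\lambda)N_x$, so $(\{R_x\}_x,\omega_{BF})$ is a valid dilation of the mixed POVM. For the tripartite extension I take the classical mixture $\omega_{ABEFF'}=\lambda\,\sigma_{ABE}\otimes\outer{00}_{FF'}+(1-\lambda)\,\tau_{ABE}\otimes\outer{11}_{FF'}$ and grant Eve the register $EF'$ (a classical copy of the flag). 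This is a feasible extension of $\rho_A$, and strong subadditivity of $\HH$ ensures that any further purification on Eve's side can only lower $\HH(X|\cdot)$, so $\HH(X|EF')$ on this state already upper-bounds $\intrinsicpovm{\HH}{\rho_A}{\{\lambda M_x+(1-\lambda)N_x\}_x}$.

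The post-measurement cq-state $\omega_{XEF'}$ is then classical over $F'$ with branches $\sigma_{XE}$ (on $F'=0$) and $\tau_{XE}$ (on $F'=1$). The final ingredient is the standard classical-flag convexity inequality $2^{\HH(X|EF')}\leq\lambda\,2^{\HH(X|E)_{\sigma_{XE}}}+(1-\lambda)\,2^{\HH(X|E)_{\tau_{XE}}}$, which holds for both $\HH=\swHu{\alpha}$ and $\HH=\swHd{\alpha}$ whenever $F'$ is classical. Substituting the near-optimality bounds and sending $\varepsilon\to 0$ delivers the claimed convexity. The main obstacle is really just the bookkeeping around the flag register and Eve's purifying systems; the genuinely nontrivial fact, convexity of $2^{\HH(X|\cdot)}$ under a classical flag, is a standard property of the sandwiched Rényi divergences that I would cite as a black box.
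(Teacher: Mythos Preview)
Your proposal is correct and follows essentially the same approach as the paper: choose $\varepsilon$-optimal dilations, pad the probe systems to a common dimension, glue them via a classical flag $F$ with block-diagonal PVM $R_x$, give Eve a copy $F'$ of the flag, invoke strong subadditivity to pass from the full purification down to $\HH(X|EF')$, and finish with the classical-flag convexity of $2^{\HH}$ before sending $\varepsilon\to 0$. The only cosmetic difference is that the paper names the purifying register $G$ explicitly, whereas you absorb this into the remark that ``any further purification on Eve's side can only lower $\HH(X|\cdot)$''; the substance is identical.
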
 

We now show that if the probe state of the dilation $\rho_B$ is not rank-one then we can always find a rank-one probe state that implements the same extremal POVM.  

\begin{lemma}[Adapted from {\cite[Proposition~1]{DCZM23}}]
	\label{lem:mixture-of-probe-strategy}
	Let \(\left\{M_x\right\}_{x}\) be an extremal POVM on \(A\) and let \(\left(\left\{P_x\right\}_x,\sigma_B\right)\) be a corresponding dilation.  Let $\sum_yp_y\,\outer{v_y}_B$ be the spectral decomposition of $\sigma_B$ with $\{\ket{v_y}\}_y$ forming an orthonormal basis. Then for all \(y\) such that $p_y > 0$ we have \(\left(\left\{P_x\right\}_x,\outer{v_y}_B\right)\) is also a dilation of $\{M_x\}_x$.
\end{lemma}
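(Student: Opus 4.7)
The plan is to exploit the extremality of \(\{M_x\}_x\) by exhibiting a convex decomposition of each \(M_x\) into POVMs indexed by the spectral components of \(\sigma_B\), and then concluding that each component must equal \(M_x\).

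First I would use the spectral decomposition \(\sigma_B = \sum_y p_y \outer{v_y}_B\) together with linearity of the partial trace to rewrite the consistency condition as
\begin{equation}
	M_x = \ptr{B}{P_x(\id_A \otimes \sigma_B)} = \sum_y p_y\, \ptr{B}{P_x(\id_A \otimes \outer{v_y}_B)}.
\end{equation}
For each \(y\) with \(p_y > 0\), set \(N_{x,y} \defed \ptr{B}{P_x(\id_A \otimes \outer{v_y}_B)}\). Then \(N_{x,y} \in \Pos(A)\) as the partial trace of a positive operator, and
\begin{equation}
	\sum_x N_{x,y} = \ptr{B}{\Bigl(\sum_x P_x\Bigr)(\id_A \otimes \outer{v_y}_B)} = \ptr{B}{\id_{AB}(\id_A \otimes \outer{v_y}_B)} = \id_A,
\end{equation}
so \(\{N_{x,y}\}_x\) is a POVM on \(A\) for every \(y\) with \(p_y>0\). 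This gives a convex decomposition \(M_x = \sum_{y : p_y>0} p_y\,N_{x,y}\) of \(M_x\) as a mixture of POVMs \(\{N_{x,y}\}_x\).

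Finally, I would invoke extremality of \(\{M_x\}_x\): extremality forbids any non-trivial convex decomposition, so for each \(y\) with \(p_y>0\) the POVM \(\{N_{x,y}\}_x\) must coincide with \(\{M_x\}_x\), i.e.\ \(\ptr{B}{P_x(\id_A \otimes \outer{v_y}_B)} = M_x\) for all \(x\), which is exactly the claim that \((\{P_x\}_x, \outer{v_y}_B)\) is a dilation of \(\{M_x\}_x\). The main (and only real) subtlety is reducing a \(y\)-indexed convex combination to the two-POVM definition of extremality used in the paper; this is handled by grouping terms into the binary mixture \(M_x = p_y\,N_{x,y} + (1-p_y)\,R_{x,y}\) with \(R_{x,y} \defed \frac{1}{1-p_y}\sum_{y' \neq y} p_{y'} N_{x,y'}\), which is itself a POVM, and then applying extremality once per \(y\). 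Everything else is just linearity of the partial trace and the definition of a Naimark dilation.
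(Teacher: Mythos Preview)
Your proof is correct and follows essentially the same route as the paper's: expand the consistency condition via the spectral decomposition of $\sigma_B$, observe that each $\{N_{x,y}\}_x$ is a POVM, and invoke extremality of $\{M_x\}_x$ to force $N_{x,y}=M_x$. Your explicit reduction of the multi-term convex combination to the binary form in the paper's definition of extremality is a point the paper leaves implicit; it is a welcome clarification (just note the trivial case $p_y=1$, where $R_{x,y}$ is undefined but the conclusion is immediate).
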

\begin{proof}
	From the consistency condition for dilations, we have that for all \(x\)
	\begin{equation}
		\begin{aligned}
			M_x&=\ptr{B}{P_x\left(\id_A\otimes\sigma_B\right)}\\
			&=\sum_yp_y\ptr{B}{P_x\left(\id_A\otimes\outer{v_y}_B\right)}.
		\end{aligned}
	\end{equation}
	Let us denote \(N_x^y=\ptr{B}{P_x\left(\id_A\otimes\outer{v_y}_B\right)}\). Note that we have for all \(y\)
	\begin{equation}
		\begin{aligned}
			\sum_{x}N_x^y &= \sum_{x}\ptr{B}{P_x\left(\id_A\otimes\outer{v_y}_B\right)}\\
			&= \ptr{B}{\left(\sum_{x}P_x\right)\left(\id_A\otimes\outer{v_y}_B\right)}\\
			&= \id_A.
		\end{aligned}
	\end{equation}
	Furthermore, by positivity of the partial trace we also have that $N_{x}^y$ is positive semidefinite and therefore it is a POVM on \(A\). In particular, since \(\left\{M_x\right\}_{x}\) is extremal and since we have
	\begin{equation}
		M_x=\sum_yp_yN^y_x
	\end{equation}
	it means that \(N^y_x=M_x\) for all \(y\) such that $p_y > 0$. Thus, for all \(y\), the consistency condition is satisfied for \(\left(\left\{P_x\right\}_x,\outer{v_y}_B\right)\), which means that it is a valid dilation of $\{M_x\}_x$.
\end{proof}

Using this result we can prove Lemma~\ref{lem:separable-probe-state} from the main text which we repeat here for clarity. Loosely, it shows that for extremal rank-one POVMs, the entanglement with the measurement device carries no useful information for Eve.

\begin{lemma}
	\label{lem:separable-probe-state-app}
	Let $\left\{M_x\right\}_x$ be an extremal rank-one POVM on a system $A$, let $\left(\left\{P_x\right\}_x, \rho_B\right)$ be any dilation of $\{M_x\}_x$ and let $\rho_{ABE} \in \D(ABE)$ be any pure state with marginals $\rho_A$ and $\rho_B$. Then the post-measurement state
	\begin{equation}
		\rho_{XE} = \sum_x \outer{x} \otimes \ptr{AB}{(P_x \otimes \id_E) \rho_{ABE}}
	\end{equation}
	can be written as 
	\begin{equation}
		\rho_{XE} = \sum_x \outer{x} \otimes \ptr{A}{(M_x \otimes \id_E)\rho_{AE}}\,.
	\end{equation}
	\begin{proof}
		Consider the spectral decomposition $\rho_B = \sum_k \lambda_k \outer{\tau_k}$. As $\{\ket{\tau_k}\}_k$ is an orthonormal basis, we can write
		\begin{equation}
			P_x = \sum_{ijkl} p_{ijkl} \outer{i}{j}_A \otimes \outer{\tau_k}{\tau_l}_B\,.
		\end{equation}
		for some coefficients $p_{ijkl} \in \CC$. By Lemma~\ref{lem:mixture-of-probe-strategy} we know that $\left(\left\{P_x\right\}_x,\outer{\tau_k}\right)$ is also a dilation for $\{M_x\}_x$ whenever $\lambda_k >0$ and so
		\begin{equation}
			M_x = (\id_A \otimes \bra{\tau_k}) P_x (\id_A \otimes \ket{\tau_k}) = \sum_{ij} p_{ijkk} \outer{i}{j},
		\end{equation}
		for any $k$ such that $\lambda_k$ > 0. However, $M_x = m_x \outer{v}$ is also rank-one, thus we can extend the vector $\ket{v}$ to an orthonormal basis $\{\ket{v_i}\}_i$ for $A$ (where we define $\ket{v_0} = \ket{v}$). This means that we can write
		\begin{equation}
			P_x = m_x \outer{v_0} \otimes \sum_{k:\lambda_k > 0} \outer{\tau_k} + \sum_{ijkl: k \neq l, \lambda_k, \lambda_l > 0} q_{ijkl} \outer{v_i}{v_j} \otimes \outer{\tau_k}{\tau_l} + R
		\end{equation}
		for some coefficients $q_{ijkl} \in \CC$ and $R$ is a matrix such that $(\id \otimes \bra{\tau_k})R (\id \otimes \ket{\tau_l}) = 0$ whenever $\lambda_k >0$ and $\lambda_l > 0$. We will show that if $i \neq 0$ and $j\neq 0$ then $q_{ijkl} = 0$. Suppose not and without loss of generality take $i=0$ and $j \neq 0$ (a similar argument holds for other cases) then consider the vector $\ket{w} = \alpha \ket{v_0}\otimes \ket{\tau_k} + \beta \ket{v_j} \otimes \ket{\tau_l}$ where $\lambda_k,\lambda_l > 0$, then as $P_x \succeq 0$ we should have
		\begin{equation}
			0 \leq \bra{w} P_x \ket{w} = |\alpha|^2 m_x + 2 \Re[\overline{\alpha} \beta q_{0jkl}] 
		\end{equation}
		it is clear that there is a choice of $\beta$ for which the RHS is negative if $q_{0jkl} \neq 0$. Thus we must have $q_{0jkl}=0$. Overall this implies that 
		\begin{equation}
			P_x = M_x \otimes \sum_{k,l, \lambda_k,\lambda_l >0} c_{klx} \outer{\tau_k}{\tau_l} + R
		\end{equation}
		for some coefficients $c_{klx} \in \CC$ which may a priori depend on $x$. We can now go one step further and show that $c_{klx} = 0$ when $k \neq l$. Note that $\sum_x P_x = \id$, thus if $k \neq l$ we have 
		\begin{equation}
			\begin{aligned}
				0 &= \sum_x (\id_A \otimes \bra{\tau_k}) P_x (\id_A \otimes \ket{\tau_l}) \\
				&= \sum_{x} c_{klx} M_x
			\end{aligned}
		\end{equation}
		but as $\{M_x\}_x$ is a rank-one extremal POVM, its elements are all linearly independent and hence we must have $c_{klx} = 0$ for all $k \neq l$. This further simplifies $P_x$ to
		\begin{equation}
			P_x = M_x \otimes \Pi_{\sigma} + R
		\end{equation}
		where $\Pi_\sigma = \sum_{k: \lambda_k > 0} \outer{\tau_k}$ is the projector onto the support of $\sigma$. 
		
		Finally, consider 
		\begin{equation}
			\begin{aligned}
				\ptr{AB}{(P_x \otimes \id_E) \rho_{ABE} } &= \ptr{AB}{((M_x \otimes\Pi_{\sigma} + R)\otimes \id_E) \rho_{ABE} } \\
				&= \ptr{AB}{(M_x \otimes\Pi_{\sigma}\otimes \id_E) \rho_{ABE} } \\
				&= \ptr{AB}{(M_x \otimes\id_B\otimes \id_E) \rho_{ABE} } \\
				&= \ptr{A}{M_x \rho_{AE}}\,.
			\end{aligned}
		\end{equation}
	\end{proof}
\end{lemma}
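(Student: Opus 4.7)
The plan is to exploit extremality of $\{M_x\}_x$ to pin down the structure of the Naimark operators $\{P_x\}_x$ in the eigenbasis of $\rho_B$, and then use purity of $\rho_{ABE}$ to kill any contributions living outside the support of $\rho_B$. I would proceed in three stages.

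First, I would diagonalize $\rho_B = \sum_k p_k\outer{\tau_k}$ and expand each $P_x$ in the $\{\ket{\tau_k}\}$ basis as $P_x = \sum_{k,l}(P_x)_{kl}\otimes \outer{\tau_k}{\tau_l}$ with $(P_x)_{kl}\in L(A)$. Invoking Lemma~\ref{lem:mixture-of-probe-strategy}, every pair $(\{P_x\}_x,\outer{\tau_k}_B)$ with $p_k>0$ is itself a valid dilation of $\{M_x\}_x$, which forces the diagonal blocks $(P_x)_{kk}=M_x$. Contracting the completeness relation $\sum_x P_x = \id_{AB}$ against $\ket{\tau_k}\bra{\tau_l}$ further yields $\sum_x (P_x)_{kl}=\delta_{kl}\id_A$.

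Second, I would show that the off-diagonal blocks within the support of $\rho_B$ vanish identically. Writing $M_x = m_x\outer{v_x}$ (rank-one), the restriction of $P_x\succeq 0$ to the $B$-subspace spanned by $\{\ket{\tau_k},\ket{\tau_l}\}$, for $k\neq l$ with $p_k,p_l>0$, is a positive semidefinite block matrix
\[
\begin{pmatrix} M_x & (P_x)_{kl}\\ (P_x)_{kl}^{\dagger} & M_x\end{pmatrix}.
\]
By the standard characterization of positive semidefinite block matrices with rank-one diagonal entries (equivalently, by testing on vectors of the form $\alpha\ket{v_x}\ket{\tau_k}+\beta\ket{w}\ket{\tau_l}$ with $\ket{w}\perp\ket{v_x}$), this forces $(P_x)_{kl}$ to be a scalar multiple of $M_x$, say $(P_x)_{kl}=c_{klx}\,M_x$. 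Combined with $\sum_x(P_x)_{kl}=0$, this gives $\sum_x c_{klx}M_x=0$, and since an extremal rank-one POVM has linearly independent elements (a well-known consequence of extremality, cf.~Section~\ref{sec:prelims}), every $c_{klx}=0$. Hence we may decompose $P_x = M_x\otimes \Pi_\sigma + R_x$ where $\Pi_\sigma=\sum_{k\,:\,p_k>0}\outer{\tau_k}$ and $R_x$ has $B$-support orthogonal to $\Pi_\sigma$ on at least one side, i.e.\ $(\id_A\otimes\Pi_\sigma)R_x(\id_A\otimes\Pi_\sigma)=0$.

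Finally, I would conclude using purity of $\rho_{ABE}$. Because $\ptr{AE}{\rho_{ABE}}=\rho_B$ and $\rho_{ABE}$ is pure, the state vector lies in $A\otimes \supp(\rho_B)\otimes E$, so $(\id_A\otimes \Pi_\sigma\otimes \id_E)\,\rho_{ABE}\,(\id_A\otimes \Pi_\sigma\otimes \id_E)=\rho_{ABE}$; this annihilates the $R_x$ contribution inside the partial trace and leaves
\[
\ptr{AB}{(P_x\otimes \id_E)\rho_{ABE}} = \ptr{AB}{(M_x\otimes \Pi_\sigma\otimes \id_E)\rho_{ABE}} = \ptr{A}{(M_x\otimes \id_E)\rho_{AE}}.
\]
The main obstacle I anticipate is the positivity argument in the second step: deducing the rank-one shape of the off-diagonal blocks $(P_x)_{kl}$ from the global positivity of $P_x$ and then cleanly invoking linear independence of extremal rank-one POVMs to kill the surviving scalars. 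Once this structure theorem for $P_x$ is in hand, the rest is routine bookkeeping.
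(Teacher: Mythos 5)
Your proposal is correct and follows essentially the same route as the paper's proof: the diagonal blocks are pinned to $M_x$ via Lemma~\ref{lem:mixture-of-probe-strategy}, the off-diagonal blocks within $\supp(\rho_B)$ are forced to be scalar multiples of $M_x$ by positivity and then eliminated using completeness together with linear independence of extremal rank-one POVM elements, and the residual term is annihilated in the partial trace because $\rho_{ABE}$ is supported on $A\otimes\supp(\rho_B)\otimes E$. The only cosmetic difference is that you phrase the positivity step via the standard characterization of positive semidefinite $2\times 2$ block matrices rather than the paper's explicit test-vector computation, which is an equivalent argument.
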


By Lemma~\ref{lem:extremal_povm_is_optimal} we know that 
\begin{equation}
	\maxintrinsicpovm{\HH}{\rho_A} = \sup_{\substack{\mathrm{Extremal~rank-one}\\ \{M_x\}_x}} \intrinsicpovm{\HH}{\rho_A}{\{M_x\}_x}\,.
\end{equation}
By the previous lemma we can further simplify this to
\begin{equation}\label{eq:maxintpovm-reduced}
	\begin{aligned}
		\maxintrinsicpovm{\HH}{\rho_A} = \sup& \quad \HH(X|E) \\
		\mathrm{s.t.}& \quad \rho_{XE} = \sum_x \outer{x} \otimes \ptr{A}{(M_x \otimes \id) \outer{\rho}_{AE}} \\
		& \quad \{M_x\}_x \text{ is an extremal rank-one POVM}
	\end{aligned}
\end{equation}
The following theorem establishes a closed form expression for~\eqref{eq:maxintpovm-reduced}.

\begin{theorem}\label{thm:maxint-app}
		For $\alpha \in [\tfrac12,1) \cup (1,\infty)$ we have
		\begin{equation}
			\maxintrinsicpovm{\swHd{\alpha}}{\rho_A} = 2 \log(d_A) - H_{1/\alpha}(A)\,.
		\end{equation}
		Furthermore, this quantity is achieved by an extremal rank-one POVM $\{M_x\}_x$ if and only if
		\begin{equation}
			\tr{\rho_A^{\frac1\alpha}M_x}=\frac{\tr{\rho_A^{\frac1\alpha}}}{d_A^2}\,
		\end{equation}
		for all $x \in [d_A^2]$. Similarly, for $\alpha > 1$ we have
		\begin{equation}
			\maxintrinsicpovm{\swHu{\alpha}}{\rho_A} = 2 \log(d_A) - H_{\frac{\alpha}{2\alpha-1}}(A)
		\end{equation}
		with this quantity being achieved by an extremal rank-one POVM $\{M_x\}_x$ if and only if
		\begin{equation}
			\tr{\rho_A^{\frac{\alpha}{2\alpha-1}}M_x}=\frac{\tr{\rho_A^{\frac{\alpha}{2\alpha-1}}}}{d_A^2}\,
		\end{equation}
		for all $x \in [d_A^2]$.
	\begin{proof}
		Let \(\HH\) be either \(\swHd{\alpha}\) or \(\swHu{\alpha}\) for \(\alpha\) in the respective ranges as defined in the statement of the theorem. By the previous arguments we have that 
		\begin{equation}
		\maxintrinsicpovm{\HH}{\rho_A} = \sup_{\substack{\mathrm{Extremal~rank-one}\\ \{M_x\}_x}} \HH(X|E)\,.
		\end{equation}
		As the cq-state $\rho_{XE}$ is a generated by a rank-one measurement $\{M_x\}_x$ on a pure state, we can apply Lemma~\ref{lem:sw-ent-form} to get, for $\alpha \in (1/2,1)\cup(1,\infty)$,
		\begin{equation}
			\swHd{\alpha}(X|E) = H(X') - H_{1/\alpha}(A) \qquad \text{with} \qquad p(x') = \frac{\tr{\rho_A^{1/\alpha} M_x}}{\tr{\rho_A^{1/\alpha}}} 
		\end{equation} 
		and for $\alpha > 1$
		\begin{equation}
			\swHu{\alpha}(X|E) = H(X'') - H_{\frac{\alpha}{2\alpha-1}}(A) \qquad \text{with} \qquad p(x'') = \tr{\rho_A^{\tfrac{\alpha}{2\alpha-1}} M_x}/ \tr{\rho_A^{\tfrac{\alpha}{2\alpha-1}}}\,.
		\end{equation}
		Thus the remainder of the proof entails maximizing $H(X')$ and $H(X'')$ over the set of extremal rank-one POVMs. In particular we look for an extremal rank-one  POVM which generates a uniform distribution for $X'$ and $X''$ which would establish that $H(X')= H(X'') = 2 \log(d_A)$.
		
		Consider for the moment, the $d_A^2$-outcome rank-one POVM defined by
		\begin{equation}\label{eq:uniform-povm}
			M_{d_A x+y}=\frac{1}{d_A^2}\,\outer{y}
		\end{equation}
		for $x,y\in [d_A]$, where $\{\ket{y}\}_y$ is an orthonormal basis that is mutually unbiased with the eigenbasis of $\rho_A$. This measurement is comprised of $d_A$ copies of the $\{\ket{y}\}_y$ basis measurement. Evidently, this measurement isn't extremal, but it is rank-one and it achieves the desired uniform distribution. By Lemma~\ref{lem:density-rank-one-povm}, we know that for any rank-one POVM $\{M_x\}_x$ with no more than $d_A^2$ outcomes we can always find an extremal rank-one POVM $\{N_x\}_x$ that is $\epsilon$-close to $\{M_x\}_x$ for any $\epsilon > 0$. Thus, by continuity of the mappings $\rho_{A} \to X'$, $\rho_A \to X''$ and $X \to H_\alpha(X)$ we can find a sequence of extremal rank-one POVMs $\{N_x^{(k)}\}_x$ that tends to~\eqref{eq:uniform-povm} and which establishes that
		\begin{equation}
			\sup_{\substack{\mathrm{Extremal~rank-one}\\ \{M_x\}_x}} \swHu{\alpha}(X|E) = 2 \log(d_A) - H_{\frac{\alpha}{2\alpha-1}}(A)
		\end{equation}
		and 
		\begin{equation}
			\sup_{\substack{\mathrm{Extremal~rank-one}\\ \{M_x\}_x}} \swHd{\alpha}(X|E) = 2 \log(d_A) - H_{1/\alpha}(A)\,.
		\end{equation}
	\end{proof}
\end{theorem}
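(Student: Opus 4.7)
The plan is to package the structural reductions already established (Lemmas~\ref{lem:extremal_povm_is_optimal} and \ref{lem:separable-probe-state}) with the entropy identities of Lemma~\ref{lem:sw-ent-form} and then collapse the remaining problem to a classical Rényi-entropy maximization on the probability simplex. First, Lemma~\ref{lem:extremal_povm_is_optimal} lets us restrict the supremum defining $\maxintrinsicpovm{\HH}{\rho_A}$ to extremal rank-one POVMs. For any such POVM, Lemma~\ref{lem:separable-probe-state} shows that the infimum over Naimark dilations is trivial: the post-measurement state may be written as
\begin{equation}
\rho_{XE}=\sum_x\outer{x}\otimes\ptr{A}{(M_x\otimes\id_E)\outer{\rho}_{AE}},
\end{equation}
with $\ket{\rho}_{AE}$ any purification of $\rho_A$. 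Hence I am reduced to optimizing $\HH(X|E)$ of this cq-state over extremal rank-one POVMs only.

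Next, I would invoke Lemma~\ref{lem:sw-ent-form} to rewrite the objective in an explicit separated form: for $\alpha > 1$,
\begin{equation}
\swHu{\alpha}(X|E)=H_\alpha(X'')-H_{\frac{\alpha}{2\alpha-1}}(A),\qquad p(x'')=\frac{\tr{\rho_A^{\alpha/(2\alpha-1)}M_x}}{\tr{\rho_A^{\alpha/(2\alpha-1)}}},
\end{equation}
and analogously $\swHd{\alpha}(X|E)=H_\alpha(X')-H_{1/\alpha}(A)$ with $p(x')\propto\tr{\rho_A^{1/\alpha}M_x}$. The only $\{M_x\}_x$-dependent term is now a classical Rényi entropy of a distribution supported on at most $d_A^2$ outcomes (the maximal cardinality of an extremal POVM by~\cite{DPP05}), so $H_\alpha\leq 2\log(d_A)$ with equality exactly when the distribution is uniform. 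This at once yields the upper bounds $2\log(d_A)-H_{1/\alpha}(A)$ and $2\log(d_A)-H_{\alpha/(2\alpha-1)}(A)$ and the equality characterizations stated in the theorem.

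For achievability I would exhibit a rank-one POVM saturating uniformity: take $M_{d_A x+y}=\tfrac{1}{d_A^2}\outer{y}$ for $x,y\in[d_A]$, where $\{\ket{y}\}_y$ is an orthonormal basis mutually unbiased with the eigenbasis of $\rho_A$ (equivalently, of $\rho_A^\beta$ for every $\beta>0$). By construction the resulting $p(x'),p(x'')$ are uniform on $[d_A^2]$, so the Rényi entropy equals $2\log(d_A)$. However, this POVM is \emph{not} extremal (it is $d_A$ copies of a PVM), so it lies outside the admissible set. The plan is then to invoke a density result for extremal rank-one POVMs (Lemma~\ref{lem:density-rank-one-povm}) to produce a sequence $\{N_x^{(k)}\}_x$ of extremal rank-one POVMs converging to the uniform POVM above, and to conclude by joint continuity of the maps $\{M_x\}_x\mapsto p(x')$ and $p\mapsto H_\alpha(p)$ that the supremum is attained in the limit.

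The main obstacle is this last step: as emphasized in the main text, $\intrinsicpovm{\HH}{\rho_A}{\cdot}$ is discontinuous on the boundary of the extremal set, so the maximum is generically not attained and the statement must be phrased as a supremum. The key technical ingredient is therefore the density of extremal rank-one POVMs (with outcome bound $d_A^2$) inside the set of rank-one POVMs, which is what lets the continuity argument go through and converts the non-extremal construction into the desired achievability.
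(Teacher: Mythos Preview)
Your proposal is correct and follows essentially the same route as the paper: reduce to extremal rank-one POVMs via Lemmas~\ref{lem:extremal_povm_is_optimal} and~\ref{lem:separable-probe-state}, apply Lemma~\ref{lem:sw-ent-form} to separate off a classical Rényi term, bound it by $2\log(d_A)$ using the $d_A^2$-outcome cap on extremal POVMs, and achieve the supremum by approximating the non-extremal ``$d_A$ copies of a MUB'' POVM with extremal ones via Lemma~\ref{lem:density-rank-one-povm} and continuity. If anything, your write-up is slightly more explicit than the paper's in spelling out why the upper bound $H_\alpha\leq 2\log(d_A)$ holds and in flagging that the supremum need not be attained.
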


 \section{Extension to other conditional entropies}
 The maximal intrinsic randomness problem formulated in Section~\ref{sec:intrinsic-rand-proj} and Section~\ref{sec:intrinsic-rand-povm} can be expressed for any conditional entropy. The choice of sandwiched Rényi entropies is natural due to their operational connections to randomness extraction which we demonstrate an application of in Section~\ref{sec:application}. Nevertheless there is nothing to stop one attempting to solve it for other conditional entropies such as the Petz-Rényi conditional entropies~\cite{Petz} or the Geometric entropies~\cite{M15}.
 
 However, despite the apparent simpler appearance of these entropies, the problem appears to be significantly more difficult. In particular, it is not a priori clear how to perform a reduction to rank-one measurements, which was a crucial step in our proof, due to the fact that these entropies do not satisfy a result akin to Lemma~\ref{lem:classical-info-ent}. As a consequence they do not agree with our intuitions regarding which measurements should produce maximal randomness. For example take the Petz-Rényi entropy
 \begin{equation}
 	\pHd{\alpha}(X|E)_{\rho_{XE}} = \frac{1}{1-\alpha} \log \tr{\rho_{XE}^{\alpha} \rho_E^{1-\alpha}}\,,
 \end{equation}
 which we define for $\alpha \in (1,2]$. Consider the setting of projective measurements only, we may assume that measuring in a basis mutually unbiased with the eigenbasis of $\rho_A$ would produce maximal randomness as it always produces a uniform distribution. However, for example let $\alpha = 3/2$ and consider the qutrit state $\rho = \tfrac14 \outer{0} + \tfrac34 \outer{1}$, then the unbiased basis measurement will give $\pHd{3/2}(X|E) = \log(3) - 2 \log((1+\sqrt{3})/2) \approx 0.685$. Whereas if we take the measurement defined by the orthonormal basis 
 \begin{equation}
 	\ket{m_0} = \ket{0} \qquad \ket{m_1} = \frac{\ket{1} + \ket{2}}{\sqrt{2}} \qquad \ket{m_2} = \frac{\ket{1} - \ket{2}}{\sqrt{2}}
 \end{equation}
 then we find $\pHd{3/2}(X|E) = 2 \log(\tfrac{4 \sqrt{2}}{\sqrt{2}+3}) \approx 0.716$. This shows that solving even $\maxintrinsicpvm{\HH}{\rho_A}$ may be very difficult for these other conditional entropies.

 \section{Additional lemmas}

\begin{lemma}\label{lem:classical-info-ent}
	Let $\rho_{XYE}$ be a state classical on $XY$. Then for $\alpha \in [1/2,1) \cup (1, \infty)$ we have 
	\begin{equation}
		\begin{aligned}
			\swHu{\alpha}(XY|E) &\geq \swHu{\alpha}(X|E) \\
			\swHd{\alpha}(XY|E) &\geq \swHd{\alpha}(X|E)\,.
		\end{aligned}
	\end{equation}
	\begin{proof}
		Writing $\rho_{XYE} = \sum_{xy} \outer{xy} \otimes \rho_{E,xy}$, if we take any density matrix $\sigma_E$ such that for all \(x,y\) we have $\supp(\rho_{E,xy}) \subseteq \sigma_E$, then
		\begin{equation}
			\swD{\alpha}(\rho_{XYE}\|\id_{XY} \otimes \sigma_E) = \frac{1}{\alpha-1} \log\left(\sum_{xy} \swQ{\alpha}(\rho_{E,xy} \| \sigma_E)\right)\,
		\end{equation}
		where $\swQ{\alpha}(\rho\|\sigma)\defed \tr{\left(\sigma^{\frac{1-\alpha}{2\alpha}} \rho \sigma^{\frac{1-\alpha}{2\alpha}}\right)^\alpha}$. For $\alpha > 1$, $\swQ{\alpha}$ is superadditive in the first argument~\cite{BCGM23}, in the sense that 
		\begin{equation}
			\swQ{\alpha}(\rho_0+\rho_1\|\sigma) \geq \swQ{\alpha}(\rho_0\|\sigma) + \swQ{\alpha}(\rho_1 \|\sigma)
		\end{equation}
		for any $\rho_0, \rho_1 \succeq 0$. Note for $\alpha \in [1/2,1)$ the inequality is reversed and $\swQ{\alpha}$ is subadditive. Thus for $\alpha>1$, using the monotonicity of the logarithm we have
		\begin{equation}
		\begin{aligned}
				\swD{\alpha}(\rho_{XYE}\|\id_{XY} \otimes \sigma_E) &\leq \frac{1}{\alpha-1}  \log\left(\sum_{x} \swQ{\alpha}\left( \sum_y \rho_{E,xy} \bigg\| \sigma_E\right)\right) \\
				&= \swD{\alpha}(\rho_{XE} \| \id_X \otimes \sigma_E)\,.
		\end{aligned}
		\end{equation}
		Taking $\sigma_E = \rho_E$ this immediately implies $\swHd{\alpha}(XY|E)  \geq  \swHd{\alpha}(X|E)$. For the optimized variant we have
		\begin{equation}
			\begin{aligned}
				\swHu{\alpha}(X|E) &= \sup_{\substack{\sigma_E \\ \sum_y\rho_{E,xy} \ll \sigma_E}} -\swD{\alpha}(\rho_{XE} \| \id_X \otimes \sigma_E) \\
				&\leq \sup_{\substack{\sigma_E \\ \sum_y\rho_{E,xy} \ll \sigma_E}} -\swD{\alpha}(\rho_{XYE} \| \id_{XY} \otimes \sigma_E) \\
				&\leq \sup_{\substack{\sigma_E \\ \rho_{E,xy} \ll \sigma_E}} -\swD{\alpha}(\rho_{XYE} \| \id_{XY} \otimes \sigma_E) \\
				&= \swHu{\alpha}(XY|E)\,.
			\end{aligned}
		\end{equation}
		where on the third line we have noted that 
		\begin{equation}
			\{\sigma_E \in \D(E) \, \mid \, \sum_y \rho_{E,xy} \ll \sigma_E \quad \forall x\} \subseteq 	\{\sigma_E \in \D(E) \, \mid \, \rho_{E,xy} \ll \sigma_E \quad \forall x,y\}\,.
		\end{equation}
		The proof for $\alpha < 1$ is the same but uses subadditivity of $\swQ{\alpha}$ instead. 
	\end{proof}
\end{lemma}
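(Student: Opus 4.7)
The plan is to use the classical-quantum structure of $\rho_{XYE}$ to rewrite the sandwiched divergence as a simple sum, and then to exploit the (sub/super)additivity properties of the auxiliary function $\swQ{\alpha}(\rho\|\sigma) \defed \tr{(\sigma^{(1-\alpha)/(2\alpha)}\rho\sigma^{(1-\alpha)/(2\alpha)})^{\alpha}}$ in its first argument. Write $\rho_{XYE} = \sum_{xy} \outer{xy} \otimes \rho_{E,xy}$, so that $\rho_{XE} = \sum_x \outer{x} \otimes \sum_y \rho_{E,xy}$. For any $\sigma_E \in \D(E)$ with $\rho_{E,xy} \ll \sigma_E$ for all $x,y$, the cq-structure yields
\begin{equation}
    \swD{\alpha}(\rho_{XYE}\|\id_{XY}\otimes\sigma_E) = \frac{1}{\alpha-1}\log\sum_{xy}\swQ{\alpha}(\rho_{E,xy}\|\sigma_E),
\end{equation}
and similarly for $\rho_{XE}$ with $\sum_y\rho_{E,xy}$ replacing $\rho_{E,xy}$.

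For $\alpha > 1$, I will invoke superadditivity of $\swQ{\alpha}$ in its first argument (as stated for instance in~\cite{BCGM23}) to get $\sum_y \swQ{\alpha}(\rho_{E,xy}\|\sigma_E) \leq \swQ{\alpha}(\sum_y\rho_{E,xy}\|\sigma_E)$. Since $\frac{1}{\alpha-1} > 0$ and $\log$ is monotone, this converts to
\begin{equation}
    \swD{\alpha}(\rho_{XYE}\|\id_{XY}\otimes\sigma_E) \leq \swD{\alpha}(\rho_{XE}\|\id_{X}\otimes\sigma_E).
\end{equation}
For $\alpha \in [1/2,1)$ the inequality for $\swQ{\alpha}$ reverses (subadditivity) but $\frac{1}{\alpha-1} < 0$ also flips the sign, so the same divergence inequality holds in both regimes.

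To conclude the $\swHd{\alpha}$ statement I would simply specialise to $\sigma_E = \rho_E$, which is admissible in both divergences (all $\rho_{E,xy}$ are supported on $\rho_E$). For the $\swHu{\alpha}$ statement I take the supremum over admissible $\sigma_E$. The key observation is that the domination conditions coincide: $\sum_y\rho_{E,xy} \ll \sigma_E$ for all $x$ if and only if $\rho_{E,xy} \ll \sigma_E$ for all $x,y$, because the $\rho_{E,xy}$ are positive semidefinite and the support of a sum of positive operators is the span of the supports of the summands. Hence the supremum defining $\swHu{\alpha}(XY|E)$ is taken over exactly the same set of states $\sigma_E$ as the supremum defining $\swHu{\alpha}(X|E)$, and applying the divergence inequality termwise gives $\swHu{\alpha}(XY|E) \geq \swHu{\alpha}(X|E)$.

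The only delicate step is the (sub/super)additivity of $\swQ{\alpha}$; this is the technical input I will cite rather than re-derive. Once that is in hand, the rest of the argument is a careful bookkeeping of the sign of $\frac{1}{\alpha-1}$ across the two regimes $\alpha > 1$ and $\alpha \in [1/2,1)$, together with the support-coincidence observation that makes the supremum step clean for the $\uparrow$-entropy.
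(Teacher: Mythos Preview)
Your proposal is correct and follows essentially the same approach as the paper: expand the divergence via the cq-structure, invoke (sub/super)additivity of $\swQ{\alpha}$ in its first argument, specialise to $\sigma_E=\rho_E$ for $\swHd{\alpha}$, and take the supremum for $\swHu{\alpha}$. The only minor difference is that you observe the two domination constraints actually coincide as sets, whereas the paper uses only the one-sided inclusion; your observation is correct and makes the $\swHu{\alpha}$ step marginally cleaner, but the argument is otherwise identical.
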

\begin{lemma}
    \label{lem:convexitypowerof2}
    Let be $\alpha \in [\tfrac12,1) \cup (1,\infty)$. Let \(\HH\) be either \(\swHd{\alpha}\) or \(\swHu{\alpha}\). Let $\rho_{XE}\in\D(XE)$, $\sigma_{XE} \in \D(XE)$, $\lambda \in [0,1]$ and define 
    \begin{equation}
    	\rho_{XEF} = \lambda \rho_{XE} \otimes \outer{0}_F + (1-\lambda) \sigma_{XE} \otimes \outer{1}_F\,.
    \end{equation}
    Then
    \begin{equation}
        2^{\HH(X|EF)_{\rho_{XEF}}} \leq \lambda2^{\HH(X|E)_{\rho_{XE}}}+(1-\lambda)2^{\HH(X|E)_{\sigma_{XE}}}.
    \end{equation}
\end{lemma}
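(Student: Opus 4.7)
The plan is to exploit the block-diagonal structure of $\rho_{XEF}$ in the classical $F$-register to decompose $\swQ{\alpha}(\rho_{XEF}\|\id_X\otimes\tau_{EF})$ into a two-term scalar expression, after which the desired inequality reduces to the classical power-mean inequality. I will write the argument uniformly for both $\HH=\swHd{\alpha}$ and $\HH=\swHu{\alpha}$, keeping careful track of sign flips between the two ranges $\alpha\in[1/2,1)$ and $\alpha\in(1,\infty)$.

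First, I would argue that one may restrict the normalizer $\tau_{EF}$ appearing in the definition of $\HH(X|EF)$ to block-diagonal states $\tau_{EF}=\mu_0\tau_E^{(0)}\otimes\outer{0}_F+\mu_1\tau_E^{(1)}\otimes\outer{1}_F$ with $\mu_0+\mu_1=1$ and $\tau_E^{(0)},\tau_E^{(1)}\in\D(E)$. For $\swHd{\alpha}$ this is automatic, since $\tau_{EF}$ is fixed to the marginal $\rho_{EF}=\lambda\rho_E\otimes\outer{0}_F+(1-\lambda)\sigma_E\otimes\outer{1}_F$, which is already block diagonal. For $\swHu{\alpha}$ I would invoke the data-processing inequality for $\swD{\alpha}$ applied to the pinching channel $\mathcal{P}_F$ in the $F$-basis: since $\rho_{XEF}$ is fixed by $\mathcal{P}_F$, replacing any candidate $\tau_{EF}$ by $\mathcal{P}_F(\tau_{EF})$ can only decrease $\swD{\alpha}$, hence can only improve $\HH(X|EF)=-\swD{\alpha}(\rho_{XEF}\|\id_X\otimes\tau_{EF})$.

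Next, using the scaling identity $\swQ{\alpha}(c\rho\|d\sigma)=c^\alpha d^{1-\alpha}\swQ{\alpha}(\rho\|\sigma)$ (immediate from the definition) together with the additivity of $\swQ{\alpha}$ on direct sums, a direct calculation on the block-diagonal data gives
\[
\swQ{\alpha}(\rho_{XEF}\|\id_X\otimes\tau_{EF})=\lambda^\alpha\mu_0^{1-\alpha}\,\swQ{\alpha}(\rho_{XE}\|\id_X\otimes\tau_E^{(0)})+(1-\lambda)^\alpha\mu_1^{1-\alpha}\,\swQ{\alpha}(\sigma_{XE}\|\id_X\otimes\tau_E^{(1)})\,.
\]
Writing $a=2^{\HH(X|E)_{\rho_{XE}}}$, $b=2^{\HH(X|E)_{\sigma_{XE}}}$, $c=2^{\HH(X|EF)_{\rho_{XEF}}}$, the $\swHd{\alpha}$ case (substitute $\mu_0=\lambda$, $\tau_E^{(0)}=\rho_E$, $\tau_E^{(1)}=\sigma_E$) directly yields the identity $c^{1-\alpha}=\lambda a^{1-\alpha}+(1-\lambda)b^{1-\alpha}$. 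For $\swHu{\alpha}$ I would first absorb the optimizations over $\tau_E^{(0)}$ and $\tau_E^{(1)}$ into the factors $a^{1-\alpha}$ and $b^{1-\alpha}$, and then solve the residual one-variable optimization over $\mu_0\in[0,1]$ by a short Lagrange-multiplier calculation. The stationary point works out to $\mu_0^\star\propto\lambda\,a^{(1-\alpha)/\alpha}$ and plugs back to the clean identity
\[
c^{(1-\alpha)/\alpha}=\lambda\,a^{(1-\alpha)/\alpha}+(1-\lambda)\,b^{(1-\alpha)/\alpha}\,,
\]
valid uniformly for $\alpha\in[1/2,1)\cup(1,\infty)$ once the second-derivative sign is checked to confirm that the stationary point realizes a minimum of $\swQ{\alpha}$ when $\alpha>1$ and a maximum when $\alpha\in[1/2,1)$, which is in both cases the extremum relevant for $\HH$.

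To finish, I observe that in either case $c$ is the weighted $r$-power mean $M_r(a,b)$ of $a$ and $b$, with $r=1-\alpha$ for $\swHd{\alpha}$ and $r=(1-\alpha)/\alpha$ for $\swHu{\alpha}$. For every $\alpha\in[1/2,1)\cup(1,\infty)$ one has $r\leq 1$, so the classical power-mean inequality $M_r(a,b)\leq M_1(a,b)=\lambda a+(1-\lambda)b$ gives exactly the claim. The main obstacle is the Lagrangian step for $\swHu{\alpha}$: one must carry it out cleanly, verify that the stationary point is the relevant extremum in both regimes, and confirm that its value indeed repackages as a power mean in $a$ and $b$; the remainder is careful bookkeeping of the signs of $1-\alpha$ and $(1-\alpha)/\alpha$ to ensure the final power-mean inequality points in the intended direction.
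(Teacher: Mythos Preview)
Your argument is correct and arrives at precisely the two identities that the paper's proof simply quotes from \cite[Proposition~5.4]{Tomamichel_book}, namely
\[
2^{(1-\alpha)\swHd{\alpha}(X|EF)}=\lambda\,2^{(1-\alpha)\swHd{\alpha}(X|E)_{\rho_{XE}}}+(1-\lambda)\,2^{(1-\alpha)\swHd{\alpha}(X|E)_{\sigma_{XE}}}
\]
and
\[
2^{\frac{1-\alpha}{\alpha}\swHu{\alpha}(X|EF)}=\lambda\,2^{\frac{1-\alpha}{\alpha}\swHu{\alpha}(X|E)_{\rho_{XE}}}+(1-\lambda)\,2^{\frac{1-\alpha}{\alpha}\swHu{\alpha}(X|E)_{\sigma_{XE}}}\,.
\]
The paper then finishes in one line by Jensen's inequality for the convex maps $x\mapsto x^{1/(1-\alpha)}$ and $x\mapsto x^{\alpha/(1-\alpha)}$, which is exactly your power-mean step $M_r\leq M_1$ for $r\leq 1$. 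So the two proofs are the same in substance; the only difference is that you re-derive the flagged-conditioning identity by hand (pinching to block-diagonal normalizers, direct-sum additivity of $\swQ{\alpha}$, and the Lagrange optimization over $\mu_0$), whereas the paper outsources all of that to Tomamichel's book. Your route is self-contained but longer; the paper's is three lines because the work is hidden in the citation.
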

\begin{proof}
    We have from \cite[Proposition~5.4]{Tomamichel_book}
    \begin{equation}
        2^{(1-\alpha)\swHd{\alpha}(X|EF)}=\lambda\,2^{(1-\alpha)\swHd{\alpha}(X|E)_{\rho_{XE}}}+(1-\lambda)\,2^{(1-\alpha)\swHd{\alpha}(X|E)_{\sigma_{XE}}}\,.
    \end{equation}
    Now, note that \(x\mapsto x^{\frac{1}{1-\alpha}}\) is convex for \(\alpha \in [\tfrac12,1) \cup (1,\infty)\). Thus, by Jensen's inequality we have the desired result for \(\swHd{\alpha}\).

    Similarly, from the same proposition, we have
    \begin{equation}
        2^{\frac{1-\alpha}{\alpha}\swHu{\alpha}(X|EF)}=\lambda\,2^{\frac{1-\alpha}{\alpha}\swHu{\alpha}(X|E)_{\rho_{XE}}}+(1-\lambda)\,2^{\frac{1-\alpha}{\alpha}\swHu{\alpha}(X|E)_{\sigma_{XE}}}\,.
    \end{equation}
    Since \(x\mapsto x^{\frac{\alpha}{1-\alpha}}\) is convex for \(\alpha \in [\tfrac12,1) \cup (1,\infty)\), we once again have the desired result by Jensen's inequality.
\end{proof}

\begin{lemma}\label{lem:extremal-to-rankone}
	Let $\{M_x\}_x$ be an extremal POVM and let 
	\begin{equation}
		M_x=\sum_y\lambda_{x, y}\,\outer{\varphi_{x, y}}\,.
	\end{equation}
	be its spectral decomposition. Then $\{\lambda_{x,y} \outer{\varphi_{x,y}}\}_{xy}$ is an extremal rank-one POVM where the indexes range over $(x,y)$ such that $\lambda_{x,y} > 0$. 
\end{lemma}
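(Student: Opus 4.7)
My plan is to verify the POVM and rank-one conditions directly, and then reduce extremality of the refined collection to extremality of the original via a convex-decomposition argument. Let me write $N_{x,y} \defed \lambda_{x,y} \outer{\varphi_{x,y}}$, where the index pairs $(x,y)$ range over those with $\lambda_{x,y} > 0$. First, $\{N_{x,y}\}_{x,y}$ is plainly a POVM: each $N_{x,y}$ is positive semidefinite, and $\sum_{x,y} N_{x,y} = \sum_x M_x = \id_A$ using the spectral decomposition hypothesis. Each $N_{x,y}$ is rank-one by construction.

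To prove extremality, suppose $\{F_{x,y}\}_{x,y}$ and $\{G_{x,y}\}_{x,y}$ are POVMs and $\mu \in (0,1)$ satisfy $N_{x,y} = \mu F_{x,y} + (1-\mu) G_{x,y}$ for all $(x,y)$. Since $\mu F_{x,y} \preceq N_{x,y}$ and $N_{x,y}$ is rank-one with support $\mathrm{span}(\ket{\varphi_{x,y}})$, any positive semidefinite operator dominated by $N_{x,y}$ must also be supported on $\mathrm{span}(\ket{\varphi_{x,y}})$. Hence $F_{x,y} = f_{x,y} \outer{\varphi_{x,y}}$ and similarly $G_{x,y} = g_{x,y} \outer{\varphi_{x,y}}$ for some scalars $f_{x,y}, g_{x,y} \geq 0$.

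Next, I coarsen back to the original POVM: define $F_x \defed \sum_y F_{x,y}$ and $G_x \defed \sum_y G_{x,y}$. These form POVMs on $A$, and $M_x = \sum_y N_{x,y} = \mu F_x + (1-\mu) G_x$. By extremality of $\{M_x\}_x$, we conclude $F_x = G_x = M_x$ for every $x$. Therefore
\begin{equation}
	\sum_y f_{x,y}\, \outer{\varphi_{x,y}} = \sum_y \lambda_{x,y}\, \outer{\varphi_{x,y}}\,.
\end{equation}
Because the $\{\ket{\varphi_{x,y}}\}_y$ can be chosen orthonormal within each fixed $x$ (as eigenvectors of the Hermitian operator $M_x$), taking matrix elements with $\ket{\varphi_{x,y'}}$ on both sides yields $f_{x,y'} = \lambda_{x,y'}$, and similarly $g_{x,y'} = \lambda_{x,y'}$. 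Thus $F_{x,y} = G_{x,y} = N_{x,y}$, which establishes that $\{N_{x,y}\}_{x,y}$ is extremal.

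The only subtle point is handling possible degeneracy in the spectrum of $M_x$: the eigenvectors corresponding to a repeated eigenvalue are not uniquely determined, but the argument above only uses that the chosen $\{\ket{\varphi_{x,y}}\}_y$ are orthonormal for fixed $x$, which is always possible in a spectral decomposition of a Hermitian operator. Apart from this bookkeeping observation, the proof is essentially a direct unfolding of extremality, with the key structural input being that a rank-one operator forces any convex summand to lie on the same ray.
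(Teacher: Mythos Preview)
Your proof is correct and takes a genuinely different route from the paper's. The paper proves extremality of $\{N_{x,y}\}_{x,y}$ by invoking the characterization that a rank-one POVM is extremal iff its elements are linearly independent \cite[Corollary~2.50]{watrous}, assumes a linear dependence $\sum_{x,y}\alpha_{x,y}N_{x,y}=0$, and then applies a second structural characterization of extremality \cite[Theorem~2.47]{watrous} to the original $\{M_x\}_x$ (namely that any Hermitian perturbation $\{\tau_x\}_x$ with $\sum_x\tau_x=0$ and $\supp(\tau_x)\subseteq\supp(M_x)$ must vanish) to force all $\alpha_{x,y}=0$. You instead work directly from the convex-decomposition definition of extremality on both sides: the key structural input is the elementary observation that any positive summand dominated by a rank-one operator lies on the same ray, which lets you coarse-grain the hypothetical decomposition of $\{N_{x,y}\}$ into a decomposition of $\{M_x\}$ and then read off the coefficients via orthonormality of the eigenvectors within each $x$. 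Your argument is more self-contained (no external characterizations needed) and arguably cleaner; the paper's approach has the mild advantage of establishing linear independence of the $N_{x,y}$ explicitly, a fact used elsewhere in the paper (e.g.\ in the proof of Lemma~\ref{lem:separable-probe-state-app}).
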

\begin{proof}
	It is clear that \(\left\{\lambda_{x, y}\outer{\varphi_{x,y}}\right\}_{x,y}\) is a POVM, since all its elements are positive and we have
	\begin{equation}
		\sum_{x, y}\lambda_{x, y}\,\outer{\varphi_{x, y}}=\sum_xM_x=\id\,.
	\end{equation}
	Thus it is a rank-one POVM. We now show that it inherits extremality from $\{M_x\}_x$. Since a rank-one POVM is extremal if and only if its elements are linearly independent~\cite[Corollary~2.50]{watrous}, let us assume for contradiction that there exist complex coefficients \(\left\{\alpha_{x, y}\right\}_{x, y}\) (some of which are non-zero) such that
	\begin{equation}
		\sum_{x, y}\alpha_{x, y}\lambda_{x, y}\,\outer{\varphi_{x, y}}=0\,,
	\end{equation}
	i.e., the POVM elements $N_{x,y} = \lambda_{x, y} \outer{\varphi_{x, y}}$ are linearly dependent.
	For each $x$, define \(\theta_x\) as
	\begin{equation}
		\theta_x \defed \sum_y\alpha_{x, y}\lambda_{x, y}\,\outer{\varphi_{x, y}}
	\end{equation}
	and define
	\begin{equation}
		\tau_x \defed \theta_x + \theta_x^\dagger\,.
	\end{equation}
	Clearly, for all \(x\), \(\tau_x\) is Hermitian. Furthermore, by assumption we have
	\begin{equation}
		\sum_x\tau_x=0
	\end{equation}
	and moreoever by construction we have for each $x$ that
	\begin{equation}
		\supp\left(\tau_x\right)\subseteq\supp\left(M_x\right)\,.
	\end{equation}
	Since \(\left\{M_x\right\}_x\) is extremal, it follows from~\cite[Theorem~2.47]{watrous} that we must have $\tau_x=0$ for each $x$. Thus, \(\mathrm{i}\theta_x\) is Hermitian and also satisfies the aforementioned condition, which gives us \(\theta_x=0\) for all \(x\), i.e.,
	\begin{equation}
		\sum_y \alpha_{x,y} \lambda_{x, y} \outer{\varphi_{x,y}} = 0\,.
	\end{equation}
	Due to orthogonality, this implies that $\alpha_{x,y} \lambda_{x, y}=0$ for all $x,y$, which implies $\alpha_{x, y} = 0$. This is in contradiction with the original assumption of linear dependence and hence the measurement $\{N_{x,y}\}_{x,y}$ is extremal.
\end{proof}

\begin{lemma}[Density of extremal rank-one POVMs]
    \label{lem:density-rank-one-povm}
     Let $A$ be a quantum system of dimension $d$ and let \(N\leq d^2\). Let \(\left\{\outer{m_x}\right\}_{x}\) be rank-one POVM on $A$ with \(N\) outcomes. Then for all \(\varepsilon>0\), there exists an extremal rank-one POVM \(\left\{\outer{e_x}\right\}_{x}\) with $N$ outcomes such that
     \begin{equation}
         \left\|\outer{m_x}-\outer{e_x}\right\|\leq\varepsilon\,
     \end{equation}
     for all $x$.
 \end{lemma}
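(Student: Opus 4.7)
The plan is to parameterize rank-one $N$-outcome POVMs by the Stiefel manifold of isometries $V: \CC^d \to \CC^N$ via the correspondence $\ket{m_x} = V^\dagger \ket{x}$. Under this parameterization the POVM condition $\sum_x \outer{m_x} = \id_A$ becomes $V^\dagger V = \id_d$, and by \cite[Corollary~2.50]{watrous} extremality of a rank-one POVM is equivalent to linear independence of its elements $\{\outer{m_x}\}_x$ in $\Her(\CC^d)$, a real vector space of dimension $d^2 \geq N$. Hence the condition is at least achievable and one expects it to hold generically.

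To exploit this I would turn extremality into the non-vanishing of a single polynomial by introducing the Gram determinant
\begin{equation}
    f(V) = \det\!\Bigl[\, |\inner{m_x(V)}{m_y(V)}|^2\, \Bigr]_{1 \leq x, y \leq N}\,,
\end{equation}
with $\ket{m_x(V)} = V^\dagger \ket{x}$. This is a real polynomial in the entries of $V$ (computed from $VV^\dagger$) and $f(V) \neq 0$ iff the associated POVM is extremal. Since the Stiefel manifold is a connected real-analytic manifold, once we know $f \not\equiv 0$ the identity principle for real-analytic functions gives that its zero set is nowhere dense, so the isometries yielding extremal POVMs form an open dense subset. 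Density therefore reduces to exhibiting a single $V^*$ with $f(V^*) \neq 0$.

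This existence step is the part I expect to require the most care. For $N = d$ any orthonormal basis gives an extremal rank-one PVM, and for $N = d^2$ any extremal rank-one POVM with $d^2$ outcomes (e.g.\ a SIC-POVM when available, or a construction based on mutually unbiased bases) works. For the intermediate range $d < N < d^2$ I would invoke the existence of a unit-norm tight frame $\{\ket{v_x}\}_{x=1}^N$ in $\CC^d$ in sufficiently generic position that its rank-one projectors are linearly independent in $\Her(\CC^d)$, and then the rescaled POVM $\{(d/N)\outer{v_x}\}_x$ is an extremal rank-one POVM with exactly $N$ outcomes. The existence of extremal POVMs with any number of outcomes between $d$ and $d^2$ is part of the classical analysis in~\cite{DPP05}.

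Given density, the input POVM corresponds to some isometry $V_0$ and for each $\delta > 0$ there exists an isometry $V$ with $\|V - V_0\| \leq \delta$ such that the POVM $\{\outer{e_x}\}_x$ defined by $\ket{e_x} = V^\dagger \ket{x}$ is extremal. The elementary Lipschitz bound
\begin{equation}
    \bigl\| \outer{m_x} - \outer{e_x} \bigr\| \leq (\|\ket{m_x}\| + \|\ket{e_x}\|)\,\bigl\| \ket{m_x} - \ket{e_x} \bigr\| \leq 2\,\|V - V_0\|
\end{equation}
then transfers closeness of the isometries to closeness of the POVM elements, and choosing $\delta \leq \varepsilon/2$ finishes the argument.
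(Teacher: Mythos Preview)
Your approach is correct and takes a genuinely different route from the paper. The paper proceeds by an elementary iterative perturbation: at each step it locates a rank-one projector $\outer{w}$ outside the real span $S$ of the current POVM elements, mixes it into one element via $\ket{\tau_\varepsilon} = \sqrt{1-\varepsilon}\,\ket{m_{N-1}} + \sqrt{\varepsilon}\,\ket{w}$, and then renormalizes by conjugating with $G_\varepsilon^{-1/2}$ where $G_\varepsilon = \outer{\tau_\varepsilon} + \sum_{x<N-1}\outer{m_x}$. This increases $\dim S$ by one while keeping the new POVM arbitrarily close to the old, and iterating reaches linear independence (hence extremality). Your argument instead parameterizes rank-one $N$-outcome POVMs by the complex Stiefel manifold, encodes extremality as the non-vanishing of a single real-analytic Gram determinant, and invokes the identity principle on a connected real-analytic manifold to conclude that extremal POVMs form an open dense subset. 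The trade-off: the paper's proof is fully self-contained and constructive, needing nothing beyond linear algebra; yours is shorter and more conceptual but outsources the one substantive step---existence of \emph{some} extremal rank-one POVM with exactly $N$ outcomes for each $d \leq N \leq d^2$---to the literature. That existence is true and easy (take $N$ generic unit vectors $\ket{v_x}$ whose projectors are linearly independent in $\Her(\CC^d)$, then conjugate by $G^{-1/2}$ with $G = \sum_x \outer{v_x}$; conjugation by an invertible operator preserves linear independence), but you should write out this half-line argument rather than simply pointing to~\cite{DPP05}, which is cited in the paper only for the upper bound $N \leq d^2$.
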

 \begin{proof}
     If $\{\outer{m_x}\}_x$ is extremal then the statement is immediately true. Suppose thus that the measurement is not extremal. This means that its elements are linearly dependent when viewed as elements of the $\RR$-vector space \(\mathrm{Herm}\left(\CC^d\right)\)~\cite[Corollary~2.50]{watrous}. That is, there exist some real coefficients \(\alpha_1,\cdots,\alpha_{N}\) that are not all zero such that
     \begin{equation}
         \sum_{x=0}^{N-1}\alpha_x\,\outer{m_x}=0\,.
     \end{equation}
     Let \(S\) denote the real span of \(\left\{\outer{m_x}\right\}_x\). Since \(\left\{\outer{m_x}\right\}_x\) is linearly dependent, we have \(\dim(S)<N\leq d^2\), with \(d^2\) being the dimension of \(\mathrm{Herm}\left(\CC^d\right)\) viewed as a \(\RR\)-vector space. In particular, this means we can find a subspace \(E\) orthogonal to $S$ with non-zero dimension such that
     \begin{equation}
         \mathrm{Herm}\left(\CC^d\right) = S \oplus E
     \end{equation}
     where the orthogonality is defined using the Hilbert-Schmidt inner product $\langle X, Y\rangle = \tr{X Y}$. Now consider any $X \in E$, by the spectral theorem $X= \sum_i \lambda_i \outer{w_i}$ with $\lambda_i \in \RR$. Thus at least one of the $\outer{w_i} \not\in S$ otherwise $X \in S$. Let us denote by \(\outer{w}\) such a projector and let us decompose it as
     \begin{equation}
         \outer{w} = Y_S + Y_E
     \end{equation}
     with \(Y_S\in S\) and \(Y_E\in E\), by assumption $Y_E \neq 0$. 
     
     Now, without loss of generality, let us assume that \(\alpha_{N-1}\neq0\), which means that \(\outer{m_{N-1}}\in S\). For any \(\varepsilon\in(0, 1)\), let us define
     \begin{equation}
         \ket{\tau_\varepsilon}\defed\sqrt{1-\varepsilon}\,\ket{m_{N-1}} + \sqrt{\varepsilon}\,\ket{w}\,.
     \end{equation}
     We want to show that \(\outer{\tau_\varepsilon}\notin S\), so that we can replace \(\outer{m_{N-1}}\) by \(\outer{\tau_\varepsilon}\) to increase the dimension of the span of the measurement. For the sake of the contradiction, let us assume that \(\outer{\tau_{\varepsilon}}\in S\). We then have
     \begin{equation}
         0=\tr{\outer{\tau_\varepsilon}Y_E}=\varepsilon\tr{Y_E^2}+\sqrt{\varepsilon(1-\varepsilon)}\tr{Y_E\left(\ketbra{w}{m_{N-1}}+\ketbra{m_{N-1}}{w}}\right)\,.
     \end{equation}
     Since \(\tr{Y_E^2}>0\), there is at most a single \(\varepsilon\in(0, 1)\) such that this equation is satisfied. Hence, it is always possible to find an arbitrary small \(\varepsilon\in(0, 1)\) such that \(\outer{\tau_{\varepsilon}}\notin S\). Thus, by replacing \(\outer{m_{N-1}}\) by \(\outer{\tau_\varepsilon}\), the dimension of $S$ increases by 1.

     To make sure the new collection of operators forms a POVM we may need to renormalize. In particular, define
     \begin{equation}
         G_\varepsilon = \outer{\tau_\varepsilon} + \sum_{x=0}^{N-2}\outer{m_x} 
     \end{equation}
     and then consider, for \(x\in[N-1]\)
     \begin{equation}
         \outer{m_x'}\defed G_\varepsilon^{-\frac12}\outer{m_x}G_\varepsilon^{-\frac12}
     \end{equation}
     and \(\outer{m_{N-1}'}=G_\varepsilon^{-\frac12}\outer{\tau_\varepsilon}G_\varepsilon^{-\frac12}\). Such a construction yields a rank-one POVM whose span has dimension \(\dim(S)+1\). We now show that we can make this new POVM arbitrarily close to the original POVM, first note that
     \begin{equation}
         \begin{aligned}
             G_\varepsilon - \id_d &= \outer{\tau_\varepsilon} - \outer{m_{N-1}}\\
             &= \varepsilon\left(\outer{w}-\outer{m_{N-1}}\right)+\sqrt{\varepsilon(1-\varepsilon)}\left(\ketbra{m_{N-1}}{w}+\ketbra{w}{m_{N-1}}\right)
         \end{aligned}
     \end{equation}
     and thus $\lim_{\varepsilon \to 0} G_{\varepsilon} = \id_d$. Now, by continuity of \(M\mapsto M^{-\frac12}\) on some open ball with center \(\id_d\), we also have that \(G_\varepsilon^{-\frac12}\) converges to \(\id_d\) as \(\varepsilon\) goes to 0. In particular, for all \(x \in [N]\), we have that \(G_\varepsilon^{-\frac12}\outer{m_x'}G_\varepsilon^{-\frac12}\) converges to \(\outer{m_x}\) as \(\varepsilon\) goes to 0.

     Thus, we managed to find a rank-one POVM such that the dimension of the span of its elements is strictly larger than the previous one, while being arbitrarily close to the previous one. By iterating this process, we can finally find an extremal rank-one POVM arbitrarily close to the original one.
 \end{proof}
\end{document}